
\documentclass[journal]{IEEEtran}
%

\usepackage{epsfig}
\usepackage{cite}
\usepackage{graphicx}
\graphicspath{{Figures/}}
\usepackage{url}
\usepackage{stfloats}
\usepackage{amsmath}
\usepackage{amsfonts}   
\usepackage{amsopn}     
\interdisplaylinepenalty=2500
\usepackage{psfrag}
\usepackage[tight]{subfigure}
\usepackage{mdwtab}     
\usepackage{mdwlist} 
\usepackage{enumerate}
\usepackage{color}

\usepackage{amsmath}
\usepackage{amsthm}
\usepackage{epsfig}
\usepackage{cite}
\usepackage{graphicx}
\graphicspath{{Figures/}}
\usepackage{subfigure}
\usepackage{url}
\usepackage{stfloats}
\usepackage{amsmath}
\usepackage{amsfonts} 
\usepackage{amsopn}    
\interdisplaylinepenalty=2500
\usepackage{psfrag}
\usepackage{xcolor}
\usepackage{subfigure}
\usepackage{mdwtab} 
\usepackage{epstopdf}
\usepackage[normalem]{ulem}

\newtheorem{theorem}{\bf Theorem}
\newtheorem{rem}{\bf Remark}
\newtheorem{corollary}{\bf Corollary}
\newtheorem{lemma}{\bf Lemma}
\newtheorem{definition}{\bf Definition}
\newtheorem{prop}{\bf Proposition}
\ifCLASSINFOpdf
\else
\fi
\hyphenation{op-tical net-works semi-conduc-tor}

\begin{document}
%
\title{Input-to-State Stability of Periodic Orbits of Systems with Impulse Effects via Poincar\'e Analysis}
%
%
%

\author{Sushant~Veer, Rakesh, and~Ioannis~Poulakakis
        
        
\thanks{S. Veer and I. Poulakakis are with the Department
of Mechanical Engineering and Rakesh is with the Department of Mathematical Sciences, University of Delaware, Newark,
DE, 19716 USA e-mail: {\tt\small \{veer, rakesh, poulakas\}@udel.edu.}}
\thanks{This work is supported in part by NSF CAREER Award IIS-1350721 and by NRI-1327614.}
}

\maketitle

\begin{abstract}
In this paper we investigate the relation between robustness of periodic orbits exhibited by systems with impulse effects and robustness of their corresponding Poincar\'e maps. In particular, we prove that input-to-state stability (ISS) of a periodic orbit under external excitation in both continuous and discrete time is equivalent to ISS of the corresponding 0-input fixed point of the associated \emph{forced} Poincar\'e map. This result extends the classical Poincar\'e analysis for asymptotic stability of periodic solutions to establish orbital input-to-state stability of such solutions under external excitation. In our proof, we define the forced Poincar\'e map, and use it to construct ISS estimates for the periodic orbit in terms of ISS estimates of this map under mild assumptions on the input signals. As a consequence of the availability of these estimates, the equivalence between exponential stability (ES) of the fixed point of the 0-input (unforced) Poincar\'e map and ES of the corresponding orbit is recovered. The results can be applied naturally to study the robustness of periodic orbits of continuous-time systems as well. Although our motivation for extending classical Poincar\'e analysis to address ISS stems from the need to design robust controllers for limit-cycle walking and running robots, the results are applicable to a much broader class of systems that exhibit periodic solutions.  
\end{abstract}

\begin{IEEEkeywords}
Poincar\'e map, systems with impulse effects, limit cycles, input-to-state stability, robustness.
\end{IEEEkeywords}

%
\IEEEpeerreviewmaketitle

\section{Introduction}

\IEEEPARstart{S}{ystems} with impulse effects (SIEs) are characterized by a set of ordinary differential equations (ODEs) and a discrete map that reinitializes the ODEs when the corresponding solution reaches a switching surface, possibly resulting in  discontinuous evolution. These systems arise in a broad range of fields; a non-exhaustive list of examples includes impact mechanics~\cite{bauinov1989systems}, modeling of population dynamics~\cite{ballinger1997permanence}, communication~\cite{yang1997impulsive}, and legged robotics~\cite{westervelt2007feedback}; a  collection of methods for analyzing SIEs can be found in~\cite{haddad2006impulsive}.  

In this paper, we study the stability properties of limit cycles exhibited by SIEs under external excitation. Our interest in this specific class of systems arises from dynamically-stable legged robots, where periodic walking gaits are modeled as limit cycles of SIEs. This approach has been successful in generating asymptotically stable periodic gaits for bipedal robots through a variety of methods, including hybrid zero dynamics~\cite{westervelt2003hybrid, ames2014rapidly}, geometric control~\cite{Gregg2010Geometric}, virtual holonomic constraints~\cite{freidovich2009passive}, to name a few. Recent extensions of these methods resulted in generating continuums of limit-cycle gaits for bipedal walkers~\cite{razavi2016symmetry, veer2017continuum}, and switching among them~\cite{saglam2013switching, motahar2016composing, veer2017supervisory}, to enlarge the behavioral repertoire of these robots in order to accomplish tasks that require adaptability to typical human-centric environments~\cite{veer2017driftless}, and human (or robot) collaborators~\cite{motahar2017steering}. Practical use of these robots demands robustness to external disturbances, which has led many researchers---including the authors of the present paper---to analyze~\cite{veer2015adaptation, veer2016localISS, kolathaya2016parameter} and design~\cite{Nguyen2015RSS, veer2017supervisory, hamed2016exponentially} controllers that enhance the robustness of limit-cycle walking gaits. With this being our motivation, we develop in this paper a framework  for rigorously analyzing the robustness of limit cycles, by relating orbital input-to-state stability (ISS) for hybrid limit cycles of SIEs with the corresponding Poincar\'e map.

The notion of ISS has been widely used to study robustness of equilibrium points in continuous ~\cite{sontag1996new}, discrete~\cite{jiang2001input}, and hybrid~\cite{cai2009characterizations} systems. Intuitively, the solutions  emanating in a neighborhood of an ISS equilibrium point remain bounded when the external inputs are bounded. In addition, when the inputs vanish, these solutions converge back to the equilibrium. Beyond equilibrium points, ISS can be naturally applied to study robustness of zero-invariant sets by considering the point-to-set distance~\cite{sontag1996new,sontag2001iss}. Establishing ISS in this context poses a considerable challenge, which, in the case of SIEs, is exacerbated by the hybrid nature of the system. However, for periodic orbits---such as those of interest in this paper---we show here that the problem can be reduced to studying ISS of an unforced (0-input) fixed point of a discrete dynamical system, thus avoiding direct analysis of hybrid solutions. This discrete system  arises through the Poincar\'e map construction suitably extended to incorporate external inputs, thereby  resulting in the definition of a \emph{forced Poincar\'e map.}

Numerous results exist that analyze forced Poincar\'e maps of systems evolving under the influence of external inputs; in the context of SIEs, examples include~\cite{kolathaya2016parameter, veer2016localISS, martin2017stable}, in which  the input signals are not necessarily periodic. However, the exact relation of conclusions deduced on the basis of the Poincar\'e map to properties of the underlying periodic orbit has \emph{not} been explicitly discussed in the relevant literature. Indeed, rigorous results that relate the stability properties of the Poincar\'e map with those of the corresponding periodic orbit are restricted to systems \emph{without} inputs; e.g., \cite[Theorem~6.4]{khalil1992noninear} addresses local asymptotic stability (LAS) of periodic orbits in continuous systems, while \cite[Theorem~1]{grizzle2001asymptotically}, \cite[Theorem~13.1]{haddad2006impulsive} address LAS and \cite[Theorem~1]{morris2005restricted} local exponential stability (LES) of such orbits in SIEs. The relation between the behavior of periodic orbits of SIEs \emph{under external inputs} and the corresponding \emph{forced} Poincar\'e map is at the core of this paper.

Specifically, the main contribution of this work (Theorem~\ref{thm:LISS-equivalence}) is that ISS of a limit cycle  exhibited by a SIE is equivalent to ISS of a 0-input fixed point of the corresponding forced Poincar\'e map. This result significantly simplifies analysis, as it replaces the problem of establishing ISS of a hybrid limit cycle with the simpler problem of checking asymptotic stability of a 0-input fixed point of a discrete dynamical system defined by the corresponding Poincar\'e map (Theorem~\ref{thm:LAS-LISS}). To ensure the level of generality required by practical applications, we consider inputs affecting both the continuous and the discrete dynamics of the system. The continuous-time inputs belong in the (Banach) space of continuous bounded functions under the supremum norm. The resulting forced Poincar\'e map is a nonlinear \emph{functional} defined over an infinite-dimensional function space, thus significantly extending prior work that considers finite dimensional disturbances; see \cite{veer2016localISS, kolathaya2016parameter, fradkov1998introduction} for example. Finally, the proof of the main result provides an explicit connection between ISS estimates of the forced Poincar\'e map and those of the hybrid orbit. 

The results presented in this paper generalize previous contributions such as \cite[Theorem~1]{morris2005restricted}, which is widely used to establish exponential stability (ES) of a hybrid limit cycle when the  fixed point of the corresponding Poincar\'e map is ES. Indeed, \cite[Theorem~1]{morris2005restricted} can be obtained as a consequence of Theorem~\ref{thm:LISS-equivalence} of Section~\ref{sec:main-results} below. Furthermore, Proposition~\ref{prop:equiv-norm} of Section~\ref{sec:main-results} and Lemma~\ref{lem:solution-compare-S+} of Section~\ref{sec:thm-proof} complete crucial arguments that were omitted in the proof of~\cite[Theorem~1]{morris2005restricted}. Moreover, our results can offer useful tools for  the design of robust controllers for limit cycles of SIEs. For example, the methods in~\cite{hamed2017decentralized} that are based on Poincar\'e map analysis can be supported using Theorem~\ref{thm:LISS-equivalence}. As a final note, the results of this paper can naturally be applied to study ISS of limit cycles of continuous-time nonlinear systems under external excitation. Hence, their relevance extends to other bioinspired robots---including aerial robots with flapping wings~\cite{ramezani2017describing} and robot snakes~\cite{liljeback2011controllability}---which, like legged robots, realize locomotion through periodic forceful interactions with their environment.


\section{Background}
\label{sec:background}

This section introduces the class of systems with impulse effects pertinent to this paper, and develops a forced Poincar\'e map suitable for studying periodic orbits of such systems under the influence of continuous and discrete exogenous inputs; such inputs could be command or disturbance signals. We begin with a few notes on the notation used in the paper.

\subsection{Notation}
\label{subsec:notation}

Let $\mathbb{R}$ and $\mathbb{Z}$ denote the sets of real and integer numbers, and $\mathbb{R}_+$ and $\mathbb{Z}_+$ the corresponding subsets that include the non-negative reals and integers, respectively. For any $x\in\mathbb{R}^n$, the Euclidean norm is represented as $\|x\|$. An open ball of radius $\delta>0$ centered at $x$ is denoted by $B_\delta(x)$. The point-to-set distance of $x$ from $\mathcal{A}\subseteq \mathbb{R}^n$ is defined as $\mathrm{dist}(x,\mathcal{A}):= \inf_{y\in \mathcal{A}} \|x-y\|$. We use $\mathcal{P}(\mathcal{A})$ to represent the power set of $\mathcal{A}$, and $\mathcal{A}^{\rm c}$ to denote the complement of $\mathcal{A}$ with respect to $\mathbb{R}^n$.

For any interval $E\subseteq\mathbb{R}$ let $u:E \to \mathbb{R}^p$ be a function that represents the continuous-time inputs. The norm of $u$ is defined as $\|u\|_\infty:= \sup_{t\in E}\|u(t)\|$. The set of continuous-time inputs we work with belongs to $\mathcal{U}:=\{u:E \to \mathbb{R}^p~|~u~\mathrm{is}~\mathrm{continuous},~\|u\|_\infty\in\mathbb{R}_+ \}$. Discrete-time inputs $\bar{v} : \mathbb{Z}_+ \to \mathbb{R}^q$ correspond to sequences $\bar{v}=\{v_k\}_{k=0}^\infty$ with $v_k\in\mathbb{R}^q$ for $k\in\mathbb{Z}_+$. The norm of $\bar{v}$ is defined as $\|\bar{v}\|_\infty:=\sup_{k\in\mathbb{Z}_+} \|v_k\|$. The discrete inputs belong to $\mathcal{V}:=\{ \bar{v} : \mathbb{Z}_+ \to \mathbb{R}^q ~|~ \|\bar{v}\|_\infty \in \mathbb{R}_+ \} $. With an abuse of notation we use $\|\cdot\|_\infty$ to denote the norm for both $\mathcal{U}$ and $\mathcal{V}$. No ambiguity arises because the meaning of $\|\cdot\|_\infty$ depends on whether the argument is continuous or discrete. 

A function $\alpha:\mathbb{R}_+ \to \mathbb{R}_+$ belongs to class $\mathcal{K}$ if it is continuous, strictly increasing, and $\alpha(0)=0$. A function $\beta:\mathbb{R}_+ \times \mathbb{R}_+ \to \mathbb{R}_+$ belongs to class $\mathcal{KL}$ if it is continuous, $\beta(\cdot,t)$ belongs to $\mathcal{K}$ for any fixed $t\geq 0$, $\beta(s,\cdot)$ is strictly decreasing, and $\lim_{t\to \infty} \beta(s,t)=0$, for any fixed $s\geq 0$.

\subsection{Forced Systems With Impulse Effects}
\label{subsec:system}

We are interested in studying the stability of periodic orbits exhibited by systems with impulse effects under externally applied inputs. These systems are characterized by alternating continuous and discrete phases. The evolution of the state $x\in\mathbb{R}^n$ during the continuous phase is governed by an ODE
\begin{equation}\label{eq:cont-dyn}
\dot{x}(t) = f(x(t),u(t)) \enspace,
\end{equation}
where the input $u:\mathbb{R_+}\to \mathbb{R}^p$ is an element of $\mathcal{U}$ defined in Section~\ref{subsec:notation} and $u(t) \in \mathbb{R}^p$ is its value. The vector field $f$ in the right-hand side of \eqref{eq:cont-dyn} satisfies the following assumption:
\begin{enumerate}[{{A}.1)}] 
\item \label{ass:f-c2} $f:\mathbb{R}^n\times\mathbb{R}^p \to \mathbb{R}^n$ is twice continuously differentiable\footnote{Whenever we state that a function is $k$-times continuously differentiable, it applies to all its arguments.}.
\suspend{enumerate}

Local existence and uniqueness of solutions of \eqref{eq:cont-dyn} for a fixed $u$ follows from \cite[Theorem~3.1]{khalil2002nonlinear} based on assumption A.\ref{ass:f-c2} and the continuity of $u$ as a function of $t$. We denote the flow of \eqref{eq:cont-dyn} starting from the initial state $x(0)$ and evolving under the influence of the input $u$ by $\varphi(t,x(0),u)$. 

The continuous phase terminates when the flow of \eqref{eq:cont-dyn} reaches a set $\mathcal{S}\subset \mathbb{R}^n$ defined as 
\begin{equation}\label{eq:S}
\mathcal{S}:= \{x\in\mathbb{R}^n~|~H(x)=0\} \enspace,
\end{equation}
where it is assumed that
\resume{enumerate}[{[{{A}.1)}]}]
\item \label{ass:S}$\mathcal{S} \!\neq\! \emptyset$,~ $H :\mathbb{R}^n \to \mathbb{R}$ is twice continuously differentiable, and for all $\hat{x}\in \mathcal{S}$, $\frac{\partial H}{\partial x}\big|_{\hat{x}} \neq 0$, i.e., $\mathcal{S}$ is a co-dimension $1$ embedded submanifold in $\mathbb{R}^n$; see \cite[p. 431]{GUHO96}.
\suspend{enumerate}
For future use we define the sets $\mathcal{S}^+:=\{x\in\mathbb{R}^n~|~H(x)>0\}$ and $\mathcal{S}^-:=\{x\in\mathbb{R}^n~|~H(x)<0\}$.
 
The intersection of the flow of \eqref{eq:cont-dyn} with $\mathcal{S}$ initiates the discrete phase, which is governed by the mapping
\begin{equation}\label{eq:disc-dyn}
x^+ = \Delta(x^-,v)~~~\mathrm{for}~x^-\in\mathcal{S} \enspace,
\end{equation}
where $x^-$, $x^+$ are the states right before and after impacting $\mathcal{S}$, respectively, and $v\in\mathbb{R}^q$ is a member of the discrete input $\bar{v}$ that belongs in $\mathcal{V}$ defined in Section~\ref{subsec:notation}. It is assumed that
\resume{enumerate}[{[{{A}.1)}]}]
\item $\Delta:\mathbb{R}^{n}\times \mathbb{R}^q \to \mathbb{R}^n$ is continuously differentiable. \label{ass:delta-C-1} \label{ass:Delta}
\suspend{enumerate}

Putting together the continuous and discrete phases \eqref{eq:cont-dyn} and \eqref{eq:disc-dyn}, the \textit{forced system with impulse effects} takes the form
\begin{eqnarray}
\Sigma:
\begin{cases}
\begin{aligned}
\dot{x}(t) &= f(x(t),u(t)) & \hspace{-1mm}\mathrm{if}~x(t)\notin \mathcal{S}	\\
x^+(t)     &= \Delta(x^-(t),v) &\hspace{-1mm}\mathrm{if}~x^-(t)\in \mathcal{S}
\end{aligned}
\end{cases} \enspace, \label{eq:sys-imp-eff}
\end{eqnarray}
where $x^-(t) := \lim_{\tau \nearrow t} x(\tau)$ and $x^+(t) := \lim_{\tau \searrow t} x(\tau)$.

At any time instant for which it exists, the solution of \eqref{eq:sys-imp-eff} evolves according to either \eqref{eq:cont-dyn} or \eqref{eq:disc-dyn}. This allows us to represent the hybrid flow of \eqref{eq:sys-imp-eff} as the solution of \eqref{eq:cont-dyn} which, on approaching $\mathcal{S}$, is interrupted by the discrete map \eqref{eq:disc-dyn}. Let $\psi(t,x(0),u,\bar{v})$ denote the flow of \eqref{eq:sys-imp-eff} for some initial state $x(0)$, continuous input $u$, and discrete input $\bar{v}$. Adapting the definition in~\cite[Section III-A]{grizzle2001asymptotically}, $\psi(t,x(0),u,\bar{v})$ as a function of time $t \in [0, t_\mathrm{f})$, $t_\mathrm{f} \in \mathbb{R}_+ \cup \{\infty\}$, satisfies the following:
(i) it is right continuous\footnote{To avoid the state having to take two values at impact, a choice is to be made as to whether the state just before or just after impact---i.e., $x^-$ or $x^+$, respectively---is included in the solution. The former corresponds to left continuity and the latter to right continuity of $\psi$ as a function of time. We assume here that $\psi$ is right continuous with respect to $t$; note however that the results that follow hold regardless of this choice~\cite{haddad2006impulsive}.} on $[0, t_\mathrm{f})$;
(ii) left limits exist at each point in $(0, t_\mathrm{f})$; and
(iii) there exists a discrete subset $\mathcal{T} \subset [0, t_\mathrm{f})$ such that (a) for every $t \notin \mathcal{T}$, $\psi(t,x(0),u,\bar{v})$ satisfies \eqref{eq:cont-dyn} for the input $u$ considered, and (b) for $t \in \mathcal{T}$, $x^-(t) = \lim_{\tau \nearrow t} \psi(\tau,x(0),u,\bar{v}) \in \mathcal{S}$ and $x^+(t) = \lim_{\tau \searrow t} \psi(\tau,x(0),u,\bar{v}) = \Delta(x^-(t), v)$ where $v$ is a member of the sequence $\bar{v}$. 
Note that right continuity implies that at time $t \in \mathcal{T}$ the solution attains the value $x^+(t)$ and not $x^-(t)$; that is, $x(t)=\psi(t,x(0),u,\bar{v})=x^+(t)$. Moreover, Proposition~\ref{prop:global-existence} in Section~\ref{sec:main-results} below ensures that in the neighborhood of distinct locally input-to-state stable periodic orbits---such as those of interest in this work---the solutions of \eqref{eq:sys-imp-eff} exist for arbitrary $t_\mathrm{f}>0$, they do not possess consecutive discrete jumps (beating) and do not exhibit Zeno behavior; see~\cite{haddad2006impulsive, goebel2012hybrid} for definitions.

Let $x^*\in\mathcal{S}$ and $T^*\in(0,\infty)$ such that the following hold:
\resume{enumerate}[{[{{A}.1)}]}]
\item $\mathrm{dist}(\Delta(x^*,0),\mathcal{S})>0$, and $\Delta(x^*,0)\in\mathcal{S}^+$. The choice $\Delta(x^*,0)\in \mathcal{S}^+$ does not result in loss of generality; if $\Delta(x^*,0)\in \mathcal{S}^-$ re-define $\mathcal{S}$ with $\hat{H}(x):=-H(x)$.\label{ass:no-zeno}
\item $\varphi(t,\Delta(x^*,0),0)$ exists for all $t\in[0,T^*]$ and $\varphi(T^*,\Delta(x^*,0),0)=x^*$. \label{ass:orbit-exist}
\suspend{enumerate}
Using assumptions A.\ref{ass:no-zeno}-A.\ref{ass:orbit-exist} define
\begin{equation}\label{eq:orbit-def}
\mathcal{O}:= \{ \varphi(t,\Delta(x^*,0),0)~|~t\in[0,T^*)\} \enspace,
\end{equation}
and suppose further that $\mathcal{O}$ satisfies the following assumptions:
\resume{enumerate}[{[{{A}.1)}]}]
\item Let $\overline{\mathcal{O}}$ be the closure of $\mathcal{O}$, then $\mathcal{S}\cap\overline{\mathcal{O}} = \{ x^* \}$.\label{ass:x*}
\item $\mathcal{O}$ is transversal to $\mathcal{S}$ at $x^*$, i.e., $L_f H(x^*,0):=\frac{\partial H}{\partial x}\big|_{x^*} f(x^*,0)< 0$. \label{ass:orbit-transversal}
\suspend{enumerate}
\noindent It follows from our assumptions that $\mathcal{O}$ is a bounded unforced $(u\equiv 0,\bar{v}\equiv 0)$ hybrid periodic orbit of $\Sigma$ that exhibits only one impact with $\mathcal{S}$ at $x^*$ and has period $T^*$. Moreover, $\mathcal{O}$ is not a closed curve; see Fig.~\ref{fig:orbit} for a geometric illustration.

\begin{figure}[t]
\vspace{+0.05in}
\begin{centering}
\includegraphics[width=0.7\columnwidth]{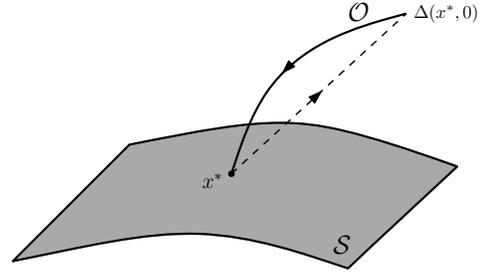} 
\par\end{centering}
\vspace{-0.12in}
\caption{Geometric illustration of $\mathcal{O}$. The switching surface $\mathcal{S}$ is in grey.}
\vspace{-0.15in}
\label{fig:orbit} 
\end{figure}

\subsection{Forced Poincar\'e Map}
\label{subsec:poinc}

The Poincar\'e map is a common tool used for analyzing systems with periodic orbits. Given a Poincar\'e section---which is an embedded submanifold transversal to the orbit---the Poincar\'e map returns consecutive intersections of the system's flow with the Poincar\'e section. Here, we study the map which returns the intersection of the solution of \eqref{eq:sys-imp-eff} with $\mathcal{S}$ under the influence of the external inputs $u$ and $\bar{v}$. Consequently, it is natural to call this map the \emph{forced} Poincar\'e map.

As was mentioned in Section~\ref{subsec:system}, when the input $u$ affecting \eqref{eq:cont-dyn} is a \emph{fixed} signal from $\mathcal{U}$, existence and uniqueness of the solution emanating from\footnote{Without loss of generality, we use $t=0$ as the initial time to avoid confusion with $t_0$, which, in our notation, is the instant of the first impact.} $x(0) \in \mathbb{R}^n$ can be established over an interval $J \subseteq \mathbb{R}_+$ with $0 \in J$, by \cite[Theorem~3.1]{khalil2002nonlinear} applied on the time-varying vector field $\hat{f}(t,x):=f(x,u(t))$. To develop the forced Poincar\'e map, however, we need to compare solutions with different initial conditions \emph{and} different inputs. To do this, it is important to be able to consider the forced solution $\varphi(t,x(0),u)$ of \eqref{eq:cont-dyn} as a mapping from $J \times \mathbb{R}^n \times \mathcal{U}$ to $\mathbb{R}^n$, interpreting $u$ as an infinite-dimensional ``parameter'' residing in the Banach space $(\mathcal{U},\|\cdot\|_\infty)$. We can then analyze variations of $\varphi(t,x(0),u)$ with respect to its arguments, including $u$. The following lemma shows that, over its maximal interval of existence, the solution $\varphi(t,x(0),u)$ of \eqref{eq:cont-dyn} is continuously differentiable in its arguments, with differentiability understood in the Fr\'echet sense~\cite[p. 333]{lang1993real}.
\begin{lemma}\label{lem:frechet}
Let $f:\mathbb{R}^n\times\mathbb{R}^p\to\mathbb{R}^n$ in \eqref{eq:cont-dyn} be continuously differentiable, and let $u\in\mathcal{U}$ with $\mathcal{U}$ as in Section~\ref{subsec:notation}. Then, the solution $\varphi: J \times\mathbb{R}^n\times\mathcal{U}\to \mathbb{R}^n$ is continuously differentiable in its arguments in the Fr\'echet sense.
\end{lemma}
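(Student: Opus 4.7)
The plan is to establish Fréchet differentiability by combining classical sensitivity analysis for ODEs (for the dependence on $t$ and $x(0)$) with a Gronwall-based argument adapted to the infinite-dimensional parameter $u \in (\mathcal{U}, \|\cdot\|_\infty)$. I would start from the integral form
\begin{equation*}
\varphi(t, x_0, u) = x_0 + \int_0^t f(\varphi(s, x_0, u), u(s))\, ds \enspace,
\end{equation*}
and note that differentiability in $t$ is immediate from \eqref{eq:cont-dyn}, while differentiability in $x_0$ is the standard variational result: $D_{x_0}\varphi(t, x_0, u)$ is the fundamental matrix solution of $\dot{\Phi} = \frac{\partial f}{\partial x}(\varphi, u)\, \Phi$ with $\Phi(0) = I$, whose existence and continuous dependence on $(t, x_0, u)$ follow from linear ODE theory together with the continuity of $\partial f / \partial x$.

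The core of the proof is Fréchet differentiability with respect to $u$. As a candidate derivative I would take the bounded linear operator $L : \mathcal{U} \to \mathbb{R}^n$ defined by $L(\delta u) := y(t)$, where $y(\cdot)$ solves the linear time-varying inhomogeneous equation
\begin{equation*}
\dot{y}(s) = \frac{\partial f}{\partial x}(\varphi, u)\, y(s) + \frac{\partial f}{\partial u}(\varphi, u)\, \delta u(s) \enspace, \quad y(0) = 0 \enspace.
\end{equation*}
Using the variation-of-constants formula and the fact that on any compact subinterval of $J$ the trajectory $\varphi(\cdot, x_0, u)$ stays in a compact set on which $\partial f / \partial x$ and $\partial f / \partial u$ are bounded, one obtains $\|L(\delta u)\| \le C \|\delta u\|_\infty$, so $L$ is bounded. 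To verify the Fréchet condition, I would first apply Gronwall's inequality to the integral equation satisfied by $\varphi(\cdot, x_0, u+\delta u) - \varphi(\cdot, x_0, u)$ to conclude that $\|\varphi(s, x_0, u + \delta u) - \varphi(s, x_0, u)\| = O(\|\delta u\|_\infty)$ uniformly on compact $s$-intervals. Then, setting $z(s) := \varphi(s, x_0, u+\delta u) - \varphi(s, x_0, u) - y(s)$ and performing a first-order Taylor expansion of $f$ around $(\varphi(s, x_0, u), u(s))$, one arrives at $\dot{z} = \frac{\partial f}{\partial x}(\varphi, u)\, z + R(s)$ with $z(0) = 0$, where $R(s)$ collects the Taylor remainder terms that I would bound by $\omega(\|\delta u\|_\infty)\,\|\delta u\|_\infty$ with a modulus $\omega(\rho) \to 0$ as $\rho \to 0$. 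A second Gronwall estimate then yields $\|z(t)\| = o(\|\delta u\|_\infty)$, which is exactly Fréchet differentiability in $u$.

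The main obstacle I anticipate is controlling the Taylor remainder $R$ uniformly in $s$: one must exploit joint uniform continuity of $\partial f / \partial x$ and $\partial f / \partial u$ on a compact tube around $\{(\varphi(s, x_0, u), u(s)) : s \in [0, t]\}$, which exists because the range of $u$ is bounded thanks to $\|u\|_\infty \in \mathbb{R}_+$ and $\varphi$ stays bounded on compact $t$-intervals. Once Fréchet differentiability in each argument has been established, continuity of the three partial Fréchet derivatives as functions on $J \times \mathbb{R}^n \times \mathcal{U}$ follows by another Gronwall/continuous-dependence argument applied to the underlying linear variational equations, whose coefficients vary continuously with $(x_0, u)$ in the sup norm. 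By the standard criterion, joint continuity of the partial Fréchet derivatives implies that $\varphi$ is continuously Fréchet differentiable on $J \times \mathbb{R}^n \times \mathcal{U}$, completing the proof.
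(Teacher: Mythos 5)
Your proposal is correct in substance, but it takes a genuinely different route from the paper. The paper does not construct the derivative by hand: it writes the right-hand side as $F(t,x,u):=f(x,G(t,u))$ with the evaluation map $G(t,u)=u(t)$, observes that for each fixed $t$ the map $u\mapsto u(t)$ is bounded and linear on $(\mathcal{U},\|\cdot\|_\infty)$ and hence continuously Fr\'echet differentiable (its derivative being itself), concludes that $F$ is continuous in $t$ and $C^1$ in $(x,u)$, and then invokes Lang's theorem on $C^1$ dependence of flows on initial conditions and Banach-space parameters (noting that the theorem's proof only needs continuity of $F$ in $t$). Your argument instead proves the same result directly: you exhibit the candidate derivative in $u$ as the solution of the inhomogeneous variational equation driven by $\frac{\partial f}{\partial u}(\varphi,u)\,\delta u$, control the Taylor remainder by uniform continuity of the derivatives of $f$ on a compact tube, and close with two Gronwall estimates plus the criterion that continuity of the partial Fr\'echet derivatives gives joint $C^1$-ness. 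What the paper's route buys is brevity and the avoidance of the remainder bookkeeping, at the price of relying on an abstract theorem; what your route buys is a self-contained, constructive proof that also displays the explicit form of $D_u\varphi$ (useful elsewhere), at the price of extra technical steps you should not gloss over: existence of the perturbed solution on the whole compact interval for $\|\delta u\|_\infty$ small (a continuation argument folded into your first Gronwall step, essentially the content of Lemma~\ref{lem:cont-solutions}), and continuity of $(t,x_0,u)\mapsto D_u\varphi(t,x_0,u)$ in the \emph{operator norm} on $L(\mathcal{U},\mathbb{R}^n)$, which requires uniform-in-$s$ estimates on the variation-of-constants kernel rather than pointwise ones. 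With those two points made explicit, your proof is complete and equivalent in conclusion to the paper's.
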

The proof makes use of Banach calculus \cite{lang1993real} and is presented in Appendix~\ref{app:existence}. We only note here that, for notational convenience, we use the same symbol $u$ to denote both the finite-dimensional values of the input function at given instants and the infinite-dimensional input signal as a function in the Banach space $(\mathcal{U},\|\cdot\|_\infty)$; the distinction will always be clear through the domain of definition of the corresponding map.

Let $T_{\rm I}:\mathcal{S} \times \mathcal{U} \times \mathbb{R}^q \to \mathbb{R}_+\cup \{\infty \}$ be the time-to-impact map defined as 
\begin{align}\label{eq:time-to-imp}
& T_{\rm I}(x,u,v):= 
\begin{cases}
\inf \{t\geq 0~|~\varphi (t,\Delta(x,v),u) \in\mathcal{S} \}, \\
~~~~~~~~~~~\mathrm{if}~\exists t: \varphi(t,\Delta(x,v),u)\in\mathcal{S} \\
\infty,   ~~~~~~~\mathrm{otherwise} 
\end{cases} 
\end{align}
Lemma~\ref{lem:TI-cont} below establishes that the time-to-impact function $T_\mathrm{I}$ is well defined and continuously differentiable in $x$, $u$ and $v$. Note that the dependence of $T_{\rm I}$ on $u$ is to be understood with $u$ interpreted as a function in $(\mathcal{U},\|\cdot\|_\infty)$. The proof is based on the implicit mapping theorem \cite[Chapter~XIV,~Theorem~2.1]{lang1993real} and on Lemma~\ref{lem:frechet} above, and is presented in Appendix~\ref{app:existence}.
\begin{lemma}\label{lem:TI-cont}
Consider \eqref{eq:time-to-imp}. Suppose that \eqref{eq:sys-imp-eff} satisfies assumptions A.\ref{ass:f-c2}-A.\ref{ass:orbit-transversal}. Then, there exists a $\delta>0$ such that $T_{\rm I}$ is continuously differentiable for any $x\in B_\delta(x^*)\cap\mathcal{S}$, $u\in\mathcal{U}$ with $\|u\|_\infty<\delta$, and $v\in B_\delta(0)$.
\end{lemma}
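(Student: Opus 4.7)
The plan is to apply the Implicit Mapping Theorem in the Banach space setting to the scalar equation that encodes the impact condition. Specifically, I would define
\begin{equation*}
F(t, x, u, v) := H\bigl(\varphi(t, \Delta(x, v), u)\bigr),
\end{equation*}
viewed as a function $F: \mathbb{R}_+ \times \mathbb{R}^n \times \mathcal{U} \times \mathbb{R}^q \to \mathbb{R}$ with $u$ interpreted as an element of the Banach space $(\mathcal{U}, \|\cdot\|_\infty)$. By A.\ref{ass:S} the map $H$ is $C^2$, by A.\ref{ass:Delta} the reset $\Delta$ is $C^1$, and by Lemma~\ref{lem:frechet} the flow $\varphi$ is Fr\'echet $C^1$ jointly in its arguments. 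Hence, by the chain rule, $F$ is Fr\'echet $C^1$ on a neighborhood of the reference point $(T^*, x^*, 0, 0)$.

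I would then verify the hypotheses of the Implicit Mapping Theorem at this reference point. Assumption A.\ref{ass:orbit-exist} gives $\varphi(T^*, \Delta(x^*, 0), 0) = x^* \in \mathcal{S}$, so $F(T^*, x^*, 0, 0) = 0$. Differentiating in $t$ and using the transversality hypothesis A.\ref{ass:orbit-transversal},
\begin{equation*}
\frac{\partial F}{\partial t}\bigg|_{(T^*, x^*, 0, 0)} = \frac{\partial H}{\partial x}\bigg|_{x^*} f(x^*, 0) = L_f H(x^*, 0) < 0,
\end{equation*}
a nonzero scalar and hence an invertible linear map. The Banach-space Implicit Mapping Theorem~\cite[Ch.~XIV, Thm.~2.1]{lang1993real} then produces a neighborhood $\mathcal{N}$ of $(x^*, 0, 0)$ in $\mathcal{S} \times \mathcal{U} \times \mathbb{R}^q$ and a unique $C^1$ function $\tau: \mathcal{N} \to \mathbb{R}$ with $\tau(x^*, 0, 0) = T^*$ satisfying $F(\tau(x, u, v), x, u, v) = 0$ throughout $\mathcal{N}$.

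The subtlest step is to argue that, after shrinking $\mathcal{N}$ if needed, $\tau(x, u, v)$ coincides with the infimum in \eqref{eq:time-to-imp} so that it actually records the \emph{first} impact. For this I would invoke A.\ref{ass:x*}: since $\mathcal{S} \cap \overline{\mathcal{O}} = \{x^*\}$, the compact curve $K_\epsilon := \{\varphi(t, \Delta(x^*, 0), 0) : t \in [0, T^* - \epsilon]\}$ lies at a positive distance from the closed set $\mathcal{S}$. Continuity of $\Delta$ together with the continuous dependence of $\varphi$ on its finite- and infinite-dimensional arguments (Lemma~\ref{lem:frechet}) then implies that for all $(x, u, v)$ sufficiently close to $(x^*, 0, 0)$, the perturbed trajectory $\varphi(\cdot, \Delta(x, v), u)$ remains outside $\mathcal{S}$ on $[0, T^* - \epsilon]$. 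Combined with the fact that $\partial F / \partial t < 0$ near the reference point forces $F$ to cross zero once, from positive to negative, in a small time window around $\tau(x, u, v)$, this yields $T_\mathrm{I}(x, u, v) = \tau(x, u, v)$ on a product of balls of some common radius $\delta > 0$, and $T_\mathrm{I}$ inherits the continuous differentiability of $\tau$.

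The main obstacle I expect is this last identification: the Implicit Mapping Theorem only delivers some local zero of $F$, whereas $T_\mathrm{I}$ demands the first. Handling it cleanly requires the uniform control on trajectories afforded by the infinite-dimensional Fr\'echet continuity in Lemma~\ref{lem:frechet}; everything else reduces to a routine chain-rule computation and verification of the nondegeneracy condition supplied by transversality.
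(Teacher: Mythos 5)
Your proposal is correct and takes essentially the same route as the paper: the paper also applies the Banach-space implicit mapping theorem of Lang to $\tilde{H}(t,x,u,v):=H\circ\varphi(t,\Delta(x,v),u)$, using Lemma~\ref{lem:frechet} for Fr\'echet smoothness, assumption A.\ref{ass:orbit-exist} for the zero at $(T^*,x^*,0,0)$, and A.\ref{ass:orbit-transversal} for the nondegenerate $t$-derivative. The only difference is that the paper simply names the resulting implicit function $T_{\rm I}$ without spelling out your final step identifying it with the infimum in \eqref{eq:time-to-imp}; your argument via A.\ref{ass:x*} makes explicit a point the paper leaves implicit.
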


We are now ready to define the forced Poincar\'e map $P\!:\!\mathcal{S} \times \mathcal{U} \times \mathbb{R}^q \to \mathcal{S}$ as 
\begin{equation} \nonumber
P(x,u,v):=\varphi(T_{\rm I}(x,u,v),\Delta(x,v),u) \enspace. 
\end{equation}
From Lemmas~\ref{lem:frechet} and~\ref{lem:TI-cont} it follows that $P$ is well defined and continuously differentiable for any $x\in B_\delta(x^*)\cap\mathcal{S}$, $u\in\mathcal{U}$ with $\|u\|_\infty<\delta$, and $v\in B_\delta(0)$. 

Let $\psi(t,x(0),u,\bar{v})$ be a solution of \eqref{eq:sys-imp-eff} and $x(t)=\psi(t,x(0),u,\bar{v})$ be the value of the state at time $t$. For $k\in\mathbb{Z}_+$, let $t_k$ be the instant at which the $(k+1)$-th ``intersection'' of $x(t)$ with $\mathcal{S}$ occurs. Define $u_k(t):=u(t)$ for $t_k \leq t <t_{k+1}$, and let $v_k$ be the $k$-th element of the sequence $\bar{v}$. Then, the forced Poincar\'e map gives rise to the forced discrete system
\begin{equation}\label{eq:poinc-dyn}
x_{k+1} = P(x_k,u_k,v_k) \enspace,
\end{equation}
where $x_k:=\lim_{t \nearrow t_k} x(t)$. The discrete system \eqref{eq:poinc-dyn} captures the evolution of the system from \emph{just before} an impact with $\mathcal{S}$ to \emph{just before} the next impact, assuming that the next impact occurs. It should be emphasized that the state $x(t)$ does \emph{not} attain\footnote{Right continuity of $\psi$ in $t$ implies that, in general, there is no $t$ for which $x(t) \in \mathcal{S}$; see related comments in Section~\ref{subsec:system} and in \cite[Section 4.1.2 ]{westervelt2007feedback}.} the value $x_k$ at $t_k$ because $\psi$ has been assumed right continuous in $t$; in fact,  $x(t_k) = \Delta(x_k, v_k) = \lim_{t \searrow t_k} x(t) \neq \lim_{t \nearrow t_k} x(t) = x_k$. Let $x^*$ be as in assumption~A.\ref{ass:x*}, then $x^*$ is the 0-input fixed point of \eqref{eq:poinc-dyn}, i.e., $x^* = P(x^*,0,0)$.

For future use, we also define $\hat{T}_{\rm I}:\mathcal{S}^+\times\mathcal{U}\to\mathbb{R}_+\cup\{\infty\}$ as the time-to-impact function for solutions of \eqref{eq:cont-dyn} starting from states in $\mathcal{S}^+$ as
\begin{align}\label{eq:time-to-imp-S+}
& \hat{T}_{\rm I}(x,u):=  \\ 
&
\begin{cases}
\inf \{t\geq 0~|~\varphi(t,x,u) \in\mathcal{S} \},& \mathrm{if}~\exists t:\varphi(t,x,u)\in\mathcal{S}\\
\infty,  &  \mathrm{otherwise} 
\end{cases}. \nonumber
\end{align}
It is noted that for any point $w \in \mathcal{O}$ there exists a $\delta>0$ such that $\hat{T}_{\rm I}$ is continuously differentiable for any $x \in B_\delta(w)$ and any $u \in \mathcal{U}$ with $\| u \|_\infty < \delta$. The proof similar to that of Lemma~\ref{lem:TI-cont}, and it is not presented for brevity.

Finally, Remark~\ref{rem:hyb-cont-sol-rel} clarifies the relation between $\psi$, $\varphi$ and the sequence $\{x_k\}_{k=0}^{\infty}$ and Remark~\ref{rem:u_k} indicates that the results are valid even when the input $u$ is a piecewise continuous signal, as long as the points of discontinuity are at $t_k$.
\begin{rem}\label{rem:hyb-cont-sol-rel}
Let $\psi(t,x(0),u,\bar{v})$ be a solution of \eqref{eq:sys-imp-eff} that exists for all $t\geq 0$ and $x(t)=\psi(t,x(0),u,\bar{v})$ be the value of the state at time $t$. Then, if $t_0 := \hat{T}_{\rm I}(x(0),u)$ is the time instant of the first crossing of $\mathcal{S}$ we have
\begin{align}\nonumber
x(t) = \varphi(t,x(0),u_{[0,t_0)}),~\mathrm{for}~0\leq t <t_0 \enspace.
\end{align} 
Moreover, if the sequence $\{x_k\}_{k=0}^{\infty}$ is the solution of \eqref{eq:poinc-dyn} for the initial state $x_0:=\lim_{t \nearrow t_0}x(t)$ and the sequence of input functions $\{u_k\}_{k=0}^\infty$ are as defined above, then for all $k \in \mathbb{Z}_+$
\begin{align}\nonumber
x(t) = \varphi(t,\Delta(x_k,v_k),u_k),~\mathrm{for}~t_k\leq t <t_{k+1} \enspace,
\end{align}
where $v_k$ is the $k$-th element of $\bar{v}$ and
\begin{equation}\nonumber 
 t_{k+1}=t_k+T_\mathrm{I}(x_k, u_k, v_k)=t_0 + \sum^k_{j=0}T_\mathrm{I}(x_j,u_j,v_j)\enspace.
\end{equation}
\end{rem}
\begin{rem}\label{rem:u_k}
The results of the paper hold when $u$ is discontinuous, as long as each $u_k$ is a continuous function in $\mathcal{U}$ that agrees with $u$ over the interval $[t_k, t_{k+1})$. 
\end{rem}

\subsection{Pertinent Stability Definitions}
\label{subsec:defs}

Notions of orbital stability that will be studied in this paper are introduced here. We begin with local input-to-state stability (LISS) of the periodic orbit.
\begin{definition}\label{def:orbit-LISS} 
The periodic orbit $\mathcal{O}$ of \eqref{eq:sys-imp-eff} is orbitally LISS if there exists a $\delta>0$, $\alpha_1,\alpha_2\in\mathcal{K}$, and $\beta\in\mathcal{KL}$ such that $x(t)=\psi(t,x(0),u,\bar{v})$ satisfies for all $t\in [0,t_\mathrm{f})$, $t_\mathrm{f} \in \mathbb{R}_+ \cup \{\infty\}$,
\begin{align}
\mathrm{dist}(x(t),\mathcal{O}) \leq~&\beta(\mathrm{dist}(x(0),\mathcal{O}),t)+\alpha_1(\|u\|_\infty) \nonumber\\ 
&+ \alpha_2(\|\bar{v}\|_\infty) \label{eq:orbit-LISS-ineq}\enspace,
\end{align}
for  any $x(0)\in\mathcal{S}^+$ with $\mathrm{dist}(x(0),\mathcal{O})<\delta$, $u\in\mathcal{U}$ with $\|u\|_\infty<\delta$, and $\bar{v}\in\mathcal{V}$ with $\|\bar{v}\|_\infty<\delta$.
\end{definition}
\noindent Proposition~\ref{prop:global-existence} in Section~\ref{sec:main-results} below asserts  that shrinking $\delta>0$ in Definition~\ref{def:orbit-LISS} guarantees that all ensuing hybrid solutions exist for all time; that is, $t_\mathrm{f}>0$ in Definition~\ref{def:orbit-LISS} can be chosen arbitrarily large. Since we focus on local properties of distinct periodic orbits $\mathcal{O}$ of \eqref{eq:sys-imp-eff}, we work with solutions in a small enough neighborhood of $\mathcal{O}$ that satisfy \eqref{eq:orbit-LISS-ineq} for all $t \geq 0$.

Besides LISS, we will briefly consider local exponential stability (LES) of $\mathcal{O}$. As above, Proposition~\ref{prop:global-existence} of Section~\ref{sec:main-results} below ensures that in a small enough neighborhood of $\mathcal{O}$ solutions of \eqref{eq:sys-imp-eff} exist over arbitrarily long intervals. Hence, the definition below assumes existence of solutions for all $t \geq 0$. 

\begin{definition}\label{def:orbit-ES}
The periodic orbit $\mathcal{O}$ of \eqref{eq:sys-imp-eff} is LES if there exists a $\delta>0$, $N>0$, and $\omega>0$ such that $x(t)=\psi(t,x(0),0,0)$ satisfies for all $t\geq 0$,
\begin{align}
\mathrm{dist}(x(t),\mathcal{O}) \leq N \mathrm{e}^{-\omega t} \mathrm{dist}(x(0),\mathcal{O}) \nonumber \enspace,
\end{align}
for any $x(0)\in\mathcal{S}^+$ with $\mathrm{dist}(x(0),\mathcal{O})<\delta$.
\end{definition}
 
In addition to orbital stability, we also present notions of stability for the discrete system \eqref{eq:poinc-dyn}. 
\begin{definition}\label{def:poinc-LISS}
The system \eqref{eq:poinc-dyn} is LISS if there exists a $\delta>0$, $\alpha_1,\alpha_2 \in \mathcal{K}$, and $\beta \in \mathcal{KL}$, such that for all $k\in\mathbb{Z}_+$, 
\begin{align}\label{eq:poinc-LISS}
\|x_k-x^*\| \leq\beta(\|x_0-x^*\|,k)+\alpha_1(\|u\|_\infty) +\alpha_2(\|\bar{v}\|_\infty)\enspace,
\end{align}
is satisfied for any $x_0\in \mathcal{S}$ with $\|x_0-x^*\| < \delta$, $u\in\mathcal{U}$ with $\|u\|_\infty <\delta$, and $\bar{v}\in\mathcal{V}$ with $\|\bar{v}\|_\infty <\delta$. 
\end{definition}
\noindent Finally, the 0-input fixed point $x^*$ of \eqref{eq:poinc-dyn} satisfies $x^*=P(x^*,0,0)$, and it is locally asymptotically stable (LAS) or LES if it satisfies the following definition.
\begin{definition}\label{def:poinc-ES}
The 0-input fixed point $x^*$ of \eqref{eq:poinc-dyn} is LAS if there exists a $\delta>0$ and $\beta \in \mathcal{KL}$ such that for all $k\in\mathbb{Z}^+$,
\begin{align}
\|x_k-x^*\| \leq \beta(\|x_0-x^*\|,k) \nonumber \enspace,
\end{align}
is satisfied for any $x_0\in \mathcal{S}$ with $\|x_0-x^*\| < \delta$. Furthermore, if there exists a $N>0$, and $0<\rho<1$ such that 
\[
\beta(\|x_0-x^*\|,k) \leq N \rho^k \|x_0-x^*\| \enspace,
\]
then $x^*$ is a LES 0-input fixed point of \eqref{eq:poinc-dyn}. 
\end{definition}

\section{Main Results}
\label{sec:main-results}

In this section we present the main results of this paper. First, we introduce an important proposition on the geometric relation between $\mathcal{O}$ and $\mathcal{S}$. This proposition allows us to express bounds on the orbital distance of any $x\in\mathcal{S}$ from $\mathcal{O}$ equivalently based on the Euclidean distance of $x$ from $x^*$, and vice-versa. The importance of this proposition becomes clear by observing the distance metrics used in Definition~\ref{def:orbit-LISS} and Definition~\ref{def:poinc-LISS}. Hence, it serves as an important bridge between the orbital notions of stability and the Poincar\'e map's stability.
\begin{prop}\label{prop:equiv-norm}
Let $\mathcal{S}$ be defined as in \eqref{eq:S} and satisfy assumption A.\ref{ass:S}. Let $\mathcal{O}$ be defined as in \eqref{eq:orbit-def} and satisfy assumptions A.\ref{ass:no-zeno}-A.\ref{ass:orbit-transversal}. Then, there exists a $0<\lambda< 1$ such that
\begin{equation}\label{eq:prop-bound}
\lambda\|x-x^*\| \leq \mathrm{dist}(x,\mathcal{O}) \leq \|x-x^*\| \enspace,
\end{equation}
for all $x\in\mathcal{S}$.
\end{prop}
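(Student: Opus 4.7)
The upper bound is immediate: since $x^* \in \overline{\mathcal{O}}$ by assumption A.\ref{ass:x*} and $\mathrm{dist}(x,\mathcal{O}) = \mathrm{dist}(x,\overline{\mathcal{O}})$, taking $x^*$ as a candidate in the infimum yields $\mathrm{dist}(x,\mathcal{O}) \leq \|x-x^*\|$. All the substance of the proposition therefore lies in the lower bound. My plan is to study the continuous ratio
\[
g(x) \,:=\, \frac{\mathrm{dist}(x,\mathcal{O})}{\|x-x^*\|}, \qquad x \in \mathcal{S} \setminus \{x^*\},
\]
and show that $\inf_{x\in\mathcal{S}\setminus\{x^*\}} g(x) > 0$; any positive number strictly less than this infimum and strictly less than $1$ will then serve as $\lambda$. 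Note that $g$ is strictly positive on its domain, because assumption A.\ref{ass:x*} gives $(\mathcal{S}\setminus\{x^*\})\cap\overline{\mathcal{O}}=\emptyset$.

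I would partition $\mathcal{S}\setminus\{x^*\}$ into three regimes and bound $g$ below on each. On the intermediate annulus $\{x\in\mathcal{S} : \epsilon \leq \|x-x^*\| \leq R\}$, the map $g$ is continuous on a compact set disjoint from $\overline{\mathcal{O}}$, so it attains a strictly positive minimum. On the far region $\{\|x-x^*\| > R\}$, boundedness of $\mathcal{O}$ (noted explicitly after the assumptions, and following from A.\ref{ass:orbit-exist} by continuity on $[0,T^*]$) together with the reverse triangle inequality yields $\mathrm{dist}(x,\mathcal{O}) \geq \|x-x^*\| - d$, where $d:=\sup_{y\in\mathcal{O}}\|y-x^*\|<\infty$; choosing $R=2d$ gives $g\geq 1/2$ throughout this regime.

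The main obstacle is the small-neighborhood regime $\{0<\|x-x^*\|<\epsilon\}$, where transversality (assumption A.\ref{ass:orbit-transversal}) is the essential ingredient---without it, $\mathcal{O}$ could meet $\mathcal{S}$ tangentially at $x^*$ and $g$ could vanish in the limit. The approach is a first-order Taylor expansion. Since $\varphi(\cdot,\Delta(x^*,0),0)$ is $C^2$, the branch of $\mathcal{O}$ reaching $x^*$ admits the parametrization $\gamma(s)=x^*+s\,f(x^*,0)+O(s^2)$ for small $s\leq 0$. I would first argue that, for $x$ close enough to $x^*$, the nearest point of $\overline{\mathcal{O}}$ lies on this local branch; this uses the fact that $\overline{\mathcal{O}}\setminus V$ is a compact set disjoint from $\mathcal{S}$ for any open neighborhood $V$ of $x^*$, so $\mathrm{dist}(\overline{\mathcal{O}}\setminus V,\mathcal{S})>0$. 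For $x\in\mathcal{S}$ near $x^*$, the constraint $H(x)=0$ combined with $\nabla H(x^*)\neq 0$ forces the component of $x-x^*$ along $\nabla H(x^*)$ to be $O(\|x-x^*\|^2)$. Minimizing $\|x-\gamma(s)\|^2$ over $s$ and using that the transversal angle $\theta$ between $f(x^*,0)$ and $T_{x^*}\mathcal{S}$ satisfies
\[
\sin^2\theta \,=\, \bigl(L_f H(x^*,0)\bigr)^2 \big/ \bigl(\|\nabla H(x^*)\|^2\,\|f(x^*,0)\|^2\bigr) \,>\, 0,
\]
I expect the leading-order lower bound $\mathrm{dist}(x,\mathcal{O}) \geq \|x-x^*\|\sin\theta + o(\|x-x^*\|)$. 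Shrinking $\epsilon$ if necessary yields $g(x)\geq (\sin\theta)/2$ on the small-neighborhood regime, and taking $\lambda$ to be any positive constant strictly below the minimum of the three regime-wise bounds and below $1$ completes the argument.
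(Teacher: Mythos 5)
Your proposal is correct and follows essentially the same route as the paper: the upper bound via $x^*\in\overline{\mathcal{O}}$, and the lower bound via the same three-regime split --- a local Taylor/transversality argument near $x^*$ (the paper's Lemma~\ref{lem:equiv-norm-local}, including localizing the nearest point to the branch through $x^*$ and the quadratic projection estimate onto $T_{x^*}\mathcal{S}$), compactness of the ratio on an intermediate annulus, and boundedness of $\mathcal{O}$ in the far region. The only differences are presentational, e.g.\ you obtain the far-field bound $g\geq 1/2$ directly from the reverse triangle inequality with $R=2d$, whereas the paper phrases it as the ratio tending to $1$ at infinity.
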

\noindent The proof of Proposition~\ref{prop:equiv-norm} is detailed in Section~\ref{sec:prop-proof} below. 

Proposition~\ref{prop:equiv-norm} can be used to show that solutions in a small enough neighborhood of a LISS periodic orbit $\mathcal{O}$ and for sufficiently small continuous and discrete input signals do not exhibit beating and Zeno behavior, and exist indefinitely. This statement is made precise by the following proposition, a proof of which can be found in Appendix~\ref{app:existence}. 
\begin{prop}\label{prop:global-existence}
Consider the system \eqref{eq:sys-imp-eff} which satisfies assumptions A.\ref{ass:f-c2}-A.\ref{ass:orbit-transversal}. Suppose that the solutions of \eqref{eq:sys-imp-eff}, denoted by $x(t)=\psi(t,x(0),u,\bar{v})$ and defined in Section~\ref{subsec:system}, satisfy Definition~\ref{def:orbit-LISS}. Then, there exists a $\delta>0$ such that for all $x(0)\in\mathcal{S}^+$ with $\mathrm{dist}(x(0),\mathcal{O})<\delta$, $u\in\mathcal{U}$ with $\|u\|_\infty<\delta$, and $\bar{v}\in\mathcal{V}$ with $\|\bar{v}\|_\infty<\delta$ the following holds:
\begin{itemize}
\item[(i)] $x(t)$ has no consecutive discrete jumps,
\item[(ii)] $x(t)$ does not exhibit Zeno behavior, and
\item[(iii)]$x(t)$ exists for all $t\geq 0$.
\end{itemize}
\end{prop}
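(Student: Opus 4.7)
The plan is to exploit the LISS bound \eqref{eq:orbit-LISS-ineq} to confine $x(t)$ to a compact tubular neighborhood of $\mathcal{O}$ on any interval of existence, then use assumptions A.\ref{ass:no-zeno}--A.\ref{ass:orbit-transversal} together with Lemma~\ref{lem:TI-cont} and Proposition~\ref{prop:equiv-norm} to rule out beating and Zeno, and finally invoke a standard ODE continuation argument to conclude global existence. A bootstrapping argument will be needed to dissolve an apparent circularity in Definition~\ref{def:orbit-LISS}.

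First, I would shrink the $\delta>0$ furnished by Definition~\ref{def:orbit-LISS} so that the right-hand side of \eqref{eq:orbit-LISS-ineq} is uniformly bounded by some small $\varepsilon>0$ on $[0,t_\mathrm{f})$. Then $x(t)$ remains in the compact tube $\mathcal{N}_\varepsilon:=\{x\in\mathbb{R}^n : \mathrm{dist}(x,\mathcal{O})\le\varepsilon\}$. Because $\mathcal{S}\cap\overline{\mathcal{O}}=\{x^*\}$ by A.\ref{ass:x*}, choosing $\varepsilon$ small forces the cross-section $\mathcal{S}\cap\mathcal{N}_\varepsilon$ to sit inside any prescribed Euclidean neighborhood of $x^*$; Proposition~\ref{prop:equiv-norm} translates the orbital closeness on $\mathcal{S}$ guaranteed by the LISS inequality into the Euclidean closeness to $x^*$ that Lemma~\ref{lem:TI-cont} requires.

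For claim~(i), A.\ref{ass:no-zeno} supplies $d^*:=\mathrm{dist}(\Delta(x^*,0),\mathcal{S})>0$; continuity of $\Delta$ and of the point-to-set distance then yields $\mathrm{dist}(\Delta(x_k,v_k),\mathcal{S})\ge d^*/2$ whenever $x_k$ is close to $x^*$ and $\|v_k\|$ is small, so the post-impact state is strictly interior to $\mathcal{S}^+$ and the flow requires strictly positive time to return to $\mathcal{S}$, ruling out consecutive jumps. For claim~(ii), Lemma~\ref{lem:TI-cont} gives continuity of $T_\mathrm{I}$ at $(x^*,0,0)$ with $T_\mathrm{I}(x^*,0,0)=T^*>0$; after a further shrinkage of $\delta$ we obtain $T_\mathrm{I}(x_k,u_k,v_k)\ge T^*/2$ uniformly in $k$, so at most finitely many impacts occur in any bounded window and Zeno is precluded. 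For claim~(iii), on each open interval $(t_k,t_{k+1})$ the solution satisfies $\dot x=f(x,u(t))$ with $x(t)\in\mathcal{N}_\varepsilon$; since $f$ is $C^2$ and hence bounded on the compact set $\mathcal{N}_\varepsilon\times\overline{B_\delta(0)}$, standard ODE continuation ensures the flow cannot blow up and extends up to $t_{k+1}$. The impact then returns the state to $\mathcal{N}_\varepsilon$ by A.\ref{ass:Delta}, and (ii) guarantees $t_k\to\infty$.

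The step I expect to be the main obstacle is resolving the subtle circularity between Definition~\ref{def:orbit-LISS} and the statement of Proposition~\ref{prop:global-existence}: the LISS bound formally presupposes a solution on some $[0,t_\mathrm{f})$, yet the proposition concludes that $t_\mathrm{f}$ may be taken arbitrarily large. A bootstrapping argument handles this. Local existence of the continuous phase (Lemma~\ref{lem:frechet}) together with the well-definedness of $\Delta$ produces a nontrivial initial interval of existence; on that interval the LISS inequality supplies the a priori estimate that feeds (i)--(iii), and these in turn strictly extend the solution past each impact with a uniform lower bound on interimpact times. Iterating this extension indefinitely yields a solution defined on $[0,\infty)$.
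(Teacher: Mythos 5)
Your proposal is correct and follows essentially the same route as the paper's proof: the LISS estimate together with Proposition~\ref{prop:equiv-norm} confines the pre-impact states $x_k\in\mathcal{S}$ to a small Euclidean ball around $x^*$, continuity of $T_{\rm I}$ (Lemma~\ref{lem:TI-cont}) then yields a uniform positive lower bound on inter-impact times that rules out beating and Zeno, and trapping in a compact tube plus a standard continuation argument gives existence for all $t\geq 0$. The only cosmetic difference is that you exclude consecutive jumps directly via A.\ref{ass:no-zeno} and continuity of $\Delta$, while the paper obtains both (i) and (ii) from the lower bound on $T_{\rm I}$; your bootstrapping remark is just an explicit statement of the same continuation logic the paper uses.
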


Now we are ready to present the main result of the paper.  
\begin{theorem}\label{thm:LISS-equivalence}
Consider the system \eqref{eq:sys-imp-eff} which satisfies assumptions A.\ref{ass:f-c2}-A.\ref{ass:delta-C-1} and possesses a periodic orbit $\mathcal{O}$ that is defined as in \eqref{eq:orbit-def} and satisfies assumptions A.\ref{ass:no-zeno}-A.\ref{ass:orbit-transversal}. Then, the following are equivalent.
\begin{itemize}
\item[(i)] $\mathcal{O}$ is an LISS orbit of \eqref{eq:sys-imp-eff};
\item[(ii)] $x^*$ is an LISS fixed point of \eqref{eq:poinc-dyn}.
\end{itemize}
\end{theorem}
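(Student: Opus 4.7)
The plan is to prove the two implications separately, with Proposition~\ref{prop:equiv-norm} acting as the bridge between the orbital distance $\mathrm{dist}(\cdot,\mathcal{O})$ and the Euclidean distance $\|\cdot-x^*\|$ on $\mathcal{S}$. The Fr\'echet smoothness of the continuous flow (Lemma~\ref{lem:frechet}) and of the time-to-impact map (Lemma~\ref{lem:TI-cont}) provide the quantitative estimates needed to relate a single Poincar\'e step to the continuous-time behavior between consecutive impacts, while Proposition~\ref{prop:global-existence} ensures that all solutions considered exist for all time and that the impact sequence is well defined.

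\textbf{(i) $\Rightarrow$ (ii).} This is the more direct direction. Given $x_0\in\mathcal{S}$ near $x^*$, I would initialize the hybrid system at the post-impact state $\Delta(x_0,v_0)\in\mathcal{S}^+$, which lies in $\mathcal{S}^+$ for small enough $v_0$ by continuity of $\Delta$ and A.\ref{ass:no-zeno}. Evaluating the orbital LISS estimate of Definition~\ref{def:orbit-LISS} at each pre-impact time $t_k^-$ bounds $\mathrm{dist}(x_k,\mathcal{O})$, and Proposition~\ref{prop:equiv-norm} converts this into a bound on $\lambda\|x_k-x^*\|$. The starting orbital distance is bounded by $\|\Delta(x_0,v_0)-\Delta(x^*,0)\|$ because $\Delta(x^*,0)\in\mathcal{O}$, and Lipschitz continuity of $\Delta$ splits this into a term in $\|x_0-x^*\|$ and one in $\|v_0\|\leq\|\bar v\|_\infty$, the latter absorbed into the $\alpha_2$ term. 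Finally, the continuous-time $\mathcal{KL}$ function $\beta(\cdot,t)$ is converted to a discrete $\mathcal{KL}$ function $\beta_P(\cdot,k)$ using a uniform lower bound $t_k\geq k\,T_{\min}$, which follows from continuous differentiability of $T_{\rm I}$ at $(x^*,0,0)$ guaranteed by Lemma~\ref{lem:TI-cont} after a suitable shrinking of the relevant neighborhoods.

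\textbf{(ii) $\Rightarrow$ (i).} The Poincar\'e LISS estimate provides $\|x_k-x^*\|\leq\beta_P(\|x_0-x^*\|,k)+\alpha_1^P(\|u\|_\infty)+\alpha_2^P(\|\bar v\|_\infty)$ at each pre-impact time. On each inter-impact interval $[t_k,t_{k+1})$, the hybrid solution equals $\varphi(t-t_k,\Delta(x_k,v_k),u_k)$ by Remark~\ref{rem:hyb-cont-sol-rel}, and I would compare it with the unperturbed flow $\varphi(t-t_k,\Delta(x^*,0),0)\in\mathcal{O}$ using continuous Fr\'echet differentiability of $\varphi$ in its arguments (Lemma~\ref{lem:frechet}). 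Over a finite interval whose length is controlled by an upper bound on $T_{\rm I}$ near the nominal data, a Gronwall-type estimate gives $\mathrm{dist}(x(t),\mathcal{O})\leq c_1\|x_k-x^*\|+c_2\|v_k\|+c_3\|u_k\|_\infty$. Substituting the Poincar\'e estimate for $\|x_k-x^*\|$ and converting the discrete $\mathcal{KL}$ decay to a continuous $\mathcal{KL}$ decay, via the uniform impact-time bounds, yields the orbital LISS estimate on every $[t_k,t_{k+1})$. The initial interval $[0,t_0)$, for which $x(0)\in\mathcal{S}^+$ is \emph{not} a post-impact state, is handled by the forthcoming comparison Lemma~\ref{lem:solution-compare-S+} alluded to in the introduction: it compares the solution starting from a general point of $\mathcal{S}^+$ to the nominal orbit flow starting from the closest point of $\mathcal{O}$.

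\textbf{Main obstacle.} The crux of the argument is the careful bookkeeping required to translate a discrete $\mathcal{KL}$ estimate in step index $k$ to a continuous-time $\mathcal{KL}$ estimate in $t$ (and vice versa), while simultaneously keeping the Gronwall constants on each inter-impact interval uniform. This requires restricting attention to neighborhoods of $x^*$ and input norms small enough that the flow time between impacts is bounded both above and below and that the smoothness estimates hold uniformly---exactly what Lemma~\ref{lem:TI-cont}, Lemma~\ref{lem:frechet}, Proposition~\ref{prop:global-existence} and Proposition~\ref{prop:equiv-norm} are designed to provide. A secondary subtlety, not present in the classical unforced theorem of~\cite{morris2005restricted}, is that $u_k$ is an element of the infinite-dimensional Banach space $(\mathcal{U},\|\cdot\|_\infty)$, so the continuous-dependence comparisons must be carried out in the Fr\'echet sense.
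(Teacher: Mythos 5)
Your proposal follows essentially the same route as the paper: Proposition~\ref{prop:equiv-norm} is used to commute between $\mathrm{dist}(\cdot,\mathcal{O})$ and $\|\cdot-x^*\|$ on $\mathcal{S}$, local Lipschitz continuity of $\Delta$ handles the reset term, uniform two-sided bounds on the time-to-impact (via Lemma~\ref{lem:TI-cont}, realized in the paper through Lemma~\ref{lem:solution-compare-S}) translate decay in $k$ to decay in $t$ and back, a Gronwall-style comparison of $\varphi(t,\Delta(x_k,v_k),u_k)$ against the nominal flow bounds the inter-impact drift, and the initial segment $[0,t_0)$ is treated separately using what the paper formalizes as Lemma~\ref{lem:solution-compare-S+}. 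This matches the paper's decomposition (including its Cases (a)--(c) in the reverse direction), so no substantive gap is present.
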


It is straightforward to note that in the absence of inputs ($u\equiv 0$, $\bar{v}\equiv 0$), Theorem~\ref{thm:LISS-equivalence} reduces to the Poincar\'e result for  asymptotic stability of periodic orbits of systems with impulse effects, providing an alternative  proof to \cite[Theorem~1]{grizzle2001asymptotically}. Note though that the proof detailed in the following sections explicitly constructs the class-$\mathcal{KL}$ functions involved in the definitions, thereby providing useful insight on the rates of convergence. The following result can be stated as an immediate corollary of Theorem~\ref{thm:LISS-equivalence}.
\begin{corollary}\label{cor:ES-equivalence}
Under the assumptions of Theorem~\ref{thm:LISS-equivalence}, the following are equivalent
\begin{itemize}
\item[(i)] $\mathcal{O}$ is an LES 0-input orbit of \eqref{eq:sys-imp-eff};
\item[(ii)] $x^*$ is an LES 0-input fixed point of \eqref{eq:poinc-dyn}.
\end{itemize}
\end{corollary}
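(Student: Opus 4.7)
The plan is to deduce Corollary~\ref{cor:ES-equivalence} by specializing the construction in the proof of Theorem~\ref{thm:LISS-equivalence} to the unforced setting ($u \equiv 0$, $\bar{v} \equiv 0$) and then verifying that a geometric rate in the discrete Poincar\'e estimate corresponds to an exponential rate in the orbital estimate, and conversely. The key auxiliary facts I would use are Proposition~\ref{prop:equiv-norm}, continuous differentiability of $\varphi$ and $\Delta$, and the fact that for hybrid solutions sufficiently close to $\mathcal{O}$ the inter-impact times $t_{k+1}-t_k$ are uniformly bounded in an interval $[T_{\min}, T_{\max}]$ with $0 < T_{\min} \leq T_{\max} < \infty$; the latter is a consequence of transversality A.\ref{ass:orbit-transversal} and continuity of $\hat{T}_{\rm I}$ on a neighborhood of $\mathcal{O}$. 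I would treat the two implications separately.

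For (ii)$\Rightarrow$(i), starting from $\|x_k - x^*\| \leq N\rho^k\|x_0 - x^*\|$, I first apply Proposition~\ref{prop:equiv-norm} on both sides to obtain $\mathrm{dist}(x_k, \mathcal{O}) \leq (N/\lambda)\rho^k \mathrm{dist}(x_0, \mathcal{O})$. For $t \in [t_k, t_{k+1})$ the hybrid solution reads $x(t) = \varphi(t - t_k, \Delta(x_k, 0), 0)$, and continuous differentiability of $\varphi$ and $\Delta$ on a compact tube about $\mathcal{O}$ yields $\mathrm{dist}(x(t), \mathcal{O}) \leq L\,\mathrm{dist}(x_k, \mathcal{O})$ for some Lipschitz constant $L > 0$ independent of $k$ and $t$. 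Since $k \geq (t - t_0)/T_{\max}$, we get $\rho^k \leq C e^{-\omega t}$ with $\omega := -\ln\rho /T_{\max} > 0$. Finally, continuous dependence of the flow on the initial condition over the bounded interval $[0, t_0]$ lets me absorb the passage from $x(0) \in \mathcal{S}^+$ to $x_0 \in \mathcal{S}$ into the constant, yielding the bound $\mathrm{dist}(x(t),\mathcal{O}) \leq N' e^{-\omega t}\mathrm{dist}(x(0),\mathcal{O})$ required by Definition~\ref{def:orbit-ES}.

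For (i)$\Rightarrow$(ii), given $x_0 \in \mathcal{S}$ near $x^*$, I set $x(0) := \Delta(x_0, 0) \in \mathcal{S}^+$ by A.\ref{ass:no-zeno}. Since $\Delta(x^*, 0) \in \mathcal{O}$ and $\Delta$ is $C^1$, we have $\mathrm{dist}(x(0), \mathcal{O}) \leq L_\Delta\|x_0 - x^*\|$. Taking left limits in the orbital LES estimate at the impact instants (the limits exist by property~(ii) in Section~\ref{subsec:system}) and combining with the lower inequality of Proposition~\ref{prop:equiv-norm} gives $\lambda\|x_k - x^*\| \leq \mathrm{dist}(x_k, \mathcal{O}) \leq N L_\Delta\, e^{-\omega t_k}\|x_0 - x^*\|$ for $k \geq 1$, while the case $k=0$ is immediate. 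The lower bound $t_k \geq k T_{\min}$ then produces the desired geometric estimate with rate $\tilde{\rho} := e^{-\omega T_{\min}} \in (0,1)$ and prefactor $NL_\Delta/\lambda$.

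The main obstacle I anticipate is the clean conversion between continuous-time exponential and discrete-index geometric rates, which hinges on the uniform two-sided bounds on inter-impact times near $\mathcal{O}$; these follow from transversality and continuity of $\hat{T}_{\rm I}$, but must be established on an open tube about the orbit rather than at the single point $x^*$. A second delicate point is reconciling the two distance metrics appearing in Definitions~\ref{def:orbit-ES} and~\ref{def:poinc-ES} (orbital distance vs.~Euclidean distance from $x^*$) and the different initial-condition sets ($\mathcal{S}^+$ vs.~$\mathcal{S}$); Proposition~\ref{prop:equiv-norm} together with $C^1$-regularity of $\Delta$ and $\varphi$ handles both aspects uniformly in a neighborhood of $\mathcal{O}$.
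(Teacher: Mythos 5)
Your proposal is correct and follows essentially the same route as the paper, which proves the corollary by specializing the proof of Theorem~\ref{thm:LISS-equivalence} to $u\equiv 0$, $\bar{v}\equiv 0$ and converting rates via the uniform inter-impact time bounds (the paper notes the choice $\rho=\mathrm{e}^{-\omega\overline{T}}$ for the direction (ii)$\implies$(i)). Your explicit use of Proposition~\ref{prop:equiv-norm}, the per-interval distance bounds (Lemmas~\ref{lem:solution-compare-S} and~\ref{lem:solution-compare-S+} with zero inputs), and the two-sided bounds $\underline{T}\leq t_{k+1}-t_k\leq\overline{T}$ is exactly the machinery the paper's theorem proof provides.
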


The following remarks are in order.

\begin{rem}\label{rem:grizzle-proof}
The equivalence between ES of a periodic orbit and ES of the corresponding fixed point of the associated Poincar\'e map has been discussed in~\cite[Theorem~1]{morris2005restricted}, which has been subsequently used in a number of relevant publications, e.g.,\cite{ames2014rapidly, poulakakis2009spring, sreenath2013embedding, ames2007geometric}, and many more. However, \cite[Theorem~1]{morris2005restricted} is proved only for initial states in the Poincar\'e section, as noted above \cite[Equation (6)]{morris2005restricted}, rather than for initial states in a neighborhood of the \emph{entire} orbit, as Definition~\ref{def:orbit-ES} requires. Furthermore, Proposition~\ref{prop:equiv-norm}, which is crucial for commuting between Definition~\ref{def:orbit-ES} and Definition~\ref{def:poinc-ES} is omitted in the proof of~\cite[Theorem~1]{morris2005restricted}, resulting in the estimate in~\cite[Equation~(6)]{morris2005restricted}  being incomplete; the final estimate should have been expressed in terms of $\mathrm{dist}(x, \mathcal{O})$, which requires the use of Proposition~\ref{prop:equiv-norm}.
\end{rem}
\begin{rem}
It should be emphasized that the results of this paper can be used to study limit-cycle solutions of continuous-time forced systems like \eqref{eq:cont-dyn} by replacing the discrete update map $\Delta$ with the identity map for the $x$ component and the zero map for the $v$ component.
\end{rem} 

Theorem~\ref{thm:LISS-equivalence} can be used to establish LISS of a periodic orbit of \eqref{eq:sys-imp-eff} on the basis of LISS of a fixed point of the associated Poincar\'e map \eqref{eq:poinc-dyn}. However, in many  applications---see Section~\ref{sec:sim} for an example---the lack of analytical expressions for the forced Poincar\'e map makes it challenging to establish LISS for a fixed point of it. To alleviate this issue, the following theorem provides a tool for establishing that a 0-input fixed point $x^*$ of the forced Poincar\'e map \eqref{eq:poinc-dyn} is LISS by showing that $x^*$ is a LAS fixed point of the unforced Poincar\'e map. Hence, one can simply linearize the unforced Poincar\'e map and compute the eigenvalues of the associated linearization. If all the eigenvalues lie within the unit disc, the corresponding fixed point is a LAS fixed point of the unforced Poincar\'e map. Then, Theorem~\ref{thm:LAS-LISS} ensures that $x^*$ is a LISS fixed point of the forced Poincar\'e map and Theorem~\ref{thm:LISS-equivalence} establishes LISS of the associated periodic orbit. A result similar to Theorem~\ref{thm:LAS-LISS} can be found in \cite{sontag1996new} for continuous-time systems; however, to the best of our knowledge, we have not seen such a result for discrete systems and we provide it below.
\begin{theorem}\label{thm:LAS-LISS}
Consider the discrete dynamical system \eqref{eq:poinc-dyn}. Let $\delta>0$ such that $P$ is continuously differentiable in the Fr\'echet sense for $x\in B_\delta(x^*)\cap\mathcal{S}$, $u\in\mathcal{U}$ with $\|u\|_\infty<\delta$, and $v\in B_\delta(0)$. Then, the following are equivalent.
\begin{itemize}
\item[(i)] $x^*$ is an LISS fixed point of \eqref{eq:poinc-dyn};
\item[(ii)] $x^*$ is an LAS 0-input fixed point of \eqref{eq:poinc-dyn}.
\end{itemize}
\end{theorem}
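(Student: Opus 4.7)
The plan is to handle the two implications separately: direction (i) $\Rightarrow$ (ii) is immediate, while direction (ii) $\Rightarrow$ (i) will be obtained by combining the Fr\'echet differentiability of $P$ with a discrete-time smooth converse Lyapunov theorem. For the easy direction, setting $u\equiv 0$ and $\bar{v}\equiv 0$ in Definition~\ref{def:poinc-LISS} makes $\alpha_1(0)$ and $\alpha_2(0)$ vanish (since $\alpha_1,\alpha_2\in\mathcal{K}$), leaving $\|x_k-x^*\|\leq\beta(\|x_0-x^*\|,k)$, which is exactly the LAS condition of Definition~\ref{def:poinc-ES}.

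For (ii) $\Rightarrow$ (i) I would treat the forced map as a small perturbation of the unforced one. First, invoke a discrete-time smooth converse Lyapunov theorem (of Jiang--Wang type) for the LAS unforced map $P_0(\cdot):=P(\cdot,0,0)$. This yields a smooth Lyapunov function $V$ on a neighborhood of $x^*$ in $\mathcal{S}$ and class-$\mathcal{K}$ functions $\underline{\alpha},\bar{\alpha},\alpha_3$ satisfying
\[
\underline{\alpha}(\|x-x^*\|)\leq V(x)\leq \bar{\alpha}(\|x-x^*\|), \qquad V(P_0(x))-V(x)\leq -\alpha_3(\|x-x^*\|).
\]

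Next, the continuous Fr\'echet differentiability of $P$ at $(x^*,0,0)$ furnishes a local perturbation bound. Writing
\[
P(x,u,v)-P(x,0,0)=\int_0^1\bigl(D_uP(x,su,sv)[u]+D_vP(x,su,sv)[v]\bigr)\,ds
\]
and invoking continuity (hence local boundedness) of the Fr\'echet derivatives yields $\|P(x,u,v)-P(x,0,0)\|\leq L_u\|u\|_\infty+L_v\|v\|$ near $(x^*,0,0)$. Smoothness of $V$ provides a local Lipschitz constant $L_V$, and combining these estimates with the converse-Lyapunov decrement produces the ISS-Lyapunov dissipation inequality
\[
V(P(x,u,v))-V(x)\leq -\alpha_3(\|x-x^*\|)+L_VL_u\|u\|_\infty+L_VL_v\|v\|
\]
valid on a suitably small neighborhood of $x^*$ and for sufficiently small inputs.

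From this inequality, LISS follows by the standard discrete-time ISS-Lyapunov machinery: whenever $\alpha_3(\|x-x^*\|)\geq 2L_V(L_u\|u\|_\infty+L_v\|v\|)$, $V$ decreases by a definite amount, so trajectories are attracted into a sublevel set whose radius is a class-$\mathcal{K}$ function of $(\|u\|_\infty,\|\bar{v}\|_\infty)$; combining the two regimes and translating between $V$ and $\|\cdot\|$ via $\underline{\alpha},\bar{\alpha}$ yields the estimate \eqref{eq:poinc-LISS}, with the $\mathcal{KL}$ part inherited from the unforced decrement and the gains split additively between the continuous and discrete inputs. The principal obstacle I anticipate is the converse-Lyapunov step itself: because $\partial_x P(x^*,0,0)$ may have eigenvalues on the unit circle while $x^*$ is still LAS, a purely linear-perturbation argument does not suffice and one genuinely needs the smooth converse Lyapunov theorem for LAS discrete systems. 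A secondary technical point is verifying that the Fr\'echet-derivative-based constants $L_u,L_v$ are uniform over a neighborhood of $(x^*,0,0)$ in the Banach space $(\mathcal{U},\|\cdot\|_\infty)$, which follows from continuity of the Fr\'echet derivatives but requires explicit checking.
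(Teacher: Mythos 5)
Your proposal is correct and follows essentially the same route as the paper: the easy direction by setting the inputs to zero, and (ii)$\Rightarrow$(i) via the Jiang--Wang smooth converse Lyapunov theorem for the unforced map combined with a local Lipschitz bound on the input-dependence of $V\circ P$ (the paper bounds $V\circ P$ directly rather than splitting into a Fr\'echet-derivative bound on $P$ and a Lipschitz bound on $V$, and then cites the discrete ISS-Lyapunov lemma rather than re-running the sublevel-set argument, but these are cosmetic differences).
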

\noindent A proof for Theorem~\ref{thm:LAS-LISS} is presented in Section~\ref{sec:thm-proof}.

\section{Proof of Proposition~\ref{prop:equiv-norm}}
\label{sec:prop-proof}

The proof of Proposition~\ref{prop:equiv-norm} is organized in a sequence of lemmas.
We begin with a lemma which establishes that for $\mathcal{O}$, the point-to-set distance is equal to the minimum Euclidean distance over the closure of the orbit. As the minimum will be attained by some point in $\overline{\mathcal{O}}$, Lemma~\ref{lem:inf-min} allows us to work with the Euclidean distance from that point instead of dealing with $\inf_{y\in\mathcal{O}} \|x-y\|$.

\begin{lemma}\label{lem:inf-min}
Let $\mathcal{O}$ be defined as in \eqref{eq:orbit-def} and satisfy assumptions A.\ref{ass:no-zeno}-A.\ref{ass:orbit-transversal}, then for all $x\in\mathbb{R}^n$, we have
\begin{equation}\nonumber 
\mathrm{dist}(x,\mathcal{O}):=\inf_{y\in\mathcal{O}} \|x-y\| = \min_{y\in\overline{\mathcal{O}}} \|x-y\| \enspace.
\end{equation}
\end{lemma}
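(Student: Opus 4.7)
The plan is to leverage compactness: if $\overline{\mathcal{O}}$ is compact, then the continuous function $y\mapsto\|x-y\|$ attains its minimum on it, and we can compare that minimum against the infimum over $\mathcal{O}$ using density of $\mathcal{O}$ in $\overline{\mathcal{O}}$.

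First I would establish that $\overline{\mathcal{O}}$ is compact. By assumption A.\ref{ass:orbit-exist}, the flow $\varphi(\cdot,\Delta(x^*,0),0)$ exists on the closed interval $[0,T^*]$, and by A.\ref{ass:f-c2} together with Lemma~\ref{lem:frechet} it is continuous in $t$. Hence the image of $[0,T^*]$ under this flow is compact. Combining this with A.\ref{ass:orbit-exist} ($\varphi(T^*,\Delta(x^*,0),0)=x^*$) and A.\ref{ass:x*} ($\mathcal{S}\cap\overline{\mathcal{O}}=\{x^*\}$), one identifies
\begin{equation}\nonumber
\overline{\mathcal{O}}=\{\varphi(t,\Delta(x^*,0),0)\mid t\in[0,T^*]\}=\mathcal{O}\cup\{x^*\},
\end{equation}
so $\overline{\mathcal{O}}$ is indeed compact in $\mathbb{R}^n$.

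Next, fix $x\in\mathbb{R}^n$ and observe that $y\mapsto\|x-y\|$ is continuous on $\overline{\mathcal{O}}$, so by the extreme value theorem it attains its infimum at some $y^\star\in\overline{\mathcal{O}}$; that is, $\min_{y\in\overline{\mathcal{O}}}\|x-y\|$ exists and equals $\|x-y^\star\|$. The inequality $\min_{y\in\overline{\mathcal{O}}}\|x-y\|\leq\inf_{y\in\mathcal{O}}\|x-y\|$ is immediate from $\mathcal{O}\subseteq\overline{\mathcal{O}}$.

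For the reverse inequality, I would split into two cases. If $y^\star\in\mathcal{O}$, then $\inf_{y\in\mathcal{O}}\|x-y\|\leq\|x-y^\star\|$ trivially. Otherwise $y^\star=x^*$, and by continuity of $\varphi(\cdot,\Delta(x^*,0),0)$ at $t=T^*$, the sequence $y_n:=\varphi(T^*-1/n,\Delta(x^*,0),0)\in\mathcal{O}$ satisfies $y_n\to x^*$, so $\|x-y_n\|\to\|x-x^*\|=\|x-y^\star\|$, giving $\inf_{y\in\mathcal{O}}\|x-y\|\leq\|x-y^\star\|$. Either way the two quantities coincide, which completes the proof. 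There is no substantive obstacle here; the lemma is essentially a compactness statement, and the only subtlety is making sure the infimum-attaining point, if it lies in $\overline{\mathcal{O}}\setminus\mathcal{O}=\{x^*\}$, can still be approached by points of $\mathcal{O}$, which follows directly from the parametrization of $\overline{\mathcal{O}}$ by $[0,T^*]$.
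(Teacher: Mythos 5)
Your proposal is correct and follows essentially the same route as the paper: both rest on the decomposition $\overline{\mathcal{O}}=\mathcal{O}\cup\{x^*\}$ together with the fact that $x^*$ is a limit point of $\mathcal{O}$, so that $\inf_{y\in\mathcal{O}}\|x-y\|\leq\|x-x^*\|$. Your explicit compactness/extreme-value argument and approximating sequence $y_n\to x^*$ are just a more detailed rendering of the paper's one-line triangle-inequality step $\inf_{y\in\mathcal{O}}\|x-y\|\leq\|x-x^*\|+\inf_{y\in\mathcal{O}}\|x^*-y\|=\|x-x^*\|$.
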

\begin{proof}
Let $x\in\mathbb{R}^n$. The fact that $\overline{\mathcal{O}} = \mathcal{O} \cup \{x^*\}$ implies
\begin{equation}\label{eq:inf-min-0}
\min_{y\in\overline{\mathcal{O}}} \|x-y\| = \min \{\|x-x^*\|,~\inf_{y\in\mathcal{O}} \| x-y \|\} \enspace. 
\end{equation}
On the other hand, 
\small
\begin{align}\label{eq:inf-min-1}
\inf_{y\in\mathcal{O}} \| x-y \| \leq \inf_{y\in\mathcal{O}} (\|x-x^*\| + \| x^*-y \|) = \|x-x^*\| \enspace.
\end{align}
\normalsize
because $\inf_{y\in\mathcal{O}} \| x^*-y \| = 0$ due to $x^*\in\overline{\mathcal{O}}$. The result follows from \eqref{eq:inf-min-0} in view of \eqref{eq:inf-min-1}. 
\end{proof}

To simplify notation in the proofs that follow, the closure of the orbit $\overline{\mathcal{O}}$ is parameterized as a function $y(\tau)$ in ``time" like coordinates $\tau$ taking values in a closed interval $[0,T]$. In more detail, let $\varphi^-(t,x^*,0)$ be the solution of the 0-input continuous system \eqref{eq:cont-dyn} backwards in time from the initial state $x^*$. The flow is chosen to be backwards so that $x^*$ is at $\tau=0$. This is primarily for convenience of notation; the flow can be chosen forwards in time starting from $\Delta(x^*,0)$ as well. Then, let $\tau:=t/s$, where $s>0$ is a scaling constant, and define the function $y: [0,T] \to \overline{\mathcal{O}}$ by
\begin{align}
y(\tau) := \varphi^-(s\tau,x^*,0) \label{eq:y-tau-def}
\end{align}
where 
\begin{equation} \label{eq:T-star-scaled}
T := T^*/s \enspace .
\end{equation}
The scaling is performed to ensure that in the Taylor expansion of $y(\tau)$ about $\tau=0$, the first derivative is a vector of unit magnitude. This is done only to simplify notation in the future. Note that $y(\tau)$ should be viewed as a parameterization of the set $\overline{\mathcal{O}}$ and \emph{not} as a solution of the system $\Sigma$ in \eqref{eq:sys-imp-eff}. In fact, this section only deals with geometric properties of $\mathcal{O}$ and $\mathcal{S}$ and does not study the dynamical system as such. The following lemma provides some useful properties of $y(\tau)$.

\begin{lemma}\label{lem:y-tau}
The map $\tau \mapsto y(\tau)$ is bijective and three-times continuously differentiable in $\tau$.
\end{lemma}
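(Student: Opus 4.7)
The plan is to establish bijectivity by a uniqueness argument combined with transversality, and $C^3$ smoothness by standard ODE regularity. The key starting observation is that $\varphi^-(t,x^*,0)$ is simply the forward orbit of A.\ref{ass:orbit-exist} read in reverse: by uniqueness of solutions to \eqref{eq:cont-dyn} at $u \equiv 0$, one has $\varphi^-(t,x^*,0) = \varphi(T^* - t, \Delta(x^*,0), 0)$ for $t \in [0, T^*]$. Setting $z(t) := \varphi(t,\Delta(x^*,0),0)$, the map becomes $y(\tau) = z(T^* - s\tau)$, so both claims reduce to the corresponding properties of $z$ on $[0,T^*]$.

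Surjectivity onto $\overline{\mathcal{O}}$ is immediate from \eqref{eq:orbit-def}, since $z([0,T^*)) = \mathcal{O}$ and $z(T^*) = x^*$ by A.\ref{ass:orbit-exist}. For injectivity, I first establish the auxiliary fact $\mathcal{O} \cap \mathcal{S} = \emptyset$, and hence $x^* \notin \mathcal{O}$: the flow starts at $z(0) = \Delta(x^*,0) \in \mathcal{S}^+$ by A.\ref{ass:no-zeno}, and by A.\ref{ass:x*} any crossing of $\mathcal{S}$ by $z$ would have to occur at $x^*$; however, transversality (A.\ref{ass:orbit-transversal}) fixes the sign $L_f H(x^*,0) < 0$ at every such crossing, so $H\circ z$ can cross $\mathcal{S}$ only downward, from $\mathcal{S}^+$ into $\mathcal{S}^-$, which rules out an interior crossing followed by a return to $\mathcal{S}^+$ before $T^*$. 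Now suppose $z(t_1) = z(t_2)$ with $0 \le t_1 < t_2 \le T^*$. The boundary case $(t_1,t_2) = (0,T^*)$ is excluded by $\Delta(x^*,0)\neq x^*$, which follows from A.\ref{ass:no-zeno}. Otherwise, uniqueness for the autonomous ODE yields the shift identity $z(t + (t_2 - t_1)) = z(t)$ on the common domain, so $x^* = z(T^*) = z(T^* - (t_2 - t_1))$ with $T^* - (t_2 - t_1) \in [0, T^*)$; this places $x^*$ in $\mathcal{O}$, contradicting the auxiliary fact.

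For the $C^3$ regularity claim, I invoke the standard smoothness theorem for ODE flows: since $f \in C^2$ by A.\ref{ass:f-c2}, the relation $\dot z = f(z,0)$ lets one express the first three time derivatives of $z$ as continuous combinations of $f, Df, D^2 f$ evaluated along $z$, so $z \in C^3([0,T^*];\mathbb{R}^n)$; the chain rule then gives $y \in C^3([0,T];\mathbb{R}^n)$.

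The main obstacle is the auxiliary step $\mathcal{O} \cap \mathcal{S} = \emptyset$. This is the only place where transversality enters in an essential way, and it is needed because A.\ref{ass:x*} alone only constrains the closure $\overline{\mathcal{O}}$, leaving open a priori the possibility that $\mathcal{O}$ itself passes through $x^*$ at an interior time. Once this is secured, injectivity is a short uniqueness argument, surjectivity is definitional, and smoothness is a routine corollary of ODE regularity.
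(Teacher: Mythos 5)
Your proof is correct, but your injectivity argument is organized differently from the paper's. The paper argues by contradiction directly on a repeated value $y(\tau_1)=y(\tau_2)$ and splits into two cases: if $0<\tau_1<\tau_2$, the segment between the two times forms a periodic sub-orbit of the (backward) flow, and either the solution through $\Delta(x^*,0)$ is trapped in it forever---so it never reaches $\mathcal{S}$, contradicting A.\ref{ass:orbit-exist}---or uniqueness is violated; if $\tau_1=0$, transversality at $x^*$ forces the curve to lie in $\mathcal{S}^+$ on one side and $\mathcal{S}^-$ on the other, producing an intermediate crossing of $\mathcal{S}$ at a point other than $x^*$, contradicting A.\ref{ass:x*}. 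You instead isolate a single auxiliary fact, $\mathcal{O}\cap\mathcal{S}=\emptyset$ (hence $x^*\notin\mathcal{O}$), proved by the same transversality-plus-A.\ref{ass:x*} crossing-direction reasoning that underlies the paper's second case (run on the forward flow), and then dispose of any repetition $z(t_1)=z(t_2)$ at once via the time-shift identity for the autonomous $0$-input flow, which transports the repetition to $t=T^*$ and yields $x^*=z(T^*-(t_2-t_1))\in\mathcal{O}$, a contradiction; the boundary case is killed by A.\ref{ass:no-zeno}. This buys a cleaner structure: no sub-case analysis of a trapped loop, and the appeal to uniqueness is made precise through the shift identity rather than the paper's informal ``contradicts uniqueness of the backwards solution.'' The one compressed spot is the phrase ``rules out an interior crossing followed by a return to $\mathcal{S}^+$'': to close it, note explicitly that transversality at $t=T^*$ gives $H(z(t))>0$ just before $T^*$, so after an interior downward crossing the function $H\circ z$ would need an upward zero on $(t_1,T^*)$, yet every zero lies at $x^*$ where the crossing is downward---this is the same level of detail as the paper's own second case. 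The surjectivity and $C^3$ parts coincide with the paper's (the paper simply asserts that a $C^2$ vector field yields a $C^3$ flow, which your bootstrapping of $\dot z=f(z,0)$ makes explicit).
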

\noindent The proof of Lemma~\ref{lem:y-tau} can be found in Appendix~\ref{app:y-tau}.

Let\footnote{Recall from Section~\ref{subsec:notation} that $\mathcal{P}(\cdot)$ is the power set of its argument.} $\tau_{\rm m}: \mathbb{R}^n \to \mathcal{P}(\overline{\mathcal{O}})$ be a set-valued map defined as 
\begin{equation}\label{eq:tau-m-def}
\tau_{\rm m} (x) := \arg \min_{\tau\in[0,T]} \|x-y(\tau)\| \enspace,
\end{equation}
where $x\in\mathbb{R}^n$ and $y(\tau)$, $T$ as defined in \eqref{eq:y-tau-def} and \eqref{eq:T-star-scaled}, respectively. Intuitively, the map $x\mapsto\tau_{\rm m}(x)$ returns the \emph{set} $\tau_{\rm m}(x)$ of ``times''  $\tau$ that ``realize'' the points $y(\tau)$ on $\overline{\mathcal{O}}$ that are nearest to $x$.
Hence, for any $\tau_{\min}\in\tau_{\rm m}(x)$, we have 
\begin{equation}\label{eq:tau_min_prop}
\|x-y(\tau_{\min})\| \leq \|x-y(\tau)\| ~\mbox{for all}~ \tau\in[0,T].
\end{equation}
%
The next lemma shows that by selecting $x$ sufficiently close to $x^*$, the points on $\overline{\mathcal{O}}$ nearest to $x$ also remain close to $x^*$. 
\begin{lemma}\label{lem:limit}
Let $\tau_{\rm m}$ be defined as in \eqref{eq:tau-m-def}. Then, for every $\epsilon>0$ there exists $\delta>0$ such that $\|x-x^*\|<\delta$ implies $\tau_{\min}<\epsilon$ for all $\tau_{\min}\in \tau_{\rm m}(x)$.
\end{lemma}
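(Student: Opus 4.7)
I will argue by contradiction, exploiting compactness of $[0,T]$ together with the bijectivity and continuity of the parameterization $y$ furnished by Lemma~\ref{lem:y-tau}. The key observation is that $y(0) = \varphi^-(0, x^*, 0) = x^*$, so ``closeness to $x^*$'' is the same as ``closeness to $y(0)$''.

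Suppose the claim fails. Then there exists $\epsilon > 0$ and sequences $\{x_k\} \subset \mathbb{R}^n$ and $\{\tau_{\min,k}\}$ such that $x_k \to x^*$, $\tau_{\min,k} \in \tau_{\rm m}(x_k)$, and $\tau_{\min,k} \geq \epsilon$ for every $k$. Since $\{\tau_{\min,k}\} \subset [\epsilon, T]$ and this interval is compact, I can extract a subsequence (not relabeled) with $\tau_{\min,k} \to \tau^* \in [\epsilon, T]$. By the continuity of $y$ established in Lemma~\ref{lem:y-tau}, $y(\tau_{\min,k}) \to y(\tau^*)$.

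Next, because $\tau_{\min,k} \in \tau_{\rm m}(x_k)$, the minimizing property \eqref{eq:tau_min_prop} evaluated at $\tau = 0$ yields
\begin{equation}\nonumber
\|x_k - y(\tau_{\min,k})\| \leq \|x_k - y(0)\| = \|x_k - x^*\| \longrightarrow 0,
\end{equation}
so $y(\tau_{\min,k}) \to x^*$. Combining this with the previous paragraph gives $y(\tau^*) = x^*$. Since $y$ is bijective on $[0,T]$ by Lemma~\ref{lem:y-tau} and $y(0) = x^*$, this forces $\tau^* = 0$, contradicting $\tau^* \geq \epsilon > 0$.

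\textbf{Expected obstacle.} The argument is essentially a compactness-and-continuity routine; the only subtle point is ensuring that $y(\tau^*) = x^*$ genuinely implies $\tau^* = 0$, which is where bijectivity of $y$ on the \emph{closed} interval $[0,T]$ (rather than just on $[0,T)$) is essential. This is precisely what Lemma~\ref{lem:y-tau} provides, so I anticipate no real difficulty beyond invoking it correctly.
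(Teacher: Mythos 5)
Your proof is correct. It is, however, a genuinely different route from the paper's. The paper argues \emph{constructively}: since $y([\epsilon,T])$ is compact and hence closed, its complement is an open set containing $x^* = y(0)$; pick $\delta$ so that $B_{2\delta}(x^*)$ sits inside that complement, and then a short reverse-triangle/minimizing argument shows $y(\tau_{\min}) \in B_{2\delta}(x^*)$ for $x\in B_\delta(x^*)$, forcing $\tau_{\min}<\epsilon$ by injectivity. You instead run a sequential compactness contradiction (Bolzano--Weierstrass on $[\epsilon,T]$, pass to a limit, use continuity and injectivity of $y$). Both proofs hinge on the same three ingredients --- compactness of the parameter interval, continuity of $y$, and injectivity of $y$ together with the minimizing property at $\tau=0$ --- but the paper's version hands you an explicit $\delta$ (roughly half the distance from $x^*$ to the compact set $y([\epsilon,T])$), which is mildly useful when chaining estimates later, whereas yours is cleaner to state but non-constructive. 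One very minor omission in your write-up: you should note (as the paper does) that the claim is vacuous when $\epsilon > T$ since $\tau_{\rm m}(x)\subset[0,T]$, so it suffices to treat $\epsilon\in(0,T]$; this is what justifies placing the accumulation point $\tau^*$ in $[\epsilon,T]$.
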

%
\begin{proof}
Let $0<\epsilon<T$. The map $y(\tau)$ is continuous, so $y([\epsilon,T])$ is compact (and thus closed) in $\mathbb{R}^n$. Hence, its complement $y([\epsilon,T])^{\rm c}$ in $\mathbb{R}^n$ is open and contains $y(0)=x^*$ using the injectivity of $y(\tau)$ from Lemma~\ref{lem:y-tau}. As a result, there exists a $\delta>0$ such that $B_{2\delta}(x^*) \subset y([\epsilon,T])^{\rm c}$. 

It can be seen that for any $x\in B_\delta(x^*)$, the points on $\overline{\mathcal{O}}$ closest to $x$ will be within $B_{2\delta}(x^*)$. This follows by a contradiction argument.
Indeed, take any $x\in B_\delta(x^*)$ and let $\tau_{\min}$ be any element of the set\footnote{It is straightforward to note that $\tau_{\rm m}(x)\neq\emptyset$ since $\overline{\mathcal{O}}\neq\emptyset$ and compact.} $\tau_{\rm m}(x)$ defined in \eqref{eq:tau-m-def}. Assume $y(\tau_{\min})$ is outside $B_{2\delta}(x^*)$ so that $\|x-y(\tau_{\min})\|\geq\delta$ by the reverse triangle inequality. But, since $y(0)=x^*\in\overline{\mathcal{O}}$, by \eqref{eq:tau_min_prop} we have $\|x-y(\tau_{\min})\| \leq\|x-x^*\|<\delta$, thus resulting in a contradiction.    
%
%
It follows that for any $\tau_{\min}\in \tau_{\rm m}(x)$, $y(\tau_{\min})\in B_{2\delta}(x^*) \subset y([\epsilon,T])^{\rm c}$, and thus $\tau_{\min}<\epsilon$ as a consequence of the injectivity of $y(\tau)$ by Lemma~\ref{lem:y-tau}. As this holds for any $0<\epsilon<T$, the result trivially holds $\forall\epsilon>0$.
%
\end{proof}

Next, we present a lemma which shows that the lower bound of Proposition~\ref{prop:equiv-norm} holds locally around $x^*$.
\begin{lemma}\label{lem:equiv-norm-local}
Let $\mathcal{S}$ be defined as in \eqref{eq:S} and satisfy assumption A.\ref{ass:S}. Let $\mathcal{O}$ be as in \eqref{eq:orbit-def} and satisfy assumptions A.\ref{ass:no-zeno}-A.\ref{ass:orbit-transversal}. Then, there exists $\delta>0$ and $0<\bar{\lambda}< 1$ such that
\begin{equation}\nonumber
\mathrm{dist}(x,\mathcal{O}) \geq \bar{\lambda}\|x-x^*\| \enspace,
\end{equation}
for all $x\in B_\delta(x^*)\cap\mathcal{S}$.
\end{lemma}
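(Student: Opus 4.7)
The plan is to work with the smooth parameterization $y(\tau)$ of $\overline{\mathcal{O}}$ introduced in \eqref{eq:y-tau-def} and to reduce the distance calculation to a one-dimensional Taylor expansion in $\tau$. By Lemma~\ref{lem:inf-min} together with Lemma~\ref{lem:limit}, I can pick $\delta$ small enough so that for every $x\in B_\delta(x^*)\cap\mathcal{S}$ and every $\tau_{\min}\in\tau_{\rm m}(x)$, one has $\mathrm{dist}(x,\mathcal{O})=\|x-y(\tau_{\min})\|$ with $\tau_{\min}$ strictly interior to $[0,T]$ and as small as desired---well within the radius of convergence of the Taylor expansion of $y(\tau)$ about $\tau=0$. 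The case $\tau_{\min}=0$ yields the bound trivially for any $\bar\lambda\le 1$, so I focus on $\tau_{\min}>0$.

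The geometric heart of the argument is to exploit the transversality assumption A.\ref{ass:orbit-transversal}. Decompose $y'(0)=w_t+w_n$, where $w_t$ lies in the tangent space $\ker\!\bigl(\frac{\partial H}{\partial x}(x^*)\bigr)$ and $w_n$ is along the normal to $\mathcal{S}$ at $x^*$. Because $\mathcal{O}$ crosses $\mathcal{S}$ transversally, $w_n\neq 0$, so $\|w_t\|^2<\|y'(0)\|^2=1$. Moreover, since both $x$ and $x^*$ lie on $\mathcal{S}$, Taylor-expanding $H$ at $x^*$ using A.\ref{ass:S} gives $\frac{\partial H}{\partial x}(x^*)(x-x^*)=O(\|x-x^*\|^2)$, i.e., the component of $(x-x^*)$ along the normal to $\mathcal{S}$ is quadratically small. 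Consequently, $(x-x^*)^\top y'(0)=(x-x^*)^\top w_t+O(\|x-x^*\|^2)$, which by Cauchy--Schwarz is bounded above by $\|w_t\|\,\|x-x^*\|+O(\|x-x^*\|^2)$.

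Combining these ingredients, I would expand
\[
\|x-y(\tau_{\min})\|^2=\|x-x^*\|^2-2\tau_{\min}(x-x^*)^\top y'(0)+\tau_{\min}^2+\text{h.o.t.},
\]
use the above bound on the cross term, and complete the square in $\tau_{\min}$ to obtain
\[
\|x-y(\tau_{\min})\|^2\ \ge\ (1-\|w_t\|^2)\,\|x-x^*\|^2+\text{h.o.t.}
\]
Setting $\bar{\lambda}^2:=(1-\|w_t\|^2)/2\in(0,1/2)$ and shrinking $\delta$ so that the remainders are absorbed into the principal term yields the claim.

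The main obstacle is the error-term bookkeeping, which requires an a priori linear estimate $\tau_{\min}\le C\,\|x-x^*\|$ to ensure that the higher-order remainders are genuinely $O(\|x-x^*\|^3)$ rather than $o(1)\|x-x^*\|^2$. I would derive this estimate by combining the trivial inequality $\|x-y(\tau_{\min})\|\le\|x-y(0)\|=\|x-x^*\|$ with the first-order Taylor expansion $\|y(\tau_{\min})-x^*\|=\tau_{\min}+O(\tau_{\min}^2)$ (which uses $\|y'(0)\|=1$) and the reverse triangle inequality, giving $\tau_{\min}\le 4\|x-x^*\|$ for $\tau_{\min}$ in a small enough neighborhood of $0$. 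Once this bound is in hand, all the remainder terms are of order $\|x-x^*\|^3$ and can be dominated by the principal quadratic term for $x$ sufficiently close to $x^*$.
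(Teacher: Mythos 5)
Your proposal is correct, but it follows a genuinely different route from the paper. The paper proceeds in three stages: it first proves the bound between the tangent line $T_{x^*}\mathcal{O}$ and the tangent plane $T_{x^*}\mathcal{S}$ via the angle constant $\mu$ in \eqref{eq:plane-line}, then passes to $\mathcal{O}$ versus $T_{x^*}\mathcal{S}$ using the first-order optimality (orthogonality) condition \eqref{eq:orthogonal} at the minimizer to show $|\hat{\tau}_{\min}-\tau_{\min}|=O(\|z-x^*\|^2)$, and finally passes from $T_{x^*}\mathcal{S}$ to $\mathcal{S}$ through the projection estimate \eqref{eq:x-z-diff} of Appendix~\ref{app:S-projection}. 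You instead perform a single direct expansion of $\|x-y(\tau_{\min})\|^2$ using \eqref{eq:taylor-orbit}: the projection lemma is replaced by Taylor-expanding $H$ (by A.\ref{ass:S}, since $H(x)=H(x^*)=0$, the normal component of $x-x^*$ is quadratically small), the orthogonality condition is avoided entirely by completing the square in $\tau$ (a bound valid at any $\tau$, not just the minimizer), and transversality A.\ref{ass:orbit-transversal} enters through $\|w_n\|\neq 0$; note that your constant $\sqrt{1-\|w_t\|^2}=\|w_n\|$ is exactly the paper's $\mu$, so the two arguments extract the same geometric quantity. Your a priori estimate $\tau_{\min}\le 4\|x-x^*\|$, obtained from minimality plus the reverse triangle inequality, plays the role of the paper's \eqref{eq:t3}, which is derived from the orthogonality relation, and is if anything more elementary; your route also only needs the remainder $r(\tau)$ bounded (not $r'$), so it is slightly lighter on smoothness. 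The paper's staged version buys an explicit geometric interpretation (angle between orbit and switching surface) and reusable intermediate estimates, while yours is shorter and more self-contained; both deliver the claimed $\bar{\lambda}\in(0,1)$ after shrinking $\delta$ so the cubic remainders are absorbed, with Lemmas~\ref{lem:inf-min} and~\ref{lem:limit} justifying attainment and smallness of $\tau_{\min}$ in both cases. One cosmetic remark: the phrase ``radius of convergence'' is unnecessary—the expansion \eqref{eq:taylor-orbit} with bounded remainder holds on all of $[0,T]$ by finite differentiability—and interiority of $\tau_{\min}$ is never actually needed in your argument, only its smallness.
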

\begin{proof}
This proof is structured as follows. We begin by establishing the desired inequality for states restricted to the vector spaces $T_{x^*} \mathcal{O}$ (tangent line to $\mathcal{O}$ at $x^*$) and $T_{x^*} \mathcal{S}$ (tangent plane to $\mathcal{S}$ at $x^*$) and subsequently introduce non-linearities one-by-one. First, we extend the result to $\mathcal{O}$ and $T_{x^*}\mathcal{S}$, and finally, we extend the result to $\mathcal{O}$ and $\mathcal{S}$. A geometric illustration of the setup can be seen in Fig.~\ref{fig:geometric-illus}. 

Performing Taylor's expansion \cite[p.~349]{lang1993real} of $y(\tau)$ about $\tau=0$, we get
\begin{equation}\label{eq:taylor-orbit}
y(\tau) = x^* + \tau \nu + \tau^2 r(\tau)
\end{equation}
where $\nu$ is a unit vector and $\tau^2 r(\tau)$ is the remainder. The scaling factor $s$ in the definition of $\tau$ above \eqref{eq:y-tau-def} is chosen to ensure that $\nu$ has unit length. Let $y'(\tau)$, $r'(\tau)$ be shorthand for $dy/d\tau$ and $dr/d\tau$, respectively. As $y(\tau)$ is three-times continuously differentiable by Lemma~\ref{lem:y-tau}, $r(\tau)$ and $r'(\tau)$ are continuous on $\tau\in[0,T]$, hence there exists $M_r\geq 0$ such that $\|r(\tau)\| \leq M_r$ and $\|r'(\tau)\| \leq M_r$ for all $\tau\in[0,T]$.

\noindent \emph{(i) $T_{x^*} \mathcal{O}$ and $T_{x^*} \mathcal{S}$}\\
The tangent line of $\mathcal{O}$ at $x^*$ is $T_{x^*} \mathcal{O}:=\{ x^*+\tau \nu ~|~ \tau\in\mathbb{R} \}$ as illustrated in Fig.~\ref{fig:geometric-illus}. Given any $z\in T_{x^*}\mathcal{S}$, we have ${\rm dist}(z, T_{x^*}\mathcal{O}) = \inf_{\tau \in \mathbb{R}} \| z - (x^*+ \tau \nu) \|$ and the point on $T_{x^*} \mathcal{O}$ closest to $z$ can be  obtained by projecting the vector $z-x^*$ along the unit vector $\nu$. Specifically, the point on $T_{x^*} \mathcal{O}$ closest to $z$ is given by $x^*+\hat{\tau}_{\min}(z)\nu$ with 
\begin{equation}\label{eq:tangent-min}
\hat{\tau}_{\min}(z) = \langle z-x^*,\nu \rangle \enspace,
\end{equation}
where $\langle\cdot,\cdot\rangle$ represents the inner product in $\mathbb{R}^n$. 
Consider now the right triangle with vertices at $x^*$, $z$, and $x^*+\hat{\tau}_{\min}(z)\nu$. Then, $\|z-(x^*+\hat{\tau}_{\min}(z)\nu)\|=\|z-x^*\| \cdot |\sin(\theta(z))|$, where $\theta(z)$ is the angle between $z-x^*$ and $\nu$. By transversality of $\mathcal{O}$ and $\mathcal{S}$ at $x^*$ given by assumption A.\ref{ass:orbit-transversal}, $\theta(z)$ will never be $0$ or $\pi$, so $\min_{\|z-x^*\|= 1}|\sin(\theta(z))|=:\mu$ satisfies $0<\mu\leq 1$. Thus, for all $z\in T_{x^*} \mathcal{S}$, we have
\begin{align}
{\rm dist}(z, T_{x^*}\mathcal{O}) &= \inf_{\tau \in \mathbb{R}} \| z \!-\! (x^*\!+\! \tau \nu) \| =  \| z \!-\! (x^*\!+\! \hat{\tau}_{\min}(z) \nu) \| \nonumber \\
&= \|z-x^*\| \cdot |\sin(\theta(z))| 
 \geq \mu \|z-x^*\| \enspace.
\label{eq:plane-line}
\end{align}

\begin{figure}[t]
\vspace{+0.05in}
\begin{centering}
\includegraphics[width=0.65\columnwidth]{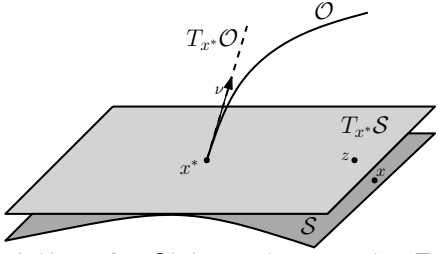} 
\par\end{centering}
\vspace{-0.15in}
\caption{The switching surface $\mathcal{S}$ is in grey, the tangent plane $T_{x^*}S$ is in light grey, the curved line is the orbit $\mathcal{O}$, and the dashed line is $T_{x^*} \mathcal{O}$.}
\vspace{-0.15in}
\label{fig:geometric-illus} 
\end{figure}

\noindent \emph{(ii) $\mathcal{O}$ and $T_{x^*} \mathcal{S}$}\\
Now we extend the result to $\mathcal{O}$ and $T_{x^*}\mathcal{S}$. Choose $\delta>0$ such that $z \in B_\delta(x^*) \cap T_{x^*}{\cal S}$ implies $\tau_{\min} < T$ with $T$ as in \eqref{eq:T-star-scaled} for all $\tau_{\min} \in \tau_{\rm m}(z)$; Lemma~\ref{lem:limit} guarantees that such a $\delta$ exists.
Next, split the set $B_\delta(x^*)\cap T_{x^*}\mathcal{S}$ into two subsets:
\begin{enumerate}[(a)]
\item $E_1:=\{z\in B_\delta(x^*)\cap T_{x^*}\mathcal{S}~|~\exists\tau_{\min}\in\tau_{\rm m}(z):\tau_{\min}=0\}$
\item $E_2:=\{z\in B_\delta(x^*)\cap T_{x^*}\mathcal{S}~|~\forall\tau_{\min}\in\tau_{\rm m}(z),\tau_{\min}\neq 0\}$. 
\end{enumerate}
Clearly, $E_1$ and $E_2$ are disjoint and $E_1 \cup E_2 = B_\delta(x^*)\cap T_{x^*}\mathcal{S}$. 

If $z\in E_1$, then $0 \in \tau_{\rm m}(z)$ and thus
\begin{equation}\label{eq:E-1}
\mathrm{dist}(z,\mathcal{O})=\inf_{y\in\mathcal{O}} \|z-y\|=\| z-y(\tau_{\min})\| = \|z-x^*\| \enspace.
\end{equation}

For convenience, from here on when we use $\tau_{\min}$, it is understood that $\tau_{\min}$ can be any element of $\tau_{\rm m}(z)$. Also, we will drop the functional dependence of $\hat{\tau}_{\min}$ on $z$; see \eqref{eq:tangent-min}.

If $z\in E_2$ then $\tau_{\min}>0$ and the vector from $z$ to the nearest point on $\mathcal{O}$ must be orthogonal to the orbit. Hence, we have $\langle z-y(\tau_{\min}),y'(\tau_{\min}) \rangle=0$, which, on using \eqref{eq:taylor-orbit}, gives
\begin{align}
\langle z-x^*-\tau_{\min} \nu - \tau_{\min}^2 r(\tau_{\min}), \nonumber\\ 
\nu+2\tau_{\min} r(\tau_{\min})+\tau_{\min}^2 r'(\tau_{\min}) \rangle = 0 \enspace. \label{eq:orthogonal}
\end{align}
Next, we use \eqref{eq:orthogonal} to derive the following important estimate,
\begin{equation}\label{eq:t2}
\tau_{\min} = \langle z-x^*,\nu \rangle + 2\tau_{\min}\langle r(\tau_{\min}),z-x^* \rangle + \tau_{\min}^2 a(\tau_{\min},z) \enspace,
\end{equation}
where $a(\tau,z) := \langle r'(\tau),z-x^* \rangle-\langle 3\nu,r(\tau)\rangle - \tau \big(\langle \nu,r'(\tau)\rangle  + \langle 2r(\tau) , r(\tau)\rangle\big) - \tau^2 \langle r(\tau),r'(\tau) \rangle$. Since $r(\tau)$ and $r'(\tau)$ are bounded as discussed below \eqref{eq:taylor-orbit}, $z\in B_\delta(x^*)\cap T_{x^*}\mathcal{S}$, and $\tau_{\min}<T$, we have, for some constant $c>0$, that
\begin{align}
\tau_{\min} \leq \|z-x^*\| + c \tau_{\min}(  \|z-x^*\| + \tau_{\min} ).
\label{eq:t1}
\end{align}
Using $\epsilon=1/(4c)$ in Lemma~\ref{lem:limit} there exists a $\delta<1/(4c)$ (shrink $\delta$ if necessary) such that for $\|z-x^*\| < \delta$, we have $\tau_{\min} < 1/(4c)$. Then from \eqref{eq:t1} we have 
\begin{equation}\label{eq:t3}
\tau_{\min}\leq 2 \|z-x^*\|.
\end{equation}
Next, noting $\hat{\tau}_{\min} = \langle z-x^*, \nu \rangle$ by \eqref{eq:tangent-min}, from \eqref{eq:t2} we obtain
\begin{align*}
|\hat{\tau}_{\min} - \tau_{\min}| & \leq |2\tau_{\min}\langle r(\tau_{\min}),z-x^* \rangle + \tau_{\min}^2 a(\tau_{\min},z)| \\
& \leq  c ( \tau_{\min} \|z-x^*\| + \tau_{\min}^2 )
\end{align*}
where the last inequality follows by using bounds similar to those that led to \eqref{eq:t1}.
Using \eqref{eq:t3} in the above inequality and updating the constant\footnote{Intermediate constants of no particular importance are used as $c$ while updating the meaning of $c$ as we proceed with the proof.} $c>0$ accordingly, we have
\begin{equation}\label{eq:tau-diff}
|\hat{\tau}_{\min} - \tau_{\min}| \leq c \|z-x^*\|^2 \enspace,
\end{equation}
provided $\|z-x^*\| < \delta$.

Turning our attention to $\mathrm{dist}(z,\mathcal{O})$ and using Lemma~\ref{lem:inf-min},
\begin{align}
& \inf_{y\in\mathcal{O}} \|z-y\| = \|z-y(\tau_{\min})\|  \nonumber\\
 & = \|z-x^*-\tau_{\min} \nu- \tau_{\min}^2 r(\tau_{\min})\|\label{eq:linear-inf-norm-1}\\
 & = \|z-x^*-\hat{\tau}_{\min}\nu + (\hat{\tau}_{\min}-\tau_{\min}) \nu- \tau_{\min}^2 r(\tau_{\min})\| \label{eq:linear-inf-norm-2}\\
 & \geq \|z-x^*-\hat{\tau}_{\min}\nu\| - |\hat{\tau}_{\min} - \tau_{\min}| - \tau_{\min}^2 \|r(\tau_{\min})\|  \label{eq:linear-inf-norm-3}\\
 & \geq \mu \|z-x^*\| - c \|z-x^*\|^2 \label{eq:linear-inf-norm-4}\enspace,
\end{align}
where \eqref{eq:linear-inf-norm-1} is obtained by using \eqref{eq:taylor-orbit}; \eqref{eq:linear-inf-norm-2} is obtained by adding and subtracting $\hat{\tau}_{\min}\nu$; \eqref{eq:linear-inf-norm-3} is obtained by using the reverse triangle inequality; and \eqref{eq:linear-inf-norm-4} is obtained by the boundedness of $r(\tau)$, \eqref{eq:plane-line}, \eqref{eq:tau-diff}, and \eqref{eq:t3}. Again we update the constant $c>0$ accordingly.
Further we can write \eqref{eq:linear-inf-norm-4} as
\begin{align}
& \inf_{y\in\mathcal{O}} \|z-y\| \geq (\mu/2) \|z-x^*\| + \|z-x^*\|(\mu/2-c \|z-x^*\| ) \nonumber.
\end{align}
Choosing $\|z-x^*\| \leq \mu/(2c)$ (shrink $\delta>0$ if necessary) gives 
\begin{equation}\label{eq:E-2}
\mathrm{dist}(z,\mathcal{O}):=\inf_{y\in\mathcal{O}} \|z-y\| \geq (\mu/2) \|z-x^*\| \enspace,
\end{equation}
for all $z\in E_2$.

Putting together \eqref{eq:E-1} and \eqref{eq:E-2} for the sets $E_1$ and $E_2$, respectively, and noting that $\min \{1,\mu/2 \}=\mu/2$ as $\mu/2\leq 1/2<1$ (see below \eqref{eq:tangent-min} to recall the meaning of $\mu$) gives 
\begin{equation}\label{eq:orbit-plane-bound}
\mathrm{dist}(z,\mathcal{O}):=\inf_{y\in\mathcal{O}} \|z-y\| \geq (\mu/2) \|z-x^*\| \enspace,
\end{equation}
for all $z\in B_\delta(x^*)\cap T_{x^*}\mathcal{S}$. 

\noindent \emph{(iii) $\mathcal{O}$ and $\mathcal{S}$}\\
Here, we extend the result to $x\in B_\delta(x^*)\cap\mathcal{S}$. First, note that if $x\in\mathcal{S}$ is a point in the neighborhood of $x^*$ and $z\in T_{x^*}\mathcal{S}$ is the projection of $x$ on $T_{x^*}\mathcal{S}$, then Appendix~\ref{app:S-projection} shows that there exists a constant $c>0$ such that  
\begin{equation}\label{eq:x-z-diff}
\|x-z\| \leq c \|x-x^*\|^2 \enspace.
\end{equation}
Since $\|z-x^*\| = \|z-x + x- x^*\| \leq \|x-x^*\| + c\|x-x^*\|^2$ by the triangle inequality and \eqref{eq:x-z-diff}, choosing $x \in \mathcal{S}$ so that $\|x-x^*\|<\delta$ for a sufficiently small $\delta>0$ ensures that the corresponding $z$ satisfies \eqref{eq:orbit-plane-bound}. Then,
\begin{align}
\inf_{y\in\mathcal{O}} \|x-y\| & = \inf_{y\in\mathcal{O}} \|x-z+z-y\|  \nonumber\\
& \geq \inf_{y\in\mathcal{O}} \|z-y\| - \|x-z\|  \nonumber\\
& \geq (\mu/2) \|z-x^*\| - \|x-z\|  \label{eq:inf-bound-1} \\
& \geq (\mu/2)\|x-x^*\| - (1+ \mu/2) \|x-z\| \label{eq:inf-bound-2}\\
& \geq (\mu/2)\|x-x^*\| - c \|x-x^*\|^2 \label{eq:inf-bound-3}\enspace.
\end{align}
where \eqref{eq:inf-bound-1} follows from \eqref{eq:orbit-plane-bound}; \eqref{eq:inf-bound-2} from the reverse triangle inequality on $\|z-x^*\| = \|z-x+x-x^*\|$; and \eqref{eq:inf-bound-3} follows from \eqref{eq:x-z-diff} with $c>0$ updated accordingly.
Write \eqref{eq:inf-bound-3} as
\begin{equation}\nonumber
 \inf_{y\in\mathcal{O}} \|x-y\| \geq (\mu/4)\|x-x^*\|  + \|x-x^*\| ( \mu/4-c \|x-x^*\| ), 
\end{equation}
and choose $\|x-x^*\| \leq \mu / (4c)$, shrinking $\delta$ if necessary. The result follows by letting $\bar{\lambda} = \mu/4<1$.
\end{proof}
 
Now we present the proof of Proposition~\ref{prop:equiv-norm}, which essentially extends Lemma~\ref{lem:equiv-norm-local} to the entire $\mathcal{S}$.

\begin{proof}[Proof of Proposition~\ref{prop:equiv-norm}]
The upper bound on $\mathrm{dist}(x,\mathcal{O})$ in Proposition~\ref{prop:equiv-norm} follows, for all $x\in\mathcal{S}$, directly from \eqref{eq:inf-min-1}. 

For the lower bound, we begin by applying Lemma~\ref{lem:equiv-norm-local} to establish the existence of $\delta>0$ and $0<\bar{\lambda}< 1$ such that
\begin{equation}\label{eq:ub-around-x*}
\mathrm{dist}(x,\mathcal{O}) \geq \bar{\lambda}\|x-x^*\| \enspace,
\end{equation}
for all $x \in \mathcal{S}$ with $\|x-x^*\| < \delta$. To obtain a lower bound that holds for all $x \in \mathcal{S}$, we first consider the case where $\mathcal{S}$ is unbounded; then, the case where $\mathcal{S}$ is bounded follows easily. 

Let $\mathcal{S}$ be unbounded and distinguish the following regions.
 
\noindent \emph{(i)}  $\mathcal{R}_{\rm I} := \{x\in\mathcal{S}~|~\|x-x^*\|>\delta'\}$ for $\delta'>\delta$\\
We will show that a $\delta'>\delta$ exists so that for all $x\in\mathcal{R}_{\rm I}$ a lower bound for $\mathrm{dist}(x,\mathcal{O})$ similar to \eqref{eq:ub-around-x*} can be found. 
First note that, by the definition \eqref{eq:S}, the surface $\mathcal{S}$ is closed. Furthermore, by assumption A.\ref{ass:x*} we have $\overline{\mathcal{O}} \cap \mathcal{S} = \{ x^* \}$, and thus the only limit point that $\mathcal{O}$ and $\mathcal{S}$ share is $x^*$. Hence, $\mathrm{dist}(x,\mathcal{O}) > 0$ for all $x \in \mathcal{S}\backslash \{x^*\}$, as these points are in the complement of the closure of $\mathcal{O}$ in $\mathbb{R}^n$, and $\mathrm{dist}(x,\mathcal{O}) / \|x-x^*\| > 0$ is well defined for all $x \in \mathcal{S}\backslash \{x^*\}$. We claim that 
\begin{equation}\label{eq:lim-dist}
\lim_{\|x\|\to\infty,~x\in\mathcal{S}\setminus \{x^*\} } \mathrm{dist}(x,\mathcal{O}) / \|x-x^*\| = 1 \enspace, 
\end{equation}
from which it follows easily that there exists $\delta'>0$ (expand $\delta'$ if necessary to ensure $\delta' > \delta$) such that 
\begin{equation}\label{eq:ub-to-infty}
\mathrm{dist}(x,\mathcal{O}) \geq (1/2) \|x-x^*\| \enspace,
\end{equation}
for all $x\in\mathcal{R}_{\rm I}$. 
To show the claim \eqref{eq:lim-dist}, take any $x\in\mathcal{S}\backslash \{x^*\}$, let $\tau_{\min}\in \tau_{\rm m}(x)$ and define $M_\mathcal{O}:=\max_{y_1,y_2\in\overline{\mathcal{O}}} \|y_1-y_2\|$ so that $\|x^*-y(\tau_{\min})\| \leq M_\mathcal{O}$. Then, $\mathrm{dist}(x,\mathcal{O}) := \|x-y(\tau_{\min})\| \geq \|x-x^*\|-\|x^*-y(\tau_{\min})\|$ implies 

\vspace{-3mm}
\small
\begin{equation}\label{eq:dist-ratio-up}
\frac{\mathrm{dist}(x,\mathcal{O})}{\|x-x^*\|} \geq 1 - \frac{\|x^*-y(\tau_{\min})\|}{\|x-x^*\|} \geq 1- \frac{M_\mathcal{O}}{\|x-x^*\|} \enspace,
\end{equation}
\normalsize
and $\mathrm{dist}(x,\mathcal{O}) = \|x-y(\tau_{\min})\| \leq \|x-x^*\|+\|x^*-y(\tau_{\min})\|$ implies

\vspace{-3mm}
\small
\begin{equation}\label{eq:dist-ratio-low}
\frac{\mathrm{dist}(x,\mathcal{O})}{\|x-x^*\|} \leq 1 + \frac{\|x^*-y(\tau_{\min})\|}{\|x-x^*\|} \leq 1+ \frac{M_\mathcal{O}}{\|x-x^*\|} \enspace.
\end{equation}
\normalsize
As a result, for any sequence of points $x_n \in \mathcal{S}\backslash \{x^*\}$ such that $\|x_n-x^*\|\to\infty$, it follows from \eqref{eq:dist-ratio-up} and \eqref{eq:dist-ratio-low} that

\vspace{-3mm}
\small
\begin{align}\nonumber
1 \leq \liminf_{n\to\infty} \frac{\mathrm{dist}(x_n,\mathcal{O})}{\|x_n-x^*\|} \leq \limsup_{n\to\infty} \frac{\mathrm{dist}(x_n,\mathcal{O})}{\|x_n-x^*\|} \leq 1 \enspace,
\end{align}
\normalsize
implying $\lim_{n\to\infty} \mathrm{dist}(x_n,\mathcal{O}) / \|x_n-x^*\| =1$, which by \cite[Theorem~4.2]{rudin1964principles} proves the claim \eqref{eq:lim-dist}.

\noindent \emph{(ii)}   $\mathcal{R}_{\rm II} := \{x\in\mathcal{S}~|~\delta\leq\|x-x^*\|\leq\delta'\}$\\
With $\delta>0$ provided by Lemma~\ref{lem:equiv-norm-local} and $\delta' > \delta$ selected as in case \emph{(i)}, let $\hat{\lambda}:=\min_{x\in\mathcal{R}_{\rm II}} \big( \mathrm{dist}(x,\mathcal{O}) / \|x-x^*\| \big) >0$, which is well defined since $\mathrm{dist}(x,\mathcal{O}) / \|x-x^*\| >0$ is continuous over the compact set $\mathcal{R}_{\rm II}$. Hence,   
\begin{equation}\label{eq:ub-annulus}
\mathrm{dist}(x,\mathcal{O}) \geq \hat{\lambda} \|x-x^*\| \enspace,
\end{equation}
for all $x\in\mathcal{R}_{\rm II}$. 

Finally, combining \eqref{eq:ub-around-x*}, \eqref{eq:ub-to-infty}, and \eqref{eq:ub-annulus} by choosing $\lambda=\min\{\bar{\lambda},1/2,\hat{\lambda}\}$ gives 
\begin{equation}\nonumber
\mathrm{dist}(x,\mathcal{O}):=\inf_{y\in\mathcal{O}} \|x-y\| \geq \lambda \|x-x^*\| ,
\end{equation}
for all $x\in\mathcal{S}$, completing the proof when $\mathcal{S}$ is unbounded.

For the case where $\mathcal{S}$ is bounded, we can choose $\delta'>0$ sufficiently large to ensure that $\mathcal{S}\subset B_{\delta'}(x^*)$. Then, an argument analogous to that used in \emph{(ii)} above establishes the desired lower  bound for this case, thereby completing the proof of Proposition~\ref{prop:equiv-norm}.
\end{proof}

\section{Proof of Theorem~\ref{thm:LISS-equivalence} and Theorem~\ref{thm:LAS-LISS}}
\label{sec:thm-proof}

In the proof of Theorem~\ref{thm:LISS-equivalence}, we will compare different solutions based on their initial states and inputs. With this in mind, we present the following lemma, which is a straightforward adaptation of~\cite[Theorem~3.5]{khalil2002nonlinear} and its proof will be omitted.

\begin{lemma}\label{lem:cont-solutions}
Suppose $f$ in \eqref{eq:cont-dyn} satisfies assumption A.\ref{ass:f-c2}. Let $u_1 \in \mathcal{U}$ and $x_1(t)$ be the solution of
\begin{align*}
\dot{x}(t) = f(x(t),u_1(t)), ~~ x_1(0) = a_1 \enspace,
\end{align*}
which exists for all $t\in[0,\hat{T}]$.
Then, there exists $L>0$ and $\delta>0$ such that if $\|a_2-a_1\|<\delta$ and $\|u_2-u_1\|_\infty<\delta$, 
\[
\dot{x}(t) = f(x(t),u_2(t)), ~~ x_2(0) = a_2  \enspace
\]
has a unique solution $x_2(t)$ for all $t\in[0,\hat{T}]$. Further,

\vspace{-3mm}
\small
\begin{align}\label{eq:lipschitz-solution}
\|x_1(t) - x_2(t)\| & \leq \mathrm{e}^{L\hat{T}} \|a_1 - a_2\| +( \mathrm{e}^{L\hat{T}}-1) \|u_1 - u_2\|_\infty ,
\end{align}
\normalsize
for all $t\in[0,\hat{T}]$.
\end{lemma}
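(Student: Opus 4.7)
The proof is a standard continuous-dependence argument based on Gr\"onwall's inequality, adapted to account for perturbations in both the initial condition and the input signal. I would first exploit compactness: the reference solution $x_1(t)$ exists on the compact interval $[0,\hat{T}]$ and is continuous, so its image is a compact set $K \subset \mathbb{R}^n$; the image $u_1([0,\hat{T}])$ is likewise compact in $\mathbb{R}^p$. Thickening these to a compact tube $K' \times U'$ and using assumption A.\ref{ass:f-c2}, which makes $f$ continuously differentiable (hence locally Lipschitz), one obtains a uniform Lipschitz constant $L>0$ for $f$ in both arguments on $K' \times U'$.

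The main obstacle is not the estimate itself, but ensuring that the perturbed solution $x_2(t)$ actually exists on the \emph{full} interval $[0,\hat{T}]$ rather than blowing up or leaving the Lipschitz tube. I would handle this by a standard continuation argument: for $\delta>0$ sufficiently small, whenever $\|a_2-a_1\|<\delta$ and $\|u_2-u_1\|_\infty<\delta$, the integral form together with the Gr\"onwall estimate derived below keeps $x_2(t)$ inside $K'$ and $u_2(t)$ inside $U'$ on its maximal interval of existence in $[0,\hat{T}]$, so the usual ODE extension theorem in \cite[Thm.~3.1]{khalil2002nonlinear} prolongs $x_2$ to all of $[0,\hat{T}]$.

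On this common interval I would write
\begin{align*}
x_1(t)-x_2(t) = (a_1-a_2) + \int_0^t \bigl[f(x_1(s),u_1(s)) - f(x_2(s),u_2(s))\bigr] ds,
\end{align*}
and apply the uniform Lipschitz bound together with $\|u_1(s)-u_2(s)\|\leq\|u_1-u_2\|_\infty$ to get
\begin{align*}
\|x_1(t)-x_2(t)\| \leq \|a_1-a_2\| + L\int_0^t \|x_1(s)-x_2(s)\|\,ds + Lt\,\|u_1-u_2\|_\infty.
\end{align*}

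To obtain the precise coefficients in \eqref{eq:lipschitz-solution}, I would apply Gr\"onwall's inequality to the auxiliary quantity $h(t) := \|x_1(t)-x_2(t)\| + \|u_1-u_2\|_\infty$. Rewriting $Lt\,\|u_1-u_2\|_\infty$ as $L\int_0^t \|u_1-u_2\|_\infty\,ds$ and adding $\|u_1-u_2\|_\infty$ to both sides yields
\begin{align*}
h(t) \leq \bigl(\|a_1-a_2\|+\|u_1-u_2\|_\infty\bigr) + L\int_0^t h(s)\,ds,
\end{align*}
so Gr\"onwall gives $h(t)\leq \bigl(\|a_1-a_2\|+\|u_1-u_2\|_\infty\bigr)e^{Lt}$, and subtracting $\|u_1-u_2\|_\infty$ from both sides produces $\|x_1(t)-x_2(t)\|\leq e^{Lt}\|a_1-a_2\|+(e^{Lt}-1)\|u_1-u_2\|_\infty$, which at $t=\hat{T}$ is exactly \eqref{eq:lipschitz-solution}.
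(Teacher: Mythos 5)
Your proof is correct and is precisely the argument the paper has in mind: the paper states this lemma without proof, referring only to it as ``a straightforward adaptation of~\cite[Theorem~3.5]{khalil2002nonlinear},'' and your derivation reconstructs exactly the Gr\"onwall / continuation argument underlying that theorem, with the input $u$ playing the role of Khalil's perturbation parameter. In particular, the choice of auxiliary quantity $h(t)=\|x_1(t)-x_2(t)\|+\|u_1-u_2\|_\infty$ is the right trick to recover the precise coefficients $\mathrm{e}^{L\hat{T}}$ and $\mathrm{e}^{L\hat{T}}-1$ in \eqref{eq:lipschitz-solution}, and your bootstrap argument for keeping $x_2$ inside the Lipschitz tube (so that the uniform $L$ applies throughout $[0,\hat{T}]$) is the standard and correct way to justify existence on the full interval.
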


The following remark extends the unperturbed (zero-input) solution $\varphi(t,\Delta(x^*,0),0)$ of \eqref{eq:cont-dyn} that starts from $\Delta(x^*,0)$.

\begin{rem}\label{rem:unperturbed}
By assumption~A.\ref{ass:orbit-exist}, the solution $\varphi(t,\Delta(x^*,0),0)$ of \eqref{eq:cont-dyn} exists for all $t\in[0,T^*]$, where $T^* = T_{\rm I}(x^*,0,0)$ is the period of $\mathcal{O}$. Hence, by \cite[Theorem~3.1]{khalil2002nonlinear}, there exists $\overline{T} \in (T^*, \infty)$ so that $\varphi(t,\Delta(x^*,0),0)$ can be extended over the interval $[T^*, \overline{T}]$. As $L_f H(\varphi(T^*,\Delta(x^*,0),0),0)<0$ from assumption~A.\ref{ass:orbit-transversal} while $L_f H(\varphi(t,\Delta(x^*,0),0),0)$ is continuous in time, it follows that for a sufficiently small $\overline{T}>T^*$, we can ensure $L_f H(\varphi(t,\Delta(x^*,0),0),0)<0$ for all $T^* \leq t\leq \overline{T}$, so $\varphi(t,\Delta(x^*,0),0)\in\mathcal{S}^-$ over the interval $(T^*,\overline{T}]$.
\end{rem}

Lemma~\ref{lem:solution-compare-S} below shows that the perturbed solution $\varphi(t,\Delta(x,v),u)$ of \eqref{eq:cont-dyn} is locally well defined over the interval $[0, \overline{T}]$ where the unperturbed solution $\varphi(t,\Delta(x^*,0),0)$ can be extended. Moreover, the lemma provides a \emph{linear} upper bound on the distance of $\varphi(t,\Delta(x,v),u)$ from $\mathcal{O}$ that will be used in proving Theorem~\ref{thm:LISS-equivalence}. The proof of Lemma~\ref{lem:solution-compare-S} is in Appendix~\ref{app:T-I}.
\begin{lemma}\label{lem:solution-compare-S}
Under the assumptions of Theorem~\ref{thm:LISS-equivalence}, there exist $\delta>0$, and $\underline{T}>0$, $\overline{T}>T^*$ with $T^* = T_{\rm I}(x^*,0,0)$ being the period of $\mathcal{O}$, such that for all $x\in B_\delta(x^*)\cap\mathcal{S}$, $u\in\mathcal{U}$ with $\|u\|_\infty<\delta$, and $\bar{v}\in\mathcal{V}$ with $\|\bar{v}\|_\infty <\delta$ the following hold 
\begin{enumerate}[(i)]
\item The perturbed solution $\varphi(t,\Delta(x,v),u)$ of \eqref{eq:cont-dyn}, where $v$ is an element of $\bar{v}$, exists and is unique for $t \in [0,\overline{T}]$. 
\item The solution $\varphi(t,\Delta(x,v),u)$ crosses $\mathcal{S}$ in finite time $T_{\rm I}(x,u,v)$ given by \eqref{eq:time-to-imp}, and $0<\underline{T} < T_{\rm I}(x,u,v) < \overline{T}$, where $\underline{T} < T^* < \overline{T}$. Moreover, it does so transversally to $\mathcal{S}$ with  $L_f H(\varphi(T_\mathrm{I}(x,u,v),\Delta(x,v),u),u)<0$. 
\item There exists a $c>0$ such that 
\begin{align}
 \sup_{0 \leq t < T_{\rm I}} \mathrm{dist} (\varphi(t,\Delta(x,v),u),\mathcal{O}) \leq &~ c \|x-x^*\| + c \|u\|_\infty \nonumber \\
& + c\|\bar{v}\|_\infty . \label{eq:compare-S}
\end{align}
\end{enumerate}
\end{lemma}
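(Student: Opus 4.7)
\textbf{Proof plan for Lemma~\ref{lem:solution-compare-S}.} The plan is to build everything off the extended unperturbed reference trajectory $t \mapsto \varphi(t,\Delta(x^*,0),0)$ supplied by Remark~\ref{rem:unperturbed}, which exists on a closed interval $[0,\overline{T}]$ with $\overline{T}>T^*$, lies on $\mathcal{O}$ for $t\in[0,T^*)$, equals $x^*$ at $t=T^*$, and enters $\mathcal{S}^-$ transversally on $(T^*,\overline{T}]$. Robustness of the forced trajectory to perturbations in the initial condition and in the input signal will be obtained from Lemma~\ref{lem:cont-solutions}, while robustness of the impact time and of transversality at impact will come from Lemma~\ref{lem:TI-cont}. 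The only non-routine step is handling the interval $[T^*, T_{\rm I}(x,u,v))$, which is nonempty whenever the perturbed impact is delayed with respect to $T^*$; this is where Lemma~\ref{lem:TI-cont} must be used quantitatively.

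\textbf{Parts (i) and (ii).} For part (i), note that $\Delta$ is $C^1$ by A.\ref{ass:Delta}, so shrinking $\delta$ makes $\|\Delta(x,v)-\Delta(x^*,0)\|$ as small as desired whenever $\|x-x^*\|<\delta$ and $\|v\|\le\|\bar v\|_\infty<\delta$. Applying Lemma~\ref{lem:cont-solutions} on the interval $[0,\overline{T}]$ with reference input $u_1 \equiv 0$ and reference initial condition $\Delta(x^*,0)$ yields existence, uniqueness, and a uniform Lipschitz-type bound
\begin{equation}\label{eq:plan-cont}
\|\varphi(t,\Delta(x,v),u)-\varphi(t,\Delta(x^*,0),0)\| \le \mathrm{e}^{L\overline{T}}\|\Delta(x,v)-\Delta(x^*,0)\| + (\mathrm{e}^{L\overline{T}}-1)\|u\|_\infty,
\end{equation}
valid for all $t\in[0,\overline{T}]$ after shrinking $\delta$. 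For part (ii), Lemma~\ref{lem:TI-cont} gives continuous (in fact Fr\'echet-$C^1$) dependence of $T_{\rm I}$ at $(x^*,0,0)$, where its value is $T^*$; pick any $0<\underline{T}<T^*<\overline{T}$ (with $\overline{T}$ as above) and shrink $\delta$ so that $T_{\rm I}(x,u,v)\in(\underline{T},\overline{T})$. Transversality at impact, i.e.\ $L_f H(\varphi(T_{\rm I}(x,u,v),\Delta(x,v),u),u)<0$, follows from continuity of $(x,u)\mapsto L_f H(x,u)$ together with A.\ref{ass:orbit-transversal} and \eqref{eq:plan-cont}, by a further shrinking of $\delta$.

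\textbf{Part (iii).} Using the $C^1$ regularity of $\Delta$, there is $L_\Delta>0$ with $\|\Delta(x,v)-\Delta(x^*,0)\|\le L_\Delta(\|x-x^*\|+\|v\|)\le L_\Delta(\|x-x^*\|+\|\bar v\|_\infty)$; combining this with \eqref{eq:plan-cont} and absorbing constants into a single $c_1>0$ yields
\begin{equation}\label{eq:plan-master}
\|\varphi(t,\Delta(x,v),u)-\varphi(t,\Delta(x^*,0),0)\| \le c_1\bigl(\|x-x^*\|+\|u\|_\infty+\|\bar v\|_\infty\bigr),\qquad t\in[0,\overline{T}].
\end{equation}
For $t\in[0,\min\{T^*,T_{\rm I}(x,u,v)\})$, the reference point $\varphi(t,\Delta(x^*,0),0)$ lies in $\mathcal{O}$, so $\mathrm{dist}(\varphi(t,\Delta(x,v),u),\mathcal{O})$ is already bounded by the left-hand side of \eqref{eq:plan-master}. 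On the remaining subinterval $[T^*,T_{\rm I}(x,u,v))$ (which is empty unless $T_{\rm I}>T^*$), I would instead bound the orbital distance by $\|\varphi(t,\Delta(x,v),u)-x^*\|$, which is legitimate because $x^*\in\overline{\mathcal{O}}$ and $\mathrm{dist}(\cdot,\mathcal{O})=\min_{y\in\overline{\mathcal{O}}}\|\cdot-y\|$ by Lemma~\ref{lem:inf-min}. A triangle inequality through $\varphi(t,\Delta(x^*,0),0)$ gives
\begin{equation*}
\|\varphi(t,\Delta(x,v),u)-x^*\| \le \|\varphi(t,\Delta(x,v),u)-\varphi(t,\Delta(x^*,0),0)\| + \|\varphi(t,\Delta(x^*,0),0)-x^*\|.
\end{equation*}
The first term is controlled by \eqref{eq:plan-master}. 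The second term is at most $M|t-T^*|$ for some $M>0$, since $\varphi(\cdot,\Delta(x^*,0),0)$ is Lipschitz on $[0,\overline{T}]$ and equals $x^*$ at $T^*$; and on the interval of interest, $|t-T^*|\le T_{\rm I}(x,u,v)-T^*$, which by the Fr\'echet differentiability of $T_{\rm I}$ at $(x^*,0,0)$ provided by Lemma~\ref{lem:TI-cont} is bounded by $c_2(\|x-x^*\|+\|u\|_\infty+\|v\|)$ for some $c_2>0$. Summing the two contributions and taking a supremum over $t\in[0,T_{\rm I}(x,u,v))$ produces \eqref{eq:compare-S} with a suitable $c>0$.

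\textbf{Main obstacle.} The conceptually delicate step is the case $T_{\rm I}(x,u,v)>T^*$: the reference trajectory has already crossed $\mathcal{S}$, so simply comparing the perturbed flow to $\varphi(t,\Delta(x^*,0),0)$ no longer gives a point of $\mathcal{O}$ to measure orbital distance against. The resolution is to exploit $x^*\in\overline{\mathcal{O}}$ via Lemma~\ref{lem:inf-min}, which lets us compare to the fixed point $x^*$, and to convert the excess $T_{\rm I}(x,u,v)-T^*$ into a linear function of the perturbations using the Fr\'echet regularity of $T_{\rm I}$. Once this is in place, the whole bound collapses neatly into the linear form \eqref{eq:compare-S}, as required.
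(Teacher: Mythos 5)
Your proposal is correct and follows essentially the same route as the paper's proof: extend the reference solution via Remark~\ref{rem:unperturbed}, use Lemma~\ref{lem:cont-solutions} together with the local Lipschitz property of $\Delta$ for parts (i) and (iii), use continuity of $T_{\rm I}$ (Lemma~\ref{lem:TI-cont}) for part (ii) and its Lipschitz estimate to convert the overshoot $T_{\rm I}-T^*$ into a linear bound on $\|w(t)-x^*\|$ over $[T^*,T_{\rm I})$. Your triangle inequality through $x^*$ is just a cosmetic rearrangement of the paper's decomposition $\mathrm{dist}(\varphi,\mathcal{O})\le\|\varphi-w(t)\|+\mathrm{dist}(w(t),\mathcal{O})$ with $\mathrm{dist}(w(t),\mathcal{O})\le\|w(t)-x^*\|$, so no substantive difference remains.
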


In proving Theorem~\ref{thm:LISS-equivalence}, we will also need Lemma~\ref{lem:solution-compare-S+} which is an ``orbital'' analogue of Lemma~\ref{lem:solution-compare-S}. Lemma~\ref{lem:solution-compare-S+} does not require $x$ to be confined on $\mathcal{S}$ so that any $x \in \mathcal{S}^+$ can be used, as long as it is sufficiently close to $\mathcal{O}$. 
\begin{lemma}\label{lem:solution-compare-S+}
Under the assumptions of Theorem~\ref{thm:LISS-equivalence}, there exist $\delta>0$ and $\overline{T}>T^*$ with $T^* = T_{\rm I}(x^*,0,0)$ being the period of $\mathcal{O}$, such that for all $x\in\mathcal{S}^+$ with $\mathrm{dist}(x,\mathcal{O})<\delta$ and $u\in\mathcal{U}$ with $\|u\|_\infty<\delta$,  the following hold 
\begin{enumerate}[(i)]
\item The perturbed solution $\varphi(t,x,u)$ of \eqref{eq:cont-dyn} exists and is unique for $t \in [0, \overline{T}-\xi]$ where $\xi \in [0, T^*]$.
%
\item The solution $\varphi(t,x,u)$ crosses $\mathcal{S}$ in finite time $\hat{T}_{\rm I}(x,u)$ given by \eqref{eq:time-to-imp-S+}, and $0 < \hat{T}_{\rm I}(x,u) < \overline{T}$. Moreover, it does so transversally to $\mathcal{S}$ with  $L_f H(\varphi(\hat{T}_\mathrm{I}(x,u),x,u),u)<0$. 
\item There exists a $c>0$ such that 
\begin{align}
\hspace{-3.5mm} \sup_{0 \leq t < \hat{T}_{\rm I}} \mathrm{dist} (\varphi(t,x,u),\mathcal{O}) \leq & ~c~\mathrm{dist}(x,\mathcal{O}) + c \|u\|_\infty . \label{eq:compare-not-S}
\end{align}
\end{enumerate}
\end{lemma}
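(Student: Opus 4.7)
The plan is to adapt the argument of Lemma~\ref{lem:solution-compare-S} by replacing the single nominal starting point $\Delta(x^*,0)$ with a family of starting points that ``slide'' along $\overline{\mathcal{O}}$, so that the point of $\overline{\mathcal{O}}$ nearest to $x$ serves as the initial condition of the unperturbed comparison trajectory. Given $x \in \mathcal{S}^+$ with $\mathrm{dist}(x,\mathcal{O}) < \delta$, compactness of $\overline{\mathcal{O}}$ together with Lemma~\ref{lem:inf-min} allows me to choose $y^* \in \overline{\mathcal{O}}$ with $\|x - y^*\| = \mathrm{dist}(x,\mathcal{O})$. Since $\overline{\mathcal{O}} = \{\varphi(\tau,\Delta(x^*,0),0) : \tau \in [0,T^*)\} \cup \{x^*\}$, there is a unique $\xi \in [0,T^*]$ with $y^* = \varphi(\xi,\Delta(x^*,0),0)$, where $\xi = T^*$ corresponds to $y^* = x^*$. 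By autonomy of the unperturbed continuous dynamics, $\varphi(t,y^*,0) = \varphi(t+\xi,\Delta(x^*,0),0)$ reaches $x^*$ at $t = T^*-\xi$, and Remark~\ref{rem:unperturbed} extends it on $[0,\overline{T}-\xi]$ for some fixed $\overline{T} > T^*$ on which $L_fH$ remains negative beyond the crossing.

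Part (i) then follows by applying Lemma~\ref{lem:cont-solutions} on $[0,\overline{T}-\xi] \subseteq [0,\overline{T}]$. This yields constants $L > 0$ and $\delta > 0$, made uniform in $y^*$ by covering the compact set $\overline{\mathcal{O}}$ with finitely many neighborhoods on which the local estimates apply, such that whenever $\mathrm{dist}(x,\mathcal{O}) < \delta$ and $\|u\|_\infty < \delta$, the perturbed flow $\varphi(t,x,u)$ exists uniquely on $[0,\overline{T}-\xi]$ and satisfies
\[
\|\varphi(t,x,u) - \varphi(t,y^*,0)\| \leq e^{L\overline{T}}\bigl(\mathrm{dist}(x,\mathcal{O}) + \|u\|_\infty\bigr).
\]
For (ii), I would repeat the implicit-mapping-theorem argument of Lemma~\ref{lem:TI-cont} with $y^*$ in place of $x^*$, using a uniform lower bound on $|L_fH|$ near the crossing (guaranteed by continuity of $L_fH$ and Remark~\ref{rem:unperturbed}) to obtain that $\hat{T}_\mathrm{I}(x,u)$ is well defined and continuously differentiable near $(y^*,0)$, satisfies the linear estimate $|\hat{T}_\mathrm{I}(x,u) - (T^*-\xi)| \leq c_1\bigl(\mathrm{dist}(x,\mathcal{O}) + \|u\|_\infty\bigr)$, and preserves transversality at the crossing. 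Positivity $\hat{T}_\mathrm{I} > 0$ is immediate from $H(x) > 0$, and $\hat{T}_\mathrm{I} < \overline{T}$ follows after shrinking $\delta$; uniformity of $c_1$ in $y^*$ again follows from the finite covering.

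For part (iii), I would split the interval at $t^\dagger := \min\{\hat{T}_\mathrm{I}(x,u),\,T^*-\xi\}$. On $[0,t^\dagger]$ the nominal flow $\varphi(t,y^*,0)$ lies in $\overline{\mathcal{O}}$, so
\[
\mathrm{dist}(\varphi(t,x,u),\mathcal{O}) \leq \|\varphi(t,x,u) - \varphi(t,y^*,0)\|
\]
and the Lipschitz estimate supplies the required bound. On the possibly empty interval $(t^\dagger,\hat{T}_\mathrm{I})$, the perturbed trajectory sits within a small neighborhood of $x^* \in \overline{\mathcal{O}}$, so $\mathrm{dist}(\varphi(t,x,u),\mathcal{O}) \leq \|\varphi(t,x,u) - x^*\|$, and combining the Lipschitz estimate at $t = T^*-\xi$ with the linear bound on $|\hat{T}_\mathrm{I} - (T^*-\xi)|$ from (ii) and local boundedness of $f$ on a compact neighborhood of $\overline{\mathcal{O}}$ yields a bound of the same form. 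Absorbing constants produces \eqref{eq:compare-not-S}.

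The principal obstacle is making every constant---$\delta$, $L$, $c_1$, and the extension horizon $\overline{T}$---uniform as $y^*$ ranges over the compact set $\overline{\mathcal{O}}$. I would handle this by a standard covering argument: for each $y^* \in \overline{\mathcal{O}}$, Lemma~\ref{lem:cont-solutions} and Lemma~\ref{lem:TI-cont} supply local constants; extracting a finite subcover gives worst-case uniform constants, while continuity of $L_fH$ together with compactness of $\overline{\mathcal{O}}$ ensures that transversality and the extension property of Remark~\ref{rem:unperturbed} hold uniformly along the entire family of nominal flows.
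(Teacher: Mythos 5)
Your proposal is correct and follows essentially the same route as the paper: comparing the perturbed flow against the unperturbed flow launched from the nearest point $w(\xi_{\min}) \in \overline{\mathcal{O}}$, using a finite open cover of the compact $\overline{\mathcal{O}}$ to make the Lipschitz and time-to-impact constants uniform, and then splitting the time interval at (roughly) $T^*-\xi$ to handle the portion where the nominal trajectory has passed $x^*$. The paper's Appendix D formalizes the same covering argument (including a half-radius refinement so that $x$ and $w(\xi_{\min})$ lie in a common ball) and isolates your two estimates as Claim~1 and Claim~2.
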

The proof of Lemma~\ref{lem:solution-compare-S+}  is similar to the proof of Lemma~\ref{lem:solution-compare-S}, albeit more technical as it requires the construction of a suitable open cover for $\overline{\mathcal{O}}$; thus, it is relegated to Appendix~\ref{app:T-I}. 

Now we are ready to provide the proof of Theorem~\ref{thm:LISS-equivalence}. 

\begin{proof}[Proof of Theorem~\ref{thm:LISS-equivalence}]
We first show (i)$\implies$(ii) and then (ii)$\implies$(i) by explicitly constructing class-$\mathcal{KL}$ and class-$\mathcal{K}$ functions to satisfy Definitions~\ref{def:orbit-LISS} and \ref{def:poinc-LISS}. To avoid ambiguity in notation, we use $x(0)\in\mathbb{R}^n$ as the initial condition for the system with impulse effects \eqref{eq:sys-imp-eff} and $x_0\in\mathcal{S}$ as the initial condition for the discrete system \eqref{eq:poinc-dyn}.

\noindent (i)$\implies$(ii)\\
Assume that $\mathcal{O}$ is a LISS orbit of \eqref{eq:sys-imp-eff}, and let $x(t)=\psi(t,x(0),u,\bar{v})$ be a solution of \eqref{eq:sys-imp-eff} that satisfies Definition~\ref{def:orbit-LISS} for some $\delta_\Sigma>0$ and for suitable functions $\alpha_1, \alpha_2 \in \mathcal{K}$, $\beta \in \mathcal{KL}$. 
Choose $\delta_\Sigma$ sufficiently small to further ensure that Lemma~\ref{lem:solution-compare-S+} is satisfied. Then, $t_0:=\hat{T}_{\rm I}(x(0),u)$ is finite, and $x_0:=\lim_{t \nearrow t_0} x(t)$ is well-defined. 
On arriving at $\mathcal{S}$, the solution jumps to $x(t_0)=\Delta(x_0, v_0)$ where $v_0$ is the first element of the sequence $\bar{v}$. However, to establish the estimate \eqref{eq:poinc-LISS} in Definition~\ref{def:poinc-LISS} we need $x_0$ to appear in the RHS instead of $x(t_0)$. To do this we will use the fact that, by assumption~A.\ref{ass:Delta}, the map $\Delta$ is continuously differentiable and thus locally Lipschitz. Hence, there exists a $\delta_\Delta>0$ for which the Lipschitz condition holds uniformly for all $x \in B_{\delta_\Delta}(x^*) \cap \mathcal{S}$ and $\|\bar{v}\|_\infty < \delta_\Delta$ for some constant $L_\Delta >0$, so that
\begin{align}
\mathrm{dist}(\Delta(x_0,v_0),\mathcal{O})  & = \mathrm{dist}(\Delta(x_0,v_0),\mathcal{O})-\mathrm{dist}(\Delta(x^*,0),\mathcal{O}) \nonumber \\
& \leq \|\Delta(x_0,v_0)-\Delta(x^*,0)\| \nonumber \\
& \leq L_\Delta \big( \|x_0-x^*\| + \|\bar{v}\|_\infty \big) \enspace, \label{eq:dist-delta-bound}
\end{align}
where we used $\mathrm{dist}(\Delta(x^*,0),\mathcal{O})=0$ and the fact that $\mathrm{dist}(\cdot,\mathcal{O})$ is Lipschitz continuous with constant equal to $1$.
Finally, to guarantee that the time to impact is well defined for subsequent intersections of $x(t)$ with $\mathcal{S}$, choose $\delta_T>0$ so that Lemma~\ref{lem:solution-compare-S} is satisfied for $\underline{T}>0$ and $\overline{T}>T^*$.

Now we refine the selection of $\delta$ so that the aforementioned properties hold uniformly along the entire orbit. Pick $0<\delta<\min\{\delta_\Sigma,\delta_T, \delta_\Delta \}$ sufficiently small to also ensure that $\beta(\delta,0)+\alpha_1(\delta) + \alpha_2(\delta) < \min\{ \delta_\Sigma, \lambda \delta_T, \lambda \delta_\Delta \}$; here, by Proposition~\ref{prop:equiv-norm}, $\lambda \in (0,1)$ is a constant such that \eqref{eq:prop-bound} holds for all $x \in \mathcal{S}$. Then, for $\mathrm{dist}(x(0),\mathcal{O})<\delta$, $\|u\|_\infty<\delta$, and $\|\bar{v}\|_\infty<\delta$, 
\begin{align}
\mathrm{dist}(x(t),\mathcal{O}) & \leq \beta(\mathrm{dist}(x(0),\mathcal{O}),t)+\alpha_1(\|u\|_\infty) + \alpha_2(\|\bar{v}\|_\infty) \nonumber \\
& \leq \beta(\delta,0)+\alpha_1(\delta) + \alpha_2(\delta) \nonumber \\
&< \min\{ \delta_\Sigma, \lambda \delta_T, \lambda \delta_\Delta \} \nonumber
\end{align}
for all $t \geq 0$. Now, by continuity of the distance function, $\lim_{t \nearrow t_k}\mathrm{dist}(x(t),\mathcal{O})=\mathrm{dist}(x_k,\mathcal{O})$, and this choice of $\delta>0$ ensures that $\mathrm{dist}(x_k,\mathcal{O}) < \lambda \delta_\Delta$. Hence, by Proposition~\ref{prop:equiv-norm}, we get $\|x_k-x^*\| < \delta_\Delta$ for all $k\in\mathbb{Z}_+$ and, since this $\delta$ also guarantees $\| \bar{v} \|_\infty < \delta_\Delta$, the bound~\eqref{eq:dist-delta-bound} holds for all $k\in\mathbb{Z}_+$.Similarly, this choice of $\delta$ ensures $\|x_k-x^*\| < \delta_T$ for all $k\in\mathbb{Z}_+$, ensuring that $\underline{T} < T_{\rm I}(x_k,u_k,v_k) < \overline{T}$  uniformly for all $k\in\mathbb{Z}_+$. Finally, this $\delta$ guarantees that $\mathrm{dist}(x_0,\mathcal{O})<\delta_\Sigma$, allowing the use of \eqref{eq:orbit-LISS-ineq} with the same $\beta$, $\alpha_1$, $\alpha_2$ as above.
Putting all these together, for all $k\in\mathbb{Z}_+$ we can write
\begin{align}
\| &  x_k -x^*\| \nonumber \\
& \leq \lambda^{-1} \mathrm{dist}(x_k,\mathcal{O}) \nonumber \\ 
& \leq \lambda^{-1}(\beta(\mathrm{dist}(\Delta(x_0,v_0),\mathcal{O}),t_k-t_0)+\alpha_1(\|u\|_\infty) \nonumber\\
& ~~~ + \alpha_2(\|\bar{v}\|_\infty)) \nonumber \\
& \leq \lambda^{-1}(\beta(L_\Delta \big(\|x_0-x^*\| + \|\bar{v}\|_\infty \big),t_k-t_0)+\alpha_1(\|u\|_\infty) \nonumber \\
& ~~~ + \alpha_2(\|\bar{v}\|_\infty))  \nonumber\\
& \leq \lambda^{-1}(\beta(2L_\Delta \|x_0-x^*\|,t_k-t_0) +\alpha_1(\|u\|_\infty) \nonumber \\
& ~~~ + \alpha_2(\|\bar{v}\|_\infty)) \nonumber \\
& \leq \lambda^{-1}(\beta(2L_\Delta\|x_0-x^*\|,k \underline{T})+\alpha_1(\|u\|_\infty) + \alpha_2(\|\bar{v}\|_\infty)  ) ~, \nonumber
\end{align}
where the first inequality follows from Proposition~\ref{prop:equiv-norm}; the second from \eqref{eq:orbit-LISS-ineq} with the solution starting from $\Delta(x_0,v_0)$; the third from \eqref{eq:dist-delta-bound}; the fourth follows by \cite[Lemma~14]{sarkans2016input} on $\beta(L_\Delta \big(\|x_0-x^*\| + \|\bar{v}\|_\infty \big),t_k-t_0)$ followed by absorbing the $\|\bar{v}\|_\infty$ term in $\alpha_2$; and the last inequality follows from the fact that $\beta$ is monotonically decreasing in time and $t_{k+1}-t_k=T_{\rm I}(x_k,u_k,v_k) \geq \underline{T}$ for all $k\in\mathbb{Z}_+$.
Noting that $(1/\lambda) \beta  \in \mathcal{KL}$, $(1/\lambda) \alpha_1 \in \mathcal{K}$ and $(1/\lambda)\alpha_2 \in \mathcal{K}$, and comparing the last inequality with \eqref{eq:poinc-LISS}, we get that the solution of \eqref{eq:poinc-dyn} satisfies Definition~\ref{def:poinc-LISS}, thus completing the proof.


\noindent (ii)$\implies$(i)\\
Assume that $x^*$ is a LISS fixed point of \eqref{eq:poinc-dyn}, and let $\{x_k\}_{k \in \mathbb{Z}_+}$ with $x_k \in \mathcal{S}$ for all $k \in \mathbb{Z}_+$ be a solution of \eqref{eq:poinc-dyn} that satisfies Definition~\ref{def:poinc-LISS} for some $\delta_1>0$. Then, for $x_0 \in B_{\delta_1}(x^*)\cap\mathcal{S}$, $\|u\|_\infty<\delta_1$, and $\|\bar{v}\|_\infty<\delta_1$, there exist suitable functions $\alpha_1, \alpha_2 \in \mathcal{K}$, $\beta \in \mathcal{KL}$ such that \eqref{eq:poinc-LISS} is satisfied. 
This implies $\|x_k-x^*\|\leq \beta(\delta_1,0)+\alpha_1(\delta_1) +\alpha_2(\delta_1)$ for all $k\in\mathbb{Z}_+$, and thus $\delta_1$ can be chosen (shrinking it if necessary) so that Lemma~\ref{lem:solution-compare-S} is satisfied for $\varphi(t,\Delta(x_k,v_k),u_k)$ \emph{for all} integers $k \geq 0$. Then,  there exist $\overline{T}_1 > T^*$ (obtained as in Remark~\ref{rem:unperturbed}) so that $T_\mathrm{I}(x_k,u_k,v_k) < \overline{T}_1$ for all $k\in\mathbb{Z}_+$. 

The setting above provides a uniform (over $k$) upper bound $\overline{T}_1$ to the impact times $T_\mathrm{I}(x_k,u_k,v_k)$ defining the intervals $[t_k, t_{k+1})$, where $t_{k+1} = t_k+T_\mathrm{I}(x_k,u_k,v_k)$ for $k \in \mathbb{Z}_+$. However, to establish a relation between the discrete-time solution of \eqref{eq:poinc-dyn} and the continuous-time solution \eqref{eq:sys-imp-eff}, we also need to address the interval $[0,t_0)$. By Lemma~\ref{lem:solution-compare-S+}, there exists $\delta>0$ such that, if $x(0)\in\mathcal{S}^+$ satisfies $\mathrm{dist}(x(0),\mathcal{O})<\delta$ and $u\in\mathcal{U}$ satisfies $\|u\|_\infty<\delta$, the solution crosses $\mathcal{S}$ in finite time $t_0 := \hat{T}_\mathrm{I}(x(0), u)$, and there exists a bound $\overline{T}_2>T^*$ so that $t_0 < \overline{T}_2$. We need to make sure that $\delta$ can be selected in a way that $x_0$ satisfies $\|x_0 - x^*\| < \delta_1$ so that Lemma~\ref{lem:solution-compare-S} continues to hold. As before, let $x(t)=\psi(t,x(0),u,\bar{v})$; see Remark~\ref{rem:hyb-cont-sol-rel} for the form of $x(t)$ over the intervals $[t_k, t_{k+1})$. From Lemma~\ref{lem:solution-compare-S+}\emph{(iii)} we have
\begin{equation}\label{eq:dist-x-O-nbhd}
\sup_{0\leq t < t_0}\mathrm{dist}(x(t),\mathcal{O})\leq c_1\mathrm{dist}(x(0),\mathcal{O})+c_1\|u\|_\infty \enspace,
\end{equation}
for some $c_1>0$, which since $x_0 := \lim_{t \nearrow t_0}x(t)$ and the function $\mathrm{dist}(x, \mathcal{O})$ is continuous in $x$, implies that 
\begin{equation}\label{eq:dist-x0}
\mathrm{dist}(x_0,\mathcal{O})\leq c_1\mathrm{dist}(x(0),\mathcal{O})+c_1\|u\|_\infty \enspace.
\end{equation}
Since $x_0 \in \mathcal{S}$, Proposition~\ref{prop:equiv-norm} implies $\lambda \|x_0-x^*\| \leq \mathrm{dist}(x_0,\mathcal{O})$ for $\lambda \in (0,1)$, and by \eqref{eq:dist-x0} we have $\lambda \|x_0-x^*\| \leq  c_1\mathrm{dist}(x(0),\mathcal{O})+c_1\|u\|_\infty < 2 c_1 \delta$. Hence, by the inequality above, choosing $\delta < \min \{ \delta_1, \lambda \delta_1/(2c_1) \}$ in Lemma~\ref{lem:solution-compare-S+} ensures that Lemma~\ref{lem:solution-compare-S} continues to hold. Furthermore, in what follows, we define $\overline{T} := \max \{\overline{T}_1,~\overline{T}_2\}$. 

The analysis above shows that, under the assumption of $x^*$ being  a LISS fixed point of \eqref{eq:poinc-dyn}, there exist a $\delta > 0$ such that, for $x(0) \in \mathcal{S}^+$ with $\mathrm{dist}(x(0), \mathcal{O})<\delta$ and for $u \in \mathcal{U}$ with $\| u \|_\infty < \delta$, Lemma~\ref{lem:solution-compare-S+} implies $t_0 < \overline{T}$ where $t_0 := \hat{T}_\mathrm{I}(x(0), u)$, and that the solution satisfies the bound \eqref{eq:dist-x-O-nbhd}. If, in addition, $\bar{v} \in \mathcal{V}$ with $\|\bar{v}\|_\infty<\delta$, Lemma~\ref{lem:solution-compare-S} implies $t_{k+1}-t_k < \overline{T}$ with $t_{k+1} = t_k+T_\mathrm{I}(x_k,u_k,v_k)$ for all $k\in\mathbb{Z}_+$. In this case, the solution satisfies the bound
\begin{align}
 \sup_{t_k \leq t < t_{k+1}} \mathrm{dist} (x(t),\mathcal{O}) \leq &~ c_2 \|x_k-x^*\| + c_2 \|u\|_\infty \nonumber \\
 & + c_2\|\bar{v}\|_\infty , \label{eq:compare-1}
\end{align}
for some $c_2>0$ and for all $k\in\mathbb{Z}_+$. Substituting \eqref{eq:poinc-LISS} in \eqref{eq:compare-1}, and (for compactness of notation) defining $\alpha(u,\bar{v}):=\hat{\alpha}_1(\|u\|_\infty)+\hat{\alpha}_2(\|\bar{v}\|_\infty)$ with $\hat{\alpha}_1(\|u\|_\infty):=c_2\alpha_1(\|u\|_\infty)+c_2\|u\|_\infty$ and $\hat{\alpha}_2(\|\bar{v}\|_\infty):=c_2\alpha_2(\|\bar{v}\|_\infty)+c_2\|\bar{v}\|_\infty$, results in    
\begin{align}
\mathrm{dist}(x(t),\mathcal{O}) &  \leq c_2\beta(\|x_0-x^*\|,k) + \alpha(u,\bar{v}) \nonumber \\
& \leq c_2\beta(\mathrm{dist}(x_0,\mathcal{O})/\lambda,k) + \alpha(u,\bar{v}) \label{eq:compare-2}
\end{align}
for all $t_k \leq t <t_{k+1}$ and $k\in\mathbb{Z}_+$. Note that \eqref{eq:compare-2} was obtained by using Proposition~\ref{prop:equiv-norm} for $\lambda \in (0,1)$. 

With this information, we now construct suitable class-$\mathcal{K}$ and class-$\mathcal{KL}$ functions to prove that the orbit $\mathcal{O}$ is LISS in the system \eqref{eq:sys-imp-eff}. We distinguish the following cases.\\  

\noindent \emph{Case (a):} $t \in [t_k, t_{k+1})$ for any $k\geq1$ .\\
By Remark~\ref{rem:hyb-cont-sol-rel} we have that $t_{k+1} \leq (k+2) \overline{T}$. 
Pick $c_3>0$ such that for all $k \geq 1$, $c_3(k+2)\overline{T} \leq k$. This can be  ensured by selecting $c_3\leq 1/(3\overline{T})$. Using this in \eqref{eq:compare-2} implies
\begin{align}
\mathrm{dist}(x(t),\mathcal{O}) &  \leq c_2\beta(\mathrm{dist}(x_0,\mathcal{O})/\lambda,c_3(k + 2)\overline{T} ) + \alpha(u,\bar{v}) \nonumber \\
& \leq c_2\beta(\mathrm{dist}(x_0,\mathcal{O})/\lambda,c_3t) + \alpha(u,\bar{v}) \enspace, \label{eq:compare-3}
\end{align}
which is obtained by using the fact $t < t_{k+1} \leq (k+2)\overline{T}$. The estimate \eqref{eq:compare-3} holds for all $k \in \mathbb{Z}_+$ with $k \geq 1$, and thus it holds for all $t\geq t_1$.\\

\noindent \emph{Case (b):} $t \in [t_k, t_{k+1})$ for $k=0$ . \\
Choose $\gamma:=\ln(2)/(2\overline{T})$ so that $c_2 \beta(\mathrm{dist}(x_0,\mathcal{O})/\lambda,0) \leq 2 c_2 \beta(\mathrm{dist}(x_0,\mathcal{O})/\lambda,0) \mathrm{e}^{-\gamma t} $ over the interval $[t_0,t_0+\overline{T}) \subset [0, 2\overline{T}]$. Then, since $t_1 < t_0 + \overline{T}$, the estimate \eqref{eq:compare-2} implies that, for all $t \in [t_0,t_1)$,
\begin{align}
\hspace{-2mm} \mathrm{dist}(x(t),\mathcal{O}) & \leq c_2 \beta(\mathrm{dist}(x_0,\mathcal{O})/\lambda,0) + \alpha(u,\bar{v}) \nonumber \\
& \leq 2 c_2 \beta(\mathrm{dist}(x_0,\mathcal{O})/\lambda,0) \mathrm{e}^{-\gamma t} + \alpha(u,\bar{v})\enspace. \label{eq:compare-4}
\end{align}

Now we can combine the estimates \eqref{eq:compare-3} and \eqref{eq:compare-4} to obtain a bound for all $t \geq t_0$ of the distance from $\mathcal{O}$ of a solution starting from $x_0$ at time $t_0$. Note that the function
\begin{align} 
\hat{\beta}(\mathrm{dist}(x_0,\mathcal{O}),t):= c_2 \max \{ & 2 \beta(\mathrm{dist}(x_0,\mathcal{O})/\lambda,0) \mathrm{e}^{-\gamma t},\nonumber \\
& \beta(\mathrm{dist}(x_0,\mathcal{O})/\lambda,c_3 t) \} \nonumber \enspace,
\end{align}
is continuous, monotonically increasing in $\mathrm{dist}(x_0,\mathcal{O})$, and monotonically decreasing in $t$ because the individual functions in the $\max$ have the same properties; hence, $\hat{\beta}\in\mathcal{KL}$. Upper bounding \eqref{eq:compare-3} and \eqref{eq:compare-4} with $\hat{\beta}$ and remembering  that $\alpha(u,\bar{v}):=\hat{\alpha}_1(\|u\|_\infty) + \hat{\alpha}_2(\|\bar{v}\|_\infty)$ results in
\begin{align}\label{eq:LISS-x0}
\mathrm{dist}(x(t),\mathcal{O}) \leq & \hat{\beta}(\mathrm{dist}(x_0,\mathcal{O}),t) + \hat{\alpha}_1(\|u\|_\infty) + \hat{\alpha}_2(\|\bar{v}\|_\infty) \enspace,
\end{align}
which holds for all $t\geq t_0$. 

To complete the proof of this case, we need an estimate in which the class-$\mathcal{KL}$ function in the RHS of \eqref{eq:LISS-x0} depends on $\mathrm{dist}(x(0),\mathcal{O})$ and not $\mathrm{dist}(x_0,\mathcal{O})$; recall that $x(0)$ is the initial state of the solution of \eqref{eq:sys-imp-eff}, i.e., $\psi(t,x(0),u,\bar{v})$, while $x_0$ is the first intersection of $\psi$ with $\mathcal{S}$. To remedy this, use \eqref{eq:dist-x0} noting that $\hat{\beta}(\mathrm{dist}(x_0,\mathcal{O}),t)$ in \eqref{eq:LISS-x0} is a class-$\mathcal{K}$ function for any fixed $t$, followed by \cite[Lemma~14]{sarkans2016input} to get
%
%
\begin{align}
\hat{\beta}(\mathrm{dist}(x_0,\mathcal{O}),t) &\leq \hat{\beta}(c_1\mathrm{dist}(x(0),\mathcal{O})+c_1\|u\|_\infty,t) \nonumber \\
& \leq \hat{\beta}(2c_1\mathrm{dist}(x(0),\mathcal{O}),t)+\hat{\beta}(2c_1\|u\|_\infty,t)\nonumber \\
& \leq \hat{\beta}(2c_1\mathrm{dist}(x(0),\mathcal{O}),t)+\hat{\beta}(2c_1\|u\|_\infty,0)\nonumber .
\end{align}
Use this inequality in \eqref{eq:LISS-x0} and with an abuse of notation absorb the second term of the above inequality in $\hat{\alpha}_1(\|u\|_\infty)$ to obtain
\begin{align}
\mathrm{dist}(x(t),\mathcal{O})  \leq & \hat{\beta}(2c_1\mathrm{dist}(x(0),\mathcal{O}),t)  \nonumber \\
& + \hat{\alpha}_1(\|u\|_\infty)+ \hat{\alpha}_2(\|\bar{v}\|_\infty) \enspace . \label{eq:LISS-t0-onwards}
\end{align}
However, \eqref{eq:LISS-t0-onwards} merely holds for $t\geq t_0$ and not for all $t\geq 0$. To address this issue we consider the following case.\\

\noindent \emph{Case (c):} $t \in [0, t_0)$ . \\
We use the bound \eqref{eq:dist-x-O-nbhd}, which is a consequence of Lemma~\ref{lem:solution-compare-S+}. Employing a trick similar to the one used for constructing the class $\mathcal{KL}$ function in \emph{Case (b)}, let\footnote{In fact, for this case, $\gamma=\ln(2)/\overline{T}$ would suffice, but we use the same $\gamma$ as in \emph{Case (b)} to avoid introducing additional constants.}
$\gamma:=\ln(2)/(2 \overline{T})$. Then, $c_1 \mathrm{dist}(x(0),\mathcal{O}) \leq 2 c_1 \mathrm{dist}(x(0),\mathcal{O}) \mathrm{e}^{-\gamma t}$ over the interval $[0, \overline{T}]$. Then, since $t_0 \leq \overline{T}$, \eqref{eq:dist-x-O-nbhd} gives for $t \in [0, t_0)$,
\begin{equation}\label{eq:LISS-0-t0}
\mathrm{dist}(x(t),\mathcal{O})\leq 2c_1 \mathrm{dist}(x(0),\mathcal{O}) \mathrm{e}^{-\gamma t}+c_1\|u\|_\infty \enspace.
\end{equation}

We now combine the bound \eqref{eq:LISS-t0-onwards} for $t \geq t_0$ with the bound \eqref{eq:LISS-0-t0} for $t \in [0, t_0)$ to construct class-$\mathcal{KL}$ and class-$\mathcal{K}$ functions that satisfy Definition~\ref{def:orbit-LISS}. Indeed, defining
\begin{align*}
\tilde{\beta}(\mathrm{dist}(x(0),\mathcal{O}),t) :=\max\{ & 2c_1\mathrm{dist}(x(0),\mathcal{O}) \mathrm{e}^{-\gamma t}, \\
& \hat{\beta}(2c_1\mathrm{dist}(x(0),\mathcal{O}),t)\} \enspace,
\end{align*}
and $\tilde{\alpha}_1(\|u\|_\infty) :=\hat{\alpha}_1(\|u\|_\infty)+c_1\|u\|_\infty$ and $\tilde{\alpha}_2(\|\bar{v}\|_\infty) :=\hat{\alpha}_2(\|\bar{v}\|_\infty)$ implies that the solution $x(t)$ satisfies Definition~\ref{def:orbit-LISS} for all $t\geq 0$, thereby completing the proof of Theorem~\ref{thm:LISS-equivalence}.  
\end{proof}

Next, we present a proof of Corollary~\ref{cor:ES-equivalence}.
\begin{proof}[Proof of Corollary~\ref{cor:ES-equivalence}]
The proof is identical to that of Theorem~\ref{thm:LISS-equivalence} with $u\equiv 0$, $\bar{v}\equiv 0$. Ony note that in proving (ii)$\implies$(i) we choose $\omega>0$ such that $\rho=\mathrm{e}^{-\omega \overline{T}}$ as in Definition~\ref{def:poinc-ES}.
\end{proof}

Finally, we present the proof of Theorem~\ref{thm:LAS-LISS}.
\begin{proof}[Proof of Theorem~\ref{thm:LAS-LISS}]
The proof of \emph{(i)}$\implies$\emph{(ii)} trivially follows by substituting $u= 0$ and $v=0$ in Definition~\ref{def:poinc-LISS} to recover Definition~\ref{def:poinc-ES} that establishes LAS. For \emph{(ii)}$\implies$\emph{(i)},  we assume that $x^*$ is a LAS fixed point of the 0-input system. By \cite[Theorem~1(1)]{jiang2002converse}, there exists a smooth (discrete) Lyapunov function $V : \mathbb{R}^n \to \mathbb{R}_+$ so that for all $x\in B_\delta(x^*)\cap\mathcal{S}$,        
\begin{equation}\label{eq:V-Lyap}
V(P(x,0,0))-V(x) \leq -\alpha(\|x-x^*\|) \enspace,
\end{equation}
with $\alpha\in\mathcal{K}$. Since $V$ is smooth and $P$ is continuously differentiable, $V \circ P : \mathcal{S} \times \mathcal{U} \times \mathbb{R}^q \to \mathbb{R}_+$ is locally Lipschitz, and $\delta>0$ can be chosen (shrink if necessary) so that the Lipschitz condition holds uniformly for some $L_{PV}>0$ for all $x\in B_\delta(x^*)\cap\mathcal{S}$, $u\in\mathcal{U}$, $\|u\|_\infty<\delta$, and $v\in B_\delta(0)$. Then,
\begin{align}
 V(P( & x,u,v))-V(x) = V(P(x,0,0))-V(x) \nonumber \\
 & + V(P(x,u,v))-V(P(x,0,0)) \nonumber \\
& \leq -\alpha(\|x-x^*\|) + L_{PV} \|u\|_\infty + L_{PV} \|\bar{v}\|_\infty \enspace,
\end{align}
where the inequality follows from \eqref{eq:V-Lyap} and the Lipschitz continuity of $V\circ P$. Hence, $V$ is a LISS Lyapunov function as required by \cite[Definition~3.2]{jiang2001input} and from \cite[Lemma 3.5]{jiang2002converse} it follows that the system is LISS.
\end{proof}

\section{Example: LISS of a Bipedal Walker}
\label{sec:sim}

This section presents an example of applying Theorems~\ref{thm:LISS-equivalence} and~\ref{thm:LAS-LISS} to establish ISS for a periodic walking gait for the biped of Fig.~\ref{fig:model}. The model is underactuated, having five degrees of freedom (DOF) and four actuators; two actuators are placed at the knees  and two at the hip. Ground contact is modeled as a passive pivot. More details about the model along with the mechanical properties used here can be found in \cite[Table~6.3]{westervelt2007feedback}.  

As shown in Fig.~\ref{fig:model}, we choose $q:=(q_1,q_2,q_3,q_4,q_5)$ as coordinates for the configuration space $\mathcal{Q}$. Let $\Gamma$ be the actuator inputs and $F_{\rm e}\in\mathcal{F}:=\{F_{\rm e}:\mathbb{R}_+\to\mathbb{R}^2~|~\|F_{\rm e}\|_\infty<\infty,~F_{\rm e}~{\rm is~continuous} \}$ be an external force acting at the torso as shown in Fig.~\ref{fig:model}. Further, let $x\in\mathcal{TQ}:=\{(q,\dot{q})~|~q\in\mathcal{Q},~\dot{q}\in\mathbb{R}^5\}$ be the state. The swing phase dynamics is 
\begin{equation}\label{eq:robot-state-space-dyn}
\dot{x} = f(x) + g(x)\Gamma + g_{\rm e}(x)F_{\rm e} \enspace.
\end{equation}
This phase terminates when the swing foot makes contact with the ground. This occurs when $x \in \mathcal{S}:=\{(q,\dot{q})\in\mathcal{TQ}~|~p_{\rm v}(q)=0 \}$, where $p_{\rm v}(q)$ is the height of the swing foot from the ground. The ensuing impact map $\Delta$ takes the states $x^-$ just before to the states $x^+$ just after impact, under the influence of an impulsive disturbance $F_\mathrm{I} \in \mathbb{R}^2$ applied at the same point as $F_{\rm e}$. The resulting system takes the form  
\begin{eqnarray}
\Sigma:
\begin{cases}
\begin{aligned}
\dot{x} &= f(x) + g(x)\Gamma + g_{\rm e}F_{\rm e}(t) & \hspace{-1mm}\mathrm{if}~x\notin \mathcal{S}	\\
x^+ &= \Delta(x^-,F_\mathrm{I}) &\hspace{-1mm}\mathrm{if}~x^-\in \mathcal{S}
		\end{aligned}
	\end{cases} \enspace, \label{eq:SIE-robot}
\end{eqnarray}
in which $\Delta$ is smooth. Note that $F_{\rm e}$ and $F_\mathrm{I}$ are viewed as continuous and discrete disturbances, respectively. There is a variety of methods available for designing control laws $\Gamma(x)$ that result in asymptotically stable limit-cycle gaits in the absence of the disturbances; here, we use the method in \cite{veer2015adaptation}.

Let $\bar{F}_\mathrm{I}:=\{ F_{\mathrm{I},k}\}_{k=0}^\infty$ be the sequence of impulsive disturbances and $\{F_{{\rm e},k}\}_{k=0}^\infty$ be the sequence of continuous inputs as in Section~\ref{subsec:poinc}. Then, \eqref{eq:SIE-robot} in closed loop with $\Gamma(x)$, gives rise to the forced Poincar\'e map $P:\mathcal{S}\times\mathcal{F}\times\mathbb{R}^2\to\mathcal{S}$
\begin{equation}\label{eq:poinc-discrete-robot}
x_{k+1} = P(x_k,F_{{\rm e},k}, F_{\mathrm{I},k}) \enspace,
\end{equation}
which captures the dynamics of \eqref{eq:SIE-robot} as it goes through $\mathcal{S}$. A simple calculation shows that all the eigenvalues of the linearization of the Poincar\'e map about a 0-input fixed point $x^*$ are within the unit disk so that $x^*$ is LAS. Then, Theorem~\ref{thm:LAS-LISS} implies that $x^*$ is a LISS fixed point of the forced Poincar\'e map \eqref{eq:poinc-discrete-robot} and Theorem~\ref{thm:LISS-equivalence} ensures that the corresponding periodic orbit $\mathcal{O}$ is LISS in the presence of disturbances.

%

\begin{figure}[t]
\vspace{+0.05in}
\begin{centering}
\includegraphics[width=0.47\columnwidth]{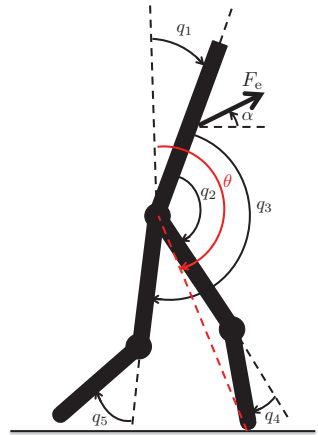} 
\par\end{centering}
\vspace{-0.1in}
\caption{Robot model with a choice of generalized coordinates.}
\vspace{-0.1in}
\label{fig:model} 
\end{figure}

\begin{figure*}[t]
\centering
\subfigure[]
{
\includegraphics[width=0.39\textwidth]{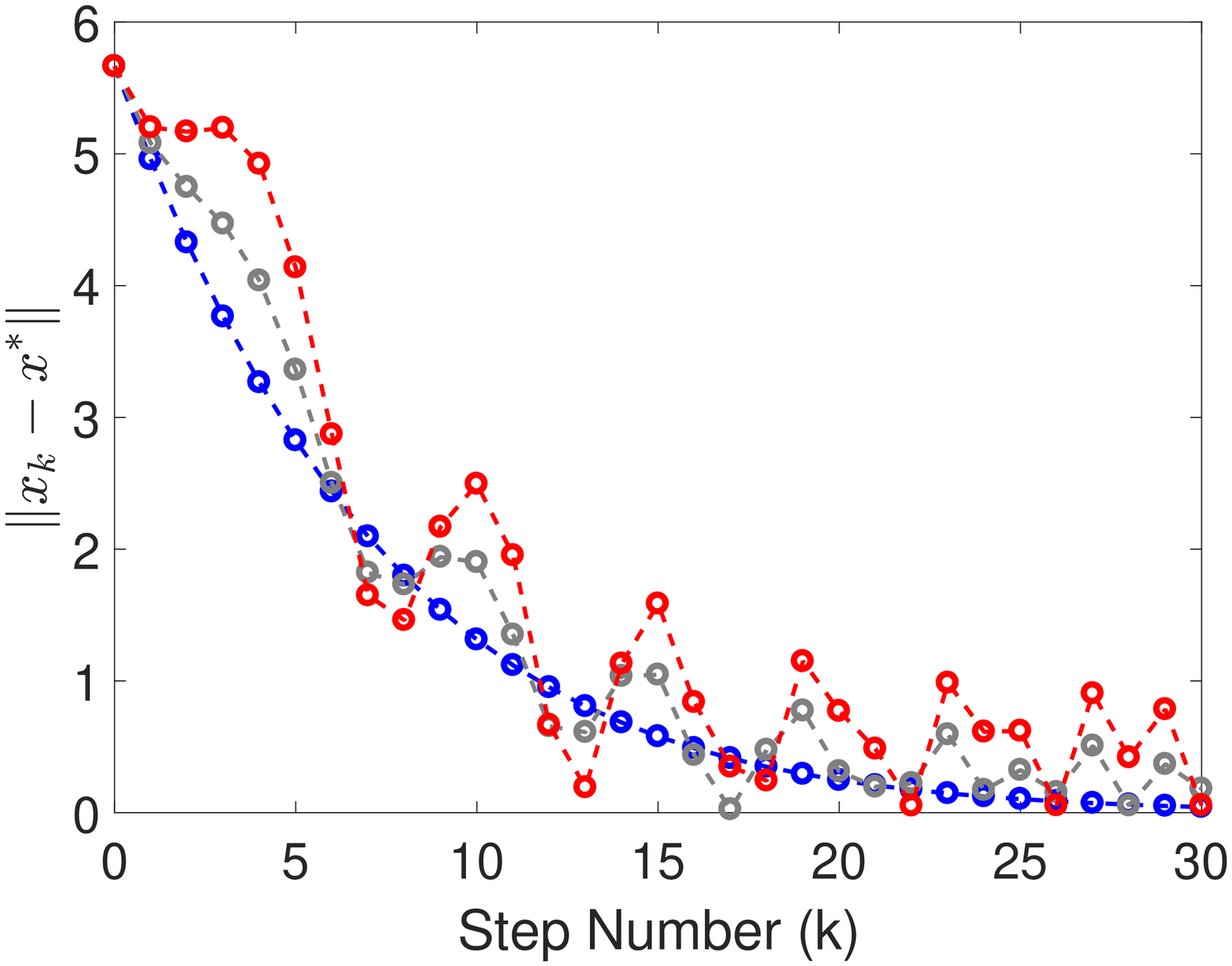}
\label{fig:poincare-evolution}
}
\centering
\hspace{10mm}
\subfigure[]
{
\includegraphics[width=0.39\textwidth]{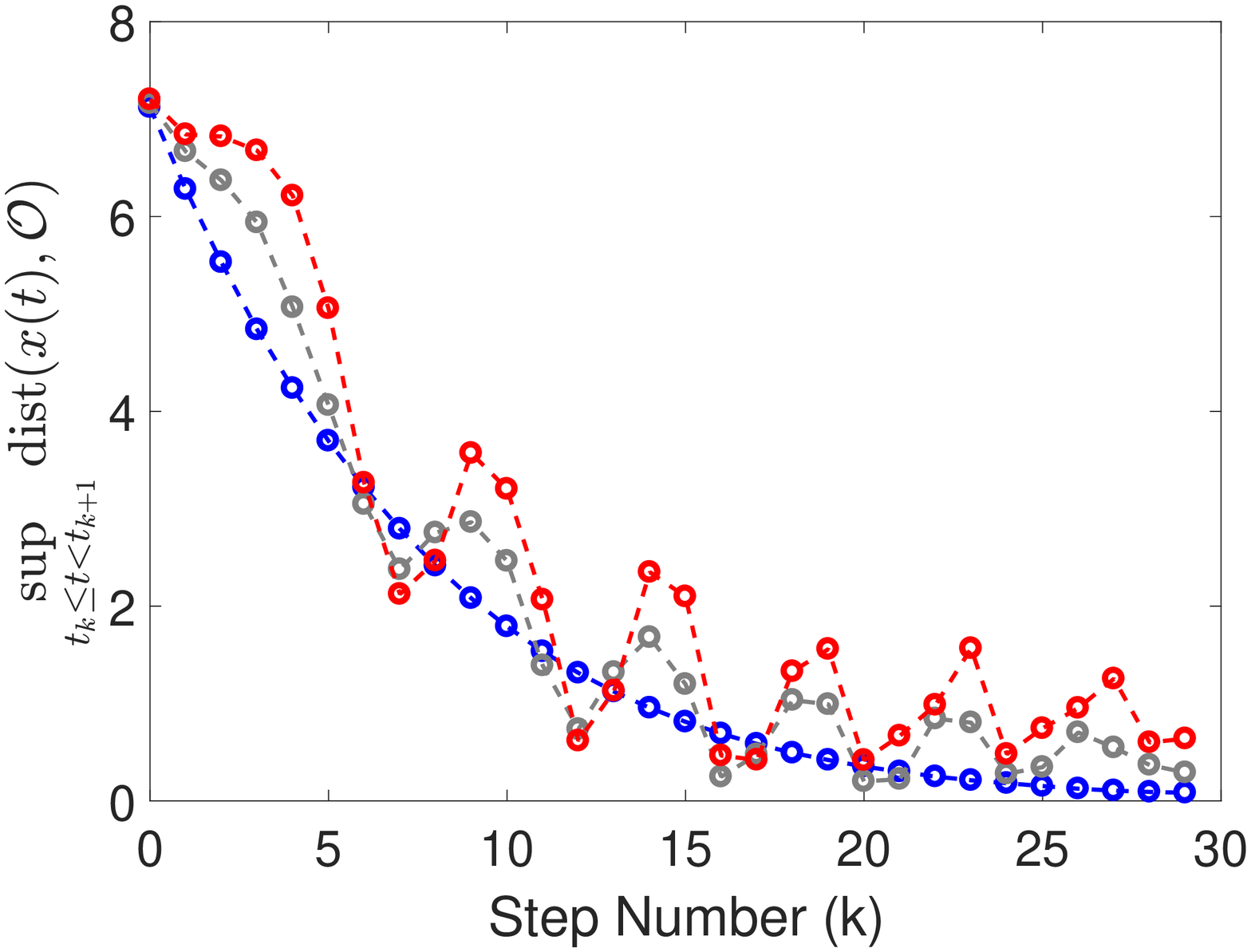}
\label{fig:SIE-evolution}
}
\vskip -10pt
\caption{Response of the biped from an initial condition away from the orbit with $F_{\rm e}=0$ N, $\|\bar{F}_\mathrm{I}\|_\infty=0$ N.s in blue; $F_{\rm e} = [5 \sin(4t)~0]^{\rm T}$ N, $\|\bar{F}_\mathrm{I}\|_\infty=0.1$ N.s in gray; and $F_{\rm e}=[10\sin(4t)~0]^{\rm T}$ N, $\|\bar{F}_\mathrm{I}\|_\infty=0.2$ N.s in red.  \textbf{(a)} Evolution of $\|x_k-x^*\|$ over step number $k$ where $\{x_k\}_{k=0}^\infty$ is the solution of \eqref{eq:poinc-discrete-robot}. \textbf{(b)} Supremum deviation of  $x(t)=\psi(t,x(0),F_{\rm e},\bar{F}_\mathrm{I})$, the solution of \eqref{eq:SIE-robot}, from the orbit $\mathcal{O}$ over each step. \label{fig:simulation}}
\vskip -10pt
\end{figure*}

Figure~\ref{fig:simulation} shows the behavior of the model when $F_\mathrm{e}$ is a horizontal sinusoidal force and $\bar{F}_\mathrm{I}$ consists of horizontal impulses $\bar{F}_{\mathrm{I},k}$ uniformly sampled from the interval $[-\|\bar{F}_\mathrm{I}\|_\infty,\|\bar{F}_\mathrm{I}\|_\infty]$ for a predetermined $\|\bar{F}_\mathrm{I}\|_\infty<\infty$. It can be seen that the solution $\{x_k\}_{k=0}^\infty$ of \eqref{eq:poinc-discrete-robot} in Fig.~\ref{fig:poincare-evolution} as well as the solution $x(t)=\psi(t,x(0),F_{\rm e},\bar{F}_\mathrm{I})$ of \eqref{eq:SIE-robot} in Fig.~\ref{fig:SIE-evolution}, asymptotically converge to 0 in the absence of disturbances (blue); are ultimately bounded when $\|F_{\rm e}\|_\infty = 5$~N, $\|\bar{F}_\mathrm{I}\|_\infty=0.1$~N.s (gray) and $\|F_{\rm e}\|_\infty = 10$~N, $\|\bar{F}_\mathrm{I}\|_\infty=0.2$~N.s (red), with the ultimate bound for the former being smaller than that of the latter, indicating LISS behavior.

\section{Conclusion}
\label{sec:conclusions}

This paper presents a method for analyzing robustness of limit cycles exhibited by systems with impulse effects. It is shown that ISS of the limit cycle is equivalent to that of the forced Poincar\'e map. This result allows us to analyze the robustness of \emph{hybrid} limit cycles by merely analyzing a discrete dynamical system. The proof of this result, provides ISS estimates that could be used to quantify the robustness. Furthermore, exploiting the availability of these estimates, we establish an equivalence between ES of the limit cycle and the 0-input Poincar\'e map. The overarching goal of this work is to develop a framework within which the robustness of periodic orbits can be rigorously analyzed.

\appendices

\section{}\label{app:existence}

This appendix provides proofs to Lemmas~\ref{lem:frechet} and~\ref{lem:TI-cont} and to Proposition~\ref{prop:global-existence}, clarifying properties of the forced solution $\varphi(t,x(0),u)$ of \eqref{eq:cont-dyn} and $\psi(t,x(0),u,\bar{v})$ of \eqref{eq:sys-imp-eff}. 
 
\begin{proof}[Proof of Lemma~\ref{lem:frechet}]
The proof relies on \cite[Theorem~5.2,~p.~377]{lang1993real}. To apply this result, define $F:\mathbb{R}_+\times\mathbb{R}^n\times\mathcal{U} \to \mathbb{R}^n$ as $F(t,x,u):=f(x,G(t,u))$, where $G:\mathbb{R}_+\times\mathcal{U} \to \mathbb{R}^p$ is $G(t,u)=u(t)$. Then,~\cite[Theorem~5.2,~p.~377]{lang1993real}  states that $\varphi$ is continuously differentiable in $x$ and $u$ if $F$ is so. Note that, in view of the procedure in~\cite[p. 369]{lang1993real} for treating time-dependent vector fields, the statement of~\cite[Theorem~5.2,~p.~377]{lang1993real} would require $F$ to be continuously differentiable in $t$. However, the proof of~\cite[Theorem~5.2,~p.~377]{lang1993real} \emph{only} uses continuity of $F$ in $t$. Hence, below we show that $F$ is continuously differentiable in $x$ and $u$, but only continuous in $t$.    

We begin by showing that $G$ is continuous in $t$ and continuously differentiable in $u$. Continuity of $G$ in $t$ is clear. Continuity of $G$ in $u \in \mathcal{U}$ follows from the fact that $\hat{G}: \mathcal{U} \to \mathbb{R}^p$ defined by $\hat{G}(u):=G(t,u)=u(t)$ is bounded and linear for each \emph{fixed} time $t \in \mathbb{R}_+$; by \cite[p.~257,~Theorem~1]{royden2010real} this implies that $\hat{G}$ is continuous in $u$, thus $G$ is also continuous in $u$. Indeed, linearity is immediate by the definition of $\hat{G}$, while boundedness follows from $\|\hat{G}(u)\|=\|u(t)\|\leq \|u\|_\infty$, implying that the operator norm is upper bounded by 1 for any $t\in\mathbb{R}_+$. As a result, $G(t,u)$ is continuous in both arguments. 
Furthermore, $G$ is linear with respect to $u$, and using \cite[p.~339,~Theorem~3.1]{lang1993real}, we have that the Fr\'echet (partial) derivative of $G$ with respect to $u$ is continuous, for it is $G$ itself. Thus, $G(t,u)$ is continuous in $t$ and continuously differentiable in $u$. Using this fact with the assumption that $f$ is continuously differentiable, it follows that $F$ is continuous in $t$ and continuously differentiable in $(x,u)$. 
The result then follows from~\cite[Theorem~5.2,~p.~377]{lang1993real}. 
\end{proof}

\begin{proof}[Proof of Lemma~\ref{lem:TI-cont}]
Before proceeding with the proof, note that even though the domain of $T_{\rm I}$ is restricted to $\mathcal{S} \times \mathcal{U} \times \mathbb{R}^q$, $T_{\rm I}$ is well-defined on $\mathbb{R}^n\times \mathcal{U} \times \mathbb{R}^q$ since $\Delta$ and $\varphi$ are well-defined maps for any $x\in\mathbb{R}^n$. Hence, we will consider this extended domain of $T_{\rm I}$ in the proof, which follows from the implicit mapping theorem \cite[Chapter~XIV,~Theorem~2.1]{lang1993real}. 

Let $\tilde{H}(t,x,u,v):=H\circ\varphi(t,\Delta(x,v),u)$ where $x\in\mathbb{R}^n$, $u\in\mathcal{U}$, and $v\in\mathbb{R}^q$. From Lemma~\ref{lem:frechet} in Appendix~\ref{app:existence}, we have that the solution $\varphi$ is continuously differentiable in all its arguments in the Fr\'echet sense. Using this with assumption~A.\ref{ass:S} we have that $\tilde{H}$ is continuously differentiable. From assumption~A.\ref{ass:orbit-exist} it follows that $\tilde{H}(T^*,x^*,0,0)=0$. Further, from assumption~A.\ref{ass:orbit-transversal} we have $\partial \tilde{H}/\partial t|_{(T^*,x^*,0,0)}\neq 0$. Next, noting that $(\mathbb{R}^n\times\mathbb{R}^q,\|\cdot\|)$ and $(\mathcal{U},\|\cdot\|_\infty)$ are Banach spaces, we can use \cite[Chapter~XIV,~Theorem~2.1]{lang1993real} to establish the existence of a unique map $T_{\rm I}(x,u,v)$ which satisfies $\tilde{H}(T_{\rm I}(x,u,v),x,u,v)=0$ for a sufficiently small $\delta>0$ such that $x\in B_\delta(x^*)$, $u\in\mathcal{U}$ with $\|u\|_\infty<\delta$, and $v\in B_\delta(0)$. Additionally, since $\tilde{H}(t,x,u,v)$ is continuously differentiable with respect to its arguments, so is $T_{\rm I}(x,u,v)$. As this holds for any $x\in B_\delta(x^*)$, it also holds for any $x\in B_\delta(x^*)\cap\mathcal{S}$.
\end{proof}

\begin{proof}[Proof of Proposition~\ref{prop:global-existence}]
We restrict attention to initial conditions $x(0)$ that result in solutions that hit $\mathcal{S}$;  otherwise the solution is continuous and it can be extended indefinitely, trivially excluding the occurrence of Zeno and beating phenomena.

We first show parts \emph{(i)} and \emph{(ii)} simultaneously. 
Let $x(t) = \psi(t, x(0), u, \bar{v})$ be a solution of \eqref{eq:sys-imp-eff} that is defined over some interval $[0, t_\mathrm{f})$ and satisfies Definition~\ref{def:orbit-LISS}. As in Remark~\ref{rem:hyb-cont-sol-rel}, let $t_k$ and $t_{k+1}$ denote two subsequent impact times so  that $t_{k+1} - t_k = T_{\rm I}(x_k, u_k, v_k)$, where $x_k := \lim_{t \nearrow t_k} x(t) \in \mathcal{S}$, $u_k(t) =u(t)$ for $t \in [t_k, t_{k+1})$, and $v_k$ is the $k$-th element of the sequence $\bar{v}$. We will show that there exists $\underline{T} > 0$ such that $t_{k+1} - t_k > \underline{T}$ for all $k \in \mathbb{Z}_+$ with $[t_k, t_{k+1}) \subset [0, t_\mathrm{f})$. 
By Definition~\ref{def:orbit-LISS}, there exists $\delta>0$ so that $\mathrm{dist}(x(0),\mathcal{O})<\delta$, $u\in\mathcal{U}$ with $\|u\|_\infty<\delta$, and $\bar{v}\in\mathcal{V}$ with $\|\bar{v}\|_\infty<\delta$ imply that $x(t)$ satisfies \eqref{eq:orbit-LISS-ineq}. By properties of class-$\mathcal{KL}$ and class-$\mathcal{K}$ functions (see Section~\ref{subsec:notation}) we have
\begin{align}\label{eq:trapping-compact-set}
\mathrm{dist}(x(t),\mathcal{O}) \leq \beta(\delta,0)+\alpha_1(\delta) + \alpha_2(\delta) \enspace,
\end{align}
for all $t\in [0,t_\mathrm{f})$. Furthermore, by continuity of the distance function, we have $\lim_{t \nearrow t_k} \mathrm{dist}(x(t),\mathcal{O}) = \mathrm{dist}(x_k,\mathcal{O})$, and \eqref{eq:trapping-compact-set} in view of Proposition~\ref{prop:equiv-norm} implies that, for some $\lambda \in (0,1)$,
\begin{equation}\label{eq:trapping-compact-set-k}
\|x_k-x^*\| \leq \frac{1}{\lambda} \left( \beta(\delta,0)+\alpha_1(\delta) + \alpha_2(\delta) \right) \enspace,
\end{equation}
for all $k \in \mathbb{Z}_+$ with $[t_k, t_{k+1}) \subset [0, t_\mathrm{f})$.
The result now follows from continuity of the time-to-impact function by Lemma~\ref{lem:TI-cont}. Indeed, as in the proof of part \emph{(ii)} of Lemma~\ref{rem:unperturbed}, continuity of $T_{\rm I}$ implies that, for some $\underline{T}>0$, there exists a $\delta_T>0$ such that $x\in B_{\delta_T}(x^*)\cap\mathcal{S}$, $u\in\mathcal{U}$ with $\|u\|_\infty<\delta_T$, and $v\in B_{\delta_T}(0)$ imply $\underline{T} < T_{\rm I}(x,u,v)$. Choosing $\delta$ in \eqref{eq:trapping-compact-set-k} so that $\frac{1}{\lambda} \left( \beta(\delta,0)+\alpha_1(\delta) + \alpha_2(\delta) \right) < \delta_T$ ensures that $x_k\in B_{\delta_T}(x^*)\cap\mathcal{S}$ for all $k \in \mathbb{Z}_+$ with $[t_k, t_{k+1}) \subset [0, t_\mathrm{f})$. Such choice of $\delta$ is always possible since the upper bound in \eqref{eq:trapping-compact-set-k} is a class-$\mathcal{K}$ function of $\delta$. Shrinking $\delta$ further (if necessary) to ensure that $\delta < \delta_T$ guarantees that for $x_k\in B_{\delta}(x^*)\cap\mathcal{S}$, $u\in\mathcal{U}$ with $\|u\|_\infty<\delta$, and $v\in B_{\delta}(0)$ we have that $T_{\rm I}(x_k,u_k,v_k)> \underline{T}$ for all discrete events $k$ of the solution. As a result, $t_{k+1} - t_k > \underline{T}$ for all $k \in \mathbb{Z}_+$ with $[t_k, t_{k+1}) \subset [0, t_\mathrm{f})$. This ensures that any two discrete events are punctuated by a time gap of $\underline{T}$, thereby precluding solutions that are purely discrete or eventually discrete, or exhibit Zeno behavior, completing the proof of \emph{(i)} and \emph{(ii)}.

To prove part \emph{(iii)}, from \eqref{eq:trapping-compact-set} it is clear that $x(t)$ is trapped in a compact set. Neglecting -- as was mentioned at the beginning of the proof -- the trivial case where the continuous solution never approaches $\mathcal{S}$, using arguments similar to the proof of \cite[Theorem~3.3]{khalil2002nonlinear} the solution can be extended until it reaches $\mathcal{S}$. At this point, a well-defined discrete jump occurs that ensures the post-discrete-event state is still trapped within the same compact set and lies outside $\mathcal{S}$ because of part \emph{(i)}; hence, the solution must flow again according to the continuous dynamics until it reaches $\mathcal{S}$. Additionally \emph{(ii)} ensures the absence of Zeno behavior. Hence, we can propagate this argument forward for all time to obtain \emph{(iii)}.
\end{proof}

\section{}\label{app:y-tau}
\begin{proof}[Proof of Lemma~\ref{lem:y-tau}]

By assumption A.\ref{ass:f-c2}, $f(x,0)$ is twice continuously differentiable; thus, its backward flow $\varphi^-(s\tau,x^*,0)=y(\tau)$ is three-times continuously differentiable. Surjectivity of $y(\tau)$ is obvious from the definition of $y(\tau)$ in \eqref{eq:y-tau-def}. Injectivity follows from a contradiction argument. Assume $y(\tau)$ is not injective, then there exist $\tau_1 < \tau_2$ in $[0,T]$ with $y(\tau_1) = y(\tau_2)$. Let $f^-(x,0):=-f(x,0)$ be the vector field for the backwards flow. If $f^-(y(\tau_1),0)\neq f^-(y(\tau_2),0)$ then $f$ would not be well defined. If $f^-(y(\tau_1),0)= f^-(y(\tau_2),0)$, we have:\\
\noindent \emph{Case (i):} $0<\tau_1<\tau_2$ \\
Since $y(\tau_1)=y(\tau_2)$, we return to the same state after an interval $\tau_2-\tau_1>0$. Hence, $E:=\{y(\tau)~|~\tau_1\leq\tau\leq\tau_2\}\subset \overline{\mathcal{O}}$ is a periodic orbit of the backwards-flow continuous system and we have the following sub-cases:
\begin{enumerate}
\item[\emph{(a)}] $y(\tau)\in E$ for all $\tau\in[\tau_1,T]$:
The orbit $E$ would also exist in the forward flow. As the periodic orbit is an invariant set under the 0-input continuous dynamics \eqref{eq:cont-dyn}, the forward flow starting from $y(T)=\Delta(x^*,0)$ will be trapped in $E$ and never reach $\mathcal{S}$, contradicting assumption~A.\ref{ass:orbit-exist}, according to which the solution must reach $\mathcal{S}$ in finite time $T^*$.
\item[\emph{(b)}] There exists $\hat{\tau}\in[\tau_1,T]$ such that $y(\hat{\tau})\not\in E$:
This contradicts uniqueness of the backwards solution, as starting from $y(\tau_1)\in E$, one solution flows to $y(\hat{\tau})\not\in E$ while the other gets trapped in $E$.
\end{enumerate}
\noindent \emph{Case (ii):} $0=\tau_1<\tau_2$ \\
Note that $H(y(\tau))$ and $L_{f^-} H(y(\tau,0))$ are continuous in $\tau$. Additionally, from assumption~A.\ref{ass:x*}-A.\ref{ass:orbit-transversal}, $H(y(0))=H(x^*)=0$ and $L_{f^-} H(y(0),0)>0$ (flipped sign from assumption A.\ref{ass:orbit-transversal} due to the flow being backwards in time), thus there exists a $\delta>0$ such that $L_{f^-} H(y(\tau),0)>0$ for all $\tau\in[0,\delta)$. Hence for the interval $(0,\delta)$ we have $H(y(\tau))>0$, i.e., $\{y(\tau) ~|~ \tau\in(0,\delta)\}\subset \mathcal{S}^+$. Again using continuity of $H(y(\tau))$ and $L_{f^-} H(y(\tau),0)$ at $\tau_2$ we have that $H(y(\tau))$ is strictly increasing in the interval $(\tau_2-\delta,\tau_2+\delta)$ (shrink $\delta>0$ if necessary to ensure $\delta<\tau_2-\delta$) but $H(y(\tau_2))=0$, hence, $H(y(\tau))<0$ for all $\tau\in (\tau_2-\delta,\tau_2)$, i.e., $\{y(\tau)~|~\tau\in(\tau_2-\delta,\tau_2) \}\subset\mathcal{S}^-$. Thus, for some $\bar{\tau}$ such that $\delta<\bar{\tau}<\tau_2-\delta$, the solution must cross over from $\mathcal{S}^+$ to $\mathcal{S}^-$ at a point other than $x^*$, resulting in a contradiction to assumption A.\ref{ass:x*}.
\end{proof}

\section{}\label{app:S-projection}

First note that $\mathcal{S}$ is a twice continuously differentiable embedded submanifold in $\mathbb{R}^n$ defined by $H(x)=0$. Clearly, $H(x^*)=0$, and, without loss of generality, assume that for the $n$-th coordinate $\frac{\partial H}{\partial x_n}\big|_{x^*} \neq 0$, which follows from assumption~A.\ref{ass:S}. Hence, using the implicit function theorem we can write $x_n = h(\bar{x})$ where $\bar{x}:=(x_1,...,x_{n-1})$ for $\bar{x} \in B_\delta(\bar{x}^*)$ where $x^*=(\bar{x}^*,h(\bar{x}^*))$. As a result, the local coordinates of states $x\in\mathcal{S}$ in a neighborhood of $x^*$ are $x=(\bar{x},h(\bar{x}))$. The Taylor expansion of $h(\bar{x})$ at $\bar{x}^*$ gives
\begin{equation}\label{eq:S-taylor}
h(\bar{x}) = h(\bar{x}^*) +  A (\bar{x}-\bar{x}^*) + O(\|\bar{x}-\bar{x}^*\|^2) \enspace,
\end{equation}
where $A= \frac{\partial h}{\partial \bar{x}}\big|_{\bar{x}^*}$.
Let $z\in T_{x^*} \mathcal{S}$ be the projection of $x$ on $T_{x^*}\mathcal{S}$ along $(x_1,x_2,...,x_{n-1})$, then its coordinates are $z=(\bar{x},h(\bar{x}^*)+A(\bar{x}-\bar{x}^*))$. Hence, for $x\in B_\delta(x^*)\cap\mathcal{S}$ and $z\in T_{x^*}\mathcal{S}$, there exists a $c>0$ such that,
\begin{align}
\|x-z\| & = \|(\bar{x},h(\bar{x})) - (\bar{x},h(\bar{x}^*)+A(\bar{x}-\bar{x}^*))\| \nonumber \\
 & = |O(\|\bar{x}-\bar{x}^*\|^2)| \leq c \|\bar{x}-\bar{x}^*\|^2 \leq c \|x-x^*\|^2 \enspace. \nonumber
\end{align}

\section{}\label{app:T-I}
\begin{proof}[Proof of Lemma~\ref{lem:solution-compare-S}]

We begin with part \emph{(i)}. By Remark~\ref{rem:unperturbed}, $\varphi(t,\Delta(x^*,0),0)$ exists and is unique over $[0, \overline{T}]$ with $\overline{T}>T^*$. Lemma~\ref{lem:cont-solutions} establishes the existence of $\delta_1>0$ for which, when $x \in \mathcal{S}$ and $v \in \mathbb{R}^q$ are such that $\|\Delta(x,v)-\Delta(x^*,0)\| < \delta_1$, and when $\|u\|_\infty<\delta_1$, the perturbed solution $\varphi(t,\Delta(x,v),u)$ exists and is unique over the same interval $[0,\overline{T}]$.      
By the continuity of $\Delta$ following from assumption A.\ref{ass:Delta}, there  exists a $\delta_2>0$ for which $\|x-x^*\| < \delta_2$ and $\| v \| < \delta_2$ guarantee $\|\Delta(x,v)-\Delta(x^*,0)\| < \delta_1$. Hence, choosing $\delta=\min\{\delta_1,\delta_2\}$ we have  that, for $x \in B_\delta(x^*) \cap \mathcal{S}$, $u \in \mathcal{U}$ with $\|u\|_\infty<\delta$ and $\bar{v} \in \mathcal{V}$ with $\| \bar{v} \|_\infty < \delta$, the perturbed solution $\varphi(t,\Delta(x,v),u)$ of \eqref{eq:cont-dyn} exists and is unique over $[0, \overline{T}]$, thus proving part \emph{(i)}.       

To prove part \emph{(ii)}, let $\epsilon_T:=\overline{T}-T^*>0$. By continuity of $T_{\rm I}$ there exists a $\delta_T>0$ such that for $x\in B_{\delta_T}(x^*) \cap \mathcal{S}$, $u\in\mathcal{U}$ with $\|u\|_\infty<\delta_T$, and $v\in B_{\delta_T}(0)$, we have that $| T_{\rm I}(x,u,v) - T_{\rm I}(x^*,0,0) | < \epsilon_T$, which implies that $\underline{T} < T_{\rm I}(x,u,v) < \overline{T}$ where $\underline{T}:=T^*-\epsilon_T=2T^*-\overline{T}>0$ (shrink $\overline{T}$ if necessary to ensure that $\overline{T} < 2T^*$). To show transversality, define $\ell(x,u,v):= \frac{\partial H}{\partial x}\big|_{\varphi(T_\mathrm{I}(x,u,v),\Delta(x,v),u)} f(\varphi(T_\mathrm{I}(x,u,v),\Delta(x,v),u), u)$ so that $\ell^*:=\ell(x^*,0,0)<0$ by assumption A.\ref{ass:orbit-transversal}, and let $\epsilon_\ell \in(0, -\ell^*)$. By continuity of $\ell$, there is a $\delta_\ell>0$ such that for $x\in B_{\delta_\ell}(x^*) \cap \mathcal{S}$, $u\in\mathcal{U}$ with $\|u\|_\infty<\delta_\ell$, and $v\in B_{\delta_\ell}(0)$, we have that $| \ell(x,u,v) - \ell(x^*,0,0) | < \epsilon_\ell$, implying $\ell(x,u,v) < \epsilon_\ell + \ell^* <0$. Choosing $\delta = \min\{\delta_1,\delta_2, \delta_T, \delta_\ell\}$ with $\delta_1,\delta_2>0$ as in the proof of part \emph{(i)} completes the proof of part \emph{(ii)}. 

Finally, for part \emph{(iii)}, we begin by setting up the region within which we work. Let $w(t)=\varphi(t,\Delta(x^*,0),0)$ be the unperturbed (zero-input) solution of \eqref{eq:cont-dyn}, which by Remark~\ref{rem:unperturbed} is well-defined over the interval $[0, \overline{T}]$ with $\overline{T}>T^*$. Let $\delta>0$ be as in the proof of part \emph{(ii)} so that, for $x \in B_\delta(x^*) \cap \mathcal{S}$, $u \in \mathcal{U}$ with $\|u\|_\infty<\delta$ and $v \in B_{\delta}(0)$, the perturbed solution $\varphi(t,\Delta(x,v),u)$ exists and is unique over $[0, \overline{T}]$ and $T_{\rm I}(x,u,v) < \overline{T}$. Furthermore, for this choice of $\delta$, Lemma~\ref{lem:cont-solutions} guarantees the existence of $L>0$ such that
\begin{align}\nonumber
\|\varphi(t,\Delta(x,v),u) - w(t)\| & \leq \mathrm{e}^{L\overline{T}} \|\Delta(x,v) - \Delta(x^*,0)\| \\ &+( \mathrm{e}^{L\overline{T}}-1) \|u\|_\infty , \nonumber
\end{align}
for all $t\in[0,\overline{T}]$. Since $\Delta$ is continuously differentiable from assumption A.\ref{ass:delta-C-1}, it is locally Lipschitz. Hence, shrinking $\delta$ (if necessary) ensures that the Lipschitz condition can be satisfied uniformly over $B_\delta(x^*) \cap \mathcal{S}$ and $\| \bar{v} \|_\infty < \delta$ for some constant $L_\Delta > 0$, thereby resulting in the following estimate
\begin{align}
 \sup_{0\leq t \leq \overline{T}}& \|\varphi(t,\Delta(x,v),u) - w(t)\| \nonumber \\
 & \!\!\! \leq \! \mathrm{e}^{L\overline{T}} L_\Delta (\| x- x^*\| + \| \bar{v} \|_\infty)+( \mathrm{e}^{L\overline{T}}-1) \|u\|_\infty . \label{eq:lipschitz-solution-bound}
\end{align}
In addition, since by Lemma~\ref{lem:TI-cont} the map $T_{\rm I}$ is continuously differentiable and thus locally Lipschitz. Shrinking $\delta$ further (if necessary) ensures that there exists a $L_T >0$ such that    
\begin{equation}\label{eq:lipschitz-time-bound}
|T_{\rm I}(x,u,v)- T_{\rm I}(x^*,0,0)| \leq L_T (\| x- x^*\| +  \|u\|_\infty + \| \bar{v} \|_\infty)
\end{equation}
for all $x \in B_\delta(x^*) \cap \mathcal{S}$, $u \in \mathcal{U}$ with $\| u \|_\infty < \delta$ and $\bar{v} \in \mathcal{V}$ with $\| \bar{v} \|_\infty < \delta$. In what follows, we work in such region so that the perturbed solution $\varphi(t,\Delta(x,v),u)$ and the corresponding time to impact $T_{\rm I}(x,u,v)$ satisfy \eqref{eq:lipschitz-solution-bound} and \eqref{eq:lipschitz-time-bound}, respectively.   

Now, let us consider the distance of the perturbed solution $\varphi(t,\Delta(x,v),u)$ from $\mathcal{O}$. For any $t \in [0, \overline{T}]$ we have 
\begin{align}
 \mathrm{dist}(\varphi &(t,\Delta(x,v),u),\mathcal{O}) \nonumber \\
 & :=\inf_{y\in\mathcal{O}}\|\varphi(t,\Delta(x,v),u)-w(t)+w(t)-y\| \nonumber \\
 & \leq \|\varphi(t,\Delta(x,v),u)-w(t)\| + \mathrm{dist}(w(t),\mathcal{O}) \label{eq:dist-orb-sol} \enspace,
\end{align}
where \eqref{eq:dist-orb-sol} is obtained by the triangle inequality. Regarding the term $\mathrm{dist}(w(t),\mathcal{O})$ in \eqref{eq:dist-orb-sol}, note that when $t\in[0,T^*)$ we have $\mathrm{dist}(w(t),\mathcal{O})=0$, while for $t\in[T^*,\overline{T}]$ we have
\begin{align}
{\rm dist}(w(t), \mathcal{O}) &:= \inf_{y \in \mathcal{O}} ||w(t) - x^* + x^* - y || \nonumber \\
&\leq ||w(t) - x^*|| + \inf_{y \in \mathcal{O}} ||x^* - y || \nonumber \\
&= ||w(t) - x^*|| \label{eq:dist-w-O}\enspace,  
\end{align} 
since $\inf_{y \in \mathcal{O}} ||x^* - y || = 0$. We distinguish the following cases.

\noindent \emph{Case (a):} Assume that $T_\mathrm{I}(x,v,u) < T^*$. Then, for all $t\in[0,T_\mathrm{I}(x,v,u)]$, we have $\mathrm{dist}(w(t),\mathcal{O})=0$ and application of $\sup_{0\leq t < T_{\rm I}}$ on \eqref{eq:dist-orb-sol} results in the bound
\begin{align}
\sup_{0\leq t < T_{\rm I}} & \mathrm{dist} (\varphi (t,\Delta(x,v),u),\mathcal{O}) \nonumber \\ &\leq \sup_{0 \leq t < T_{\rm I}} \|\varphi(t,\Delta(x,v),u) -w(t)\| \enspace. \label{eq:compare-00}
\end{align}
Since $T^* < \overline{T}$, for this case we have $[0,T_{\rm I}(x,u,v))\subset [0,\overline{T}]$, and the bound \eqref{eq:lipschitz-solution-bound} implies
\begin{align}
\sup_{0\leq t < T_{\rm I}} & \mathrm{dist}  (\varphi (t,\Delta(x,v),u),\mathcal{O}) \nonumber \\ 
& \leq  \mathrm{e}^{L\overline{T}} L_\Delta (\| x \!-\! x^*\| \!+\! \| \bar{v} \|_\infty) +( \mathrm{e}^{L\overline{T}} \!\! -\!\! 1) \|u\|_\infty ~\!. \nonumber
\end{align}
 
\noindent \emph{Case (b):} Assume that $T_\mathrm{I}(x,v,u) \geq T^*$. Then,  \eqref{eq:dist-w-O} implies
\begin{align}
\sup_{0\leq t < T_{\rm I}}  \mathrm{dist}(w(t),\mathcal{O}) & \leq \sup_{T^*\leq t < T_{\rm I}} \|w(t)-x^*\|  \enspace. \label{eq:orbit-extra-flow}
\end{align}
Applying $\sup_{0\leq t < T_{\rm I}}$ on \eqref{eq:dist-orb-sol} followed by  \eqref{eq:orbit-extra-flow} results in
\begin{align}
\sup_{0\leq t < T_{\rm I}} & \mathrm{dist} (\varphi (t,\Delta(x,v),u),\mathcal{O}) \nonumber \\
&  \leq \sup_{0 \leq t < T_{\rm I}} \|\varphi(t,\Delta(x,v),u) -w(t)\| \nonumber \\
& + \sup_{T^* \leq t < T_{\rm I}} \|w(t)-x^*\| \enspace. \label{eq:compare-0}
\end{align}
Regarding the first term in the RHS of \eqref{eq:compare-0}, the upper bound \eqref{eq:lipschitz-solution-bound} can be used since $[0, T_\mathrm{I}) \subset [0, \overline{T}]$. Next, we look at the second term. Let $K:=\max_{T^*\leq t \leq \overline{T}} \|f(w(t),0)\|$. As mentioned in Remark~\ref{rem:unperturbed}, we have $L_f H(w(t),0)<0$ for all $T^* \leq t\leq \overline{T}$ which implies that $f(w(t),0)\neq 0$ for all $t\in[T^*,\overline{T}]$, hence $K>0$. Use $K$ in the following,
\begin{align}
\sup_{T^* \leq t <T_{\rm I}}  & \|w(t)-x^*\| = \sup_{T^* \leq t < T_{\rm I}} \Big\|\int_{T^*}^{t} f(w(s),0)~ds \Big\| \nonumber \\
& \leq K  |T_{\rm I}(x,u,v)- T_{\rm I}(x^*,0,0)|  \nonumber \\
& \leq K L_T (\|x-x^*\| + \|u\|_\infty + \|\bar{v}\|_\infty) \enspace, \label{eq:TI-compare}
\end{align}
where \eqref{eq:lipschitz-time-bound} was used. The desired estimate in Case \emph{(b)} is then found by combining \eqref{eq:lipschitz-solution-bound} and \eqref{eq:TI-compare} according to \eqref{eq:compare-0}.

Finally, Case \emph{(a)} and \emph{(b)} can be combined for an appropriate $c>0$ to obtain \eqref{eq:compare-S}, thereby completing the proof.
\end{proof}

\begin{proof}[Proof of Lemma~\ref{lem:solution-compare-S+}] We first prove Lemma~\ref{lem:solution-compare-S+}\emph{(i), (ii)} simultaneously. We begin by using Lemma~\ref{lem:cont-solutions} to show that, for any point of $\mathcal{O}$, there exists an open ball of initial conditions and inputs for which a unique (forced) solution exists over a time interval that is sufficiently long to cross $\mathcal{S}$ transversally in finite time. Then, compactness of $\overline{\mathcal{O}}$ is used to extract a finite open subcover of $\overline{\mathcal{O}}$, based on which an upper bound on the time to cross $\mathcal{S}$ is established uniformly in a tube around $\mathcal{O}$. 

Let $w(t)=\varphi(t,\Delta(x^*,0),0)$ be the unperturbed (zero-input) solution of \eqref{eq:cont-dyn} starting from $\Delta(x^*,0)$. By Remark~\ref{rem:unperturbed}, $w(t)$ is well-defined over an  interval $[0, \overline{T}]$ with $\overline{T}>T^*$. Then define $w_{\rm e}:=w(\overline{T})$ which is in the open set $\mathcal{S}^-$. Hence, there exists a $\delta_{\rm e}>0$ such that $B_{\delta_{\rm e}}(w_{\rm e})\subset\mathcal{S}^-$. Given an arbitrary time $\xi\in[0,T^*]$ and the corresponding point  $w(\xi) \in \overline{\mathcal{O}}$, Lemma~\ref{lem:cont-solutions} ensures that there exists a $\delta_\xi>0$ such that, for any $x\in B_{\delta_\xi}(w(\xi))$ and $u\in\mathcal{U}$ with $\|u\|_\infty<\delta_\xi$, the (perturbed) solution $\varphi(t,x,u)$ exists and is unique over the interval $t\in[0,\overline{T}-\xi]$. Furthermore, for some $L>0$, \eqref{eq:lipschitz-solution} of Lemma~\ref{lem:cont-solutions} gives $\|\varphi(t,x,u)-w(\xi+t)\| \leq (2 e^{L(\overline{T}-\xi)} - 1)\delta_\xi$ for all $t\in[0,\overline{T}-\xi]$, which implies that $\delta_\xi$ can be chosen sufficiently small to ensure that $\sup_{0\leq t\leq \overline{T}-\xi}\|\varphi(t,x,u)-w(t+\xi)\|<\delta_{\rm e}$. 
As a result, $\varphi(\overline{T}-\xi,x,u) \in B_{\delta_{\rm e}}(w_{\rm e})$, from which it follows that the perturbed solution has crossed $\mathcal{S}$ during the interval $[0,\overline{T}_x]$ with $\overline{T}_x := \overline{T}-\xi$; clearly, $\overline{T}_x$ depends on the choice of $w(\xi)$. Finally, the fact that the solution crosses $\mathcal{S}$ transversally can be shown similarly to the proof of part \emph{(ii)} of Lemma~\ref{lem:solution-compare-S}.

Now, to extend these results over the entire orbit, construct an open cover of $\overline{\mathcal{O}}$ by using the open balls $B_{\delta_\xi}(w(\xi))$ for each $\xi\in[0,T^*]$. As $\overline{\mathcal{O}}$ is compact, there exists $\xi_1,\xi_2,...,\xi_N\in[0,T^*]$, and $\delta_i := \delta_{\xi_i}$, $w_i:=w(\xi_i)$ such that $\cup_{i=1}^N B_{\delta_i}(w_i) \supset \overline{\mathcal{O}}$ is a finite sub-cover. Choose $0<\delta<\min_{1 \leq i\leq N} \delta_i$ such that $\{x\in\mathbb{R}^n~|~\mathrm{dist}(x,\mathcal{O})<\delta\}\subset \cup_{i=1}^N B_{\delta_i}(w_i)$. Choosing such $\delta>0$ is always possible. Indeed, since $\cup_{i=1}^N B_{\delta_i}(w_i)$ is bounded, its boundary $\mathcal{B}:=\partial (\cup_{i=1}^N B_{\delta_i}(w_i))$ is compact. Let $0<\delta<\min_{x\in \mathcal{B}} \mathrm{dist}(x,\mathcal{O})/2$ (shrink if necessary to ensure $\delta<\min_{1 \leq i\leq N} \delta_i$) where the $\min$ is well-defined because $\mathrm{dist}(x,\mathcal{O})$ is continuous. Note that $\{x\in\mathbb{R}^n~|~\mathrm{dist}(x,\mathcal{O})<\delta\}$ is connected because $\mathcal{O}$ is connected; hence, this choice of $\delta$ ensures that $\{x\in\mathbb{R}^n~|~\mathrm{dist}(x,\mathcal{O})<\delta\}\subset \cup_{i=1}^N B_{\delta_i}(w_i)$. With this choice of $\delta$, any solution $\varphi(t,x,u)$ such that $x\in\mathcal{S}^+$ with $\mathrm{dist}(x,\mathcal{O})<\delta$ and $u\in\mathcal{U}$ with $\|u\|_\infty<\delta$ exists and is unique for $t\in[0,\overline{T}-\xi_i]$, where $\xi_i$ corresponds to the ball $B_{\delta_i}(w_i)$ that includes $x$ (note that $x$ may belong to multiple such balls; picking any one would suffice).  Moreover, the solution will cross $\mathcal{S}$ transversally and in time within $\max_{1\leq i \leq N}(\overline{T}-\xi_i) \leq \overline{T}$. This completes the proof of Lemma~\ref{lem:solution-compare-S+}\emph{(i), (ii)}.

The proof of Lemma~\ref{lem:solution-compare-S+}\emph{(iii)} closely follows the proof of Lemma~\ref{lem:solution-compare-S}\emph{(iii)}, with the difference that, instead of comparing the solution $\varphi(t,x,u)$ with $\varphi(t,\Delta(x^*,0),0)$ as was the case in Lemma~\ref{lem:solution-compare-S}\emph{(iii)}, now we compare $\varphi(t,x,u)$ with $\varphi(t,w(\xi_{\min}),0)=w(t+\xi_{\min})$, where $\xi_{\min}\in[0,T^*]$ is such that\footnote{Such $\xi_{\min}$ exists by Lemma~\ref{lem:inf-min} and depends on $x$ since $\xi_{\min} \in \arg \min_{\xi\in[0,T^]}\|x-w(\xi)\|$; we do not display this dependence here.} $\mathrm{dist}(x,\mathcal{O})=\|x-w(\xi_{\min})\|$. To do this comparison, we use $T^*-\xi_{\min}$ and $\overline{T}-\xi_{\min}$ instead of $T^*$ and $\overline{T}$, and we use Claim 1 and Claim 2 below to generalize estimates related to $\hat{T}_{\rm I}$ and Lemma~\ref{lem:cont-solutions}, respectively, from an open-ball around a point in $\mathbb{R}^n$ to an entire open neighborhood of $\mathcal{O}$ in $\mathcal{S}^+$. In the end, we replace $\|x-w(\xi_{\min})\|$ with $\mathrm{dist}(x,\mathcal{O})$ 

\noindent \emph{Claim 1:} Let $\hat{T}_{\rm I}$ be as in \eqref{eq:time-to-imp-S+}. There exist $L_{\hat{T}}>0$ and $\delta>0$ such that for $x\in\mathcal{S}^+$ with $\mathrm{dist}(x,\mathcal{O})<\delta$ and $u\in\mathcal{U}$ with $\|u\|_\infty<\delta$, the following is satisfied
\begin{equation}\nonumber
|\hat{T}_{\rm I}(x,u)-\hat{T}_{\rm I}(w(\xi_{\min}),0)| \leq L_{\hat{T}} (\|x-w(\xi_{\min})\| + \|u\|_\infty).
\end{equation}

\noindent \emph{Claim 2:} There exist $L>0$ and $\delta>0$ such that for any $x\in\mathcal{S}^+$ with $\mathrm{dist}(x,\mathcal{O})<\delta$ and $u\in\mathcal{U}$ with $\|u\|_\infty<\delta$, the following is satisfied for $0\leq t \leq \overline{T}-\xi_{\min}$,
\begin{equation}\nonumber
\|\varphi(t,x,u)-w(t+\xi_{\min})\| \!\leq\! \mathrm{e}^{L\overline{T}} \|x-w(\xi_{\min})\| +( \mathrm{e}^{L\overline{T}}-1) \|u\|_\infty.
\end{equation}
\end{proof}
Finally, we present proofs of Claim 1 and 2.
\begin{proof}[Proof of Claim 1]
For any $\xi\in[0,T^*]$, there exists a $\delta_\xi>0$ such that $\hat{T}_{\rm I}$ is continuously differentiable, and thus locally Lipschitz for $x \in B_{\delta_\xi}(w(\xi))$ and $u \in \mathcal{U}$ with $\|u\|_\infty<\delta_{\xi}$. Shrink (if necessary) $\delta_\xi$ to ensure that the Lipschitz condition is satisfied uniformly for some  $L_{\hat{T},\xi}>0$. Construct an open-cover for $\overline{\mathcal{O}}$ and extract a finite subcover $\cup_{i=1}^N B_{\delta_i/2}(w_i)$ as above, but now use $\delta_i/2$. Let $x\in B_{\delta_i/2}(w_i)$ for some $i$, then $w(\xi_{\min})\in B_{\delta_i}(w_i)$ for the same $i$. This holds because $\|x-w(\xi_{\min})\|\leq \|x-w_i\|<\delta_i/2$, hence using the triangle inequality $\|w(\xi_{\min})-w_i\|\leq \|x-w(\xi_{\min})\| + \|w_i-x\|<\delta_i$. Therefore, for any $x\in \cup_{i=1}^N B_{\delta_i/2}(w_i)$, we can write the Lipschitz condition with the  constant $L_{\hat{T},\xi_i}$ replaced by $L_{\hat{T}}=\max_{1\leq i \leq N} L_{\hat{T},\xi_i}$. Choosing $0<\delta< \min_{1 \leq i\leq N} \delta_i/2$ such that $\{x\in\mathcal{S}^+~|~\mathrm{dist}(x,\mathcal{O})<\delta\} \subset \cup_{i=1}^N B_{\delta_i/2}(w_i)$ and $u\in\mathcal{U}$ with $\|u\|_\infty < \delta$ completes the proof. 
\end{proof}

\begin{proof}[Proof of Claim 2]
The proof is similar to \cite[Theorem~3.5]{khalil2002nonlinear}; hence, we only provide a sketch. Note that $L$ is the Lipschitz constant of $f$ on the compact set $\{(x,\hat{u})\in\mathbb{R}^n\times\mathbb{R}^p ~|~ \|x-w(t)\|\leq \epsilon,\forall t\in[0,T^*],~\mathrm{and}~\|\hat{u}\|\leq\epsilon\}$ where $\hat{u}$ denotes the \emph{value} of $u\in\mathcal{U}$ at a certain time instant and should not be confused with the function itself. The proof follows by the Gronwall-Bellman inequality \cite[Lemma~A.1]{khalil2002nonlinear}, which gives
\begin{align}
& \|\varphi (t,x,u)-w(t+\xi_{\min})\|  \\
& \leq\mathrm{e}^{L(\overline{T}-\xi_{\min})} \|x-w(\xi_{\min})\|  +( \mathrm{e}^{L(\overline{T}-\xi_{\min})}-1) \|u\|_\infty, \nonumber \\
& \leq \mathrm{e}^{L\overline{T}} \|x-w(\xi_{\min})\|  +( \mathrm{e}^{L\overline{T}}-1) \|u\|_\infty. \nonumber
\end{align}
for $t\in[0,\overline{T}-\xi_{\min}]$. Choosing $\delta=\epsilon/(2\mathrm{e}^{L\overline{T}})$ such that $\mathrm{dist}(x,\mathcal{O})=\|x-w(\xi_{\min})\|<\delta$ and $\|u\|_\infty<\delta$ ensures the solution remains trapped in the compact $\epsilon$-neighborhood of $\overline{\mathcal{O}}$ for $t\in[0,T^*-\xi_{\min}]$ and hence the inequality holds.
\end{proof}

%
%

\ifCLASSOPTIONcaptionsoff
  \newpage
\fi



%

%
\bibliographystyle{IEEEtran}
\bibliography{poincare_LISS}
%

\vspace{-1.6cm}

\begin{IEEEbiography}[{\includegraphics[width=1in,height=1.25in,clip,keepaspectratio]{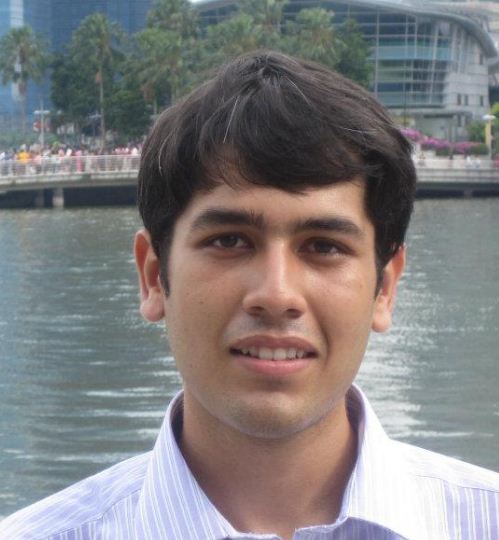}}]{Sushant Veer}
is a PhD Candidate in the Department of Mechanical Engineering at the University of Delaware. He received his B.Tech in Mechanical Engineering from the Indian Institute of Technology Madras (IIT-M) in 2013. His research interests lie in the control of complex dynamical systems with application to dynamically stable robots. 
\end{IEEEbiography}

\vspace{-1.5cm}

\begin{IEEEbiography}[{\includegraphics[width=1in,height=1.25in,clip,keepaspectratio]{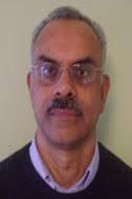}}]{Rakesh}
received an MA in Mathematics from Delhi University in 1978 and a PhD in Mathematics from Cornell University in 1986. He is currently a Professor in the Department of Mathematical Sciences at University of Delaware. His main area of interest is Inverse Problems for Hyperbolic PDEs.
\end{IEEEbiography}

\vspace{-1.5cm}

\begin{IEEEbiography}[{\includegraphics[width=1in,height=1.25in,clip,keepaspectratio]{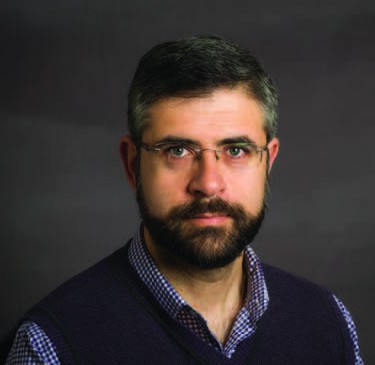}}]{Ioannis Poulakakis}
received his Ph.D. in Electrical Engineering (Systems) from the University of Michigan, MI, in 2009. From 2009 to 2010 he was a post-doctoral researcher with the Department of Mechanical and Aerospace Engineering at Princeton University, NJ. Since September 2010 he has been with the Department of Mechanical Engineering at the University of Delaware, where he is currently an Associate Professor. His research interests lie in the area of dynamics and control with applications to robotic systems, particularly dynamically dexterous legged robots. Dr. Poulakakis received the National Science Foundation CAREER Award in 2014. 
\end{IEEEbiography}





\end{document}